\providecommand{\U}[1]{\protect\rule{.1in}{.1in}}
\newtheorem{theorem}{Theorem}[section]
\newtheorem{conjecture}[theorem]{Conjecture}
\newtheorem{corollary}[theorem]{Corollary}
\newtheorem{definition}[theorem]{Definition}
\newtheorem{example}[theorem]{Example}
\newtheorem{lemma}[theorem]{Lemma}
\newtheorem{problem}[theorem]{Problem}
\newtheorem{proposition}[theorem]{Proposition}
\newtheorem{remark}[theorem]{Remark}
\newenvironment{proof}[1][Proof]{\noindent\textbf{#1.} }{\ \rule{0.5em}{0.5em}}
\begin{document}

\title{Monotonic Properties of Collections of Maximum Independent Sets of a Graph}
\author{Adi Jarden\\Department of Mathematics and Computer Science\\Ariel University, Israel\\jardena@ariel.ac.il
\and Vadim E. Levit\\Department of Mathematics and Computer Science\\Ariel University, Israel\\levitv@ariel.ac.il
\and Eugen Mandrescu\\Department of Computer Science\\Holon Institute of Technology, Israel\\eugen\_m@hit.ac.il}
\date{}
\maketitle

\begin{abstract}
Let $G$ be a simple graph with vertex set $V\left(  G\right)  $. A set
$S\subseteq V\left(  G\right)  $ is \textit{independent} if no two vertices
from $S$ are adjacent. The graph $G$ is known to be a K\"{o}nig-Egerv\'{a}ry
if $\alpha\left(  G\right)  +\mu\left(  G\right)  =$ $\left\vert V\left(
G\right)  \right\vert $, where $\alpha\left(  G\right)  $ denotes the size of
a maximum independent set and $\mu\left(  G\right)  $ is the cardinality of a
maximum matching.

Let $\Omega(G)$ denote the family of all maximum independent sets, and $f$ be
the function from subcollections $\Gamma$ of $\Omega(G)$ to $\mathbb{N}$ such
that $f(\Gamma)=|\bigcup\Gamma|+|\bigcap\Gamma|$. Our main finding claims that
$f$ is $\vartriangleleft$-increasing, where the preorder $\Gamma^{\prime
}\vartriangleleft\Gamma$ means that $\bigcup\Gamma^{\prime}\subseteq
\bigcup\Gamma$ and $\bigcap\Gamma\subseteq\bigcap\Gamma^{\prime}$. Let us say
that a family $\emptyset\neq\Gamma\subseteq\Omega\left(  G\right)  $ is a
K\"{o}nig-Egerv\'{a}ry collection if $\left\vert \bigcup\Gamma\right\vert
+\left\vert \bigcap\Gamma\right\vert =2\alpha(G)$. We conclude with the
observation that for every graph $G$ each subcollection of a
K\"{o}nig-Egerv\'{a}ry collection is K\"{o}nig-Egerv\'{a}ry as well.

\textbf{Keywords:} maximum independent set, critical set, ker, core, corona,
diadem, maximum matching, K\"{o}nig-Egerv\'{a}ry graph.

\end{abstract}

\section{Introduction}

Throughout this paper $G$ is a finite simple graph with vertex set $V(G)$ and
edge set $E(G)$. If $X\subseteq V\left(  G\right)  $, then $G[X]$ is the
subgraph of $G$ induced by $X$. By $G-W$ we mean either the subgraph
$G[V\left(  G\right)  -W]$, if $W\subseteq V(G)$, or the subgraph obtained by
deleting the edge set $W$, for $W\subseteq E(G)$. In either case, we use
$G-w$, whenever $W$ $=\{w\}$. If $A,B$ $\subseteq V\left(  G\right)  $, then
$(A,B)$ stands for the set $\{ab:a\in A,b\in B,ab\in E\left(  G\right)  \}$.

The \textit{neighborhood} $N(v)$ of a vertex $v\in V\left(  G\right)  $ is the
set $\{w:w\in V\left(  G\right)  $ \textit{and} $vw\in E\left(  G\right)  \}$;
in order to avoid ambiguity, we use also $N_{G}(v)$ instead of $N(v)$. The
\textit{neighborhood} $N(A)$ of $A\subseteq V\left(  G\right)  $ is $\{v\in
V\left(  G\right)  :N(v)\cap A\neq\emptyset\}$, and $N[A]=N(A)\cup A$.

A set $S\subseteq V(G)$ is \textit{independent} if no two vertices from $S$
are adjacent, and by $\mathrm{Ind}(G)$ we mean the family of all the
independent sets of $G$. An independent set of maximum size is a
\textit{maximum independent set} of $G$, and $\alpha(G)=\max\{\left\vert
S\right\vert :S\in\mathrm{Ind}(G)\}$.

Let $\Omega(G)$ denote the family of all maximum independent sets,
$\mathrm{core}(G)=%
{\displaystyle\bigcap}
\{S:S\in\Omega(G)\}$ \cite{LevMan2002a}, and $\mathrm{corona}(G)=\cup
\{S:S\in\Omega(G)\}$ \cite{BorosGolLev}.

A \textit{matching} is a set $M$ of pairwise non-incident edges of $G$. If
$A\subseteq V(G)$, then $M\left(  A\right)  $ is the set of all the vertices
matched by $M$ with vertices belonging to $A$. A matching of maximum
cardinality, denoted $\mu(G)$, is a \textit{maximum matching}. For every
matching $M$, we denote the set of all vertices that $M$ saturates by
$V\left(  M\right)  $, and by $M(x)$ we denote the vertex $y$ satisfying
$xy\in M$.

For $X\subseteq V(G)$, the number $\left\vert X\right\vert -\left\vert
N(X)\right\vert $ is the \textit{difference} of $X$, denoted $d(X)$. The
\textit{critical difference} $d(G)$ is $\max\{d(X):X\subseteq V(G)\}$. The
number $\max\{d(I):I\in\mathrm{Ind}(G)\}$ is the \textit{critical independence
difference} of $G$, denoted $id(G)$. Clearly, $d(G)\geq id(G)$. It was shown
in \cite{Zhang1990} that $d(G)$ $=id(G)$ holds for every graph $G$. If $A$ is
an independent set in $G$ with $d\left(  X\right)  =id(G)$, then $A$ is a
\textit{critical independent set} \cite{Zhang1990}.

\begin{theorem}
\label{th3}\cite{ButTruk2007} Each critical independent set can be enlarged to
a maximum independent set.
\end{theorem}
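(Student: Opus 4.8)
The plan is to prove the following quantitative sharpening, from which the statement follows at once: if $A$ is a critical independent set of $G$, then $\alpha(G)=|A|+\alpha(G-N[A])$. Indeed, granting this, pick any $S_{0}\in\Omega(G-N[A])$; since no vertex outside $N[A]$ is adjacent to a vertex of $A$, the set $A\cup S_{0}$ is independent, it is disjoint union so $|A\cup S_{0}|=|A|+\alpha(G-N[A])=\alpha(G)$, and hence $A\cup S_{0}\in\Omega(G)$ is a maximum independent set containing $A$.

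The first step — and I expect it to be the crux — is to produce a matching $M_{1}$ in the bipartite graph $B$ with parts $A$ and $N(A)$ and edge set $(A,N(A))$ that saturates every vertex of $N(A)$. By Hall's theorem it suffices to verify the Hall condition $|N_{B}(Y)|\geq|Y|$ for all $Y\subseteq N(A)$. Suppose not, and fix $Y$ with $|N_{B}(Y)|<|Y|$. Put $X:=A\setminus N_{B}(Y)$. Then $X$ is independent (being a subset of $A$), and no vertex of $X$ has a neighbor in $Y$, so $N_{G}(X)\subseteq N(A)\setminus Y$; consequently
\[
d(X)=|X|-|N_{G}(X)|\geq\big(|A|-|N_{B}(Y)|\big)-\big(|N(A)|-|Y|\big)=d(A)+\big(|Y|-|N_{B}(Y)|\big)>d(A)=id(G),
\]
contradicting the maximality of $id(G)$ over independent sets. (Note this uses only the definition of $id(G)$, so Zhang's identity $d(G)=id(G)$ is not actually needed here, though it is available.) Fix such a matching $M_{1}$.

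The second step bounds $|S\cap N[A]|$ for an arbitrary $S\in\Omega(G)$ using $M_{1}$. Since $A$ is independent, $A\cap N(A)=\emptyset$, so $|S\cap N[A]|=|S\cap A|+|S\cap N(A)|$. For each $v\in S\cap N(A)$, the matched vertex $M_{1}(v)$ lies in $A$ and is adjacent to $v$, hence $M_{1}(v)\notin S$; as $v\mapsto M_{1}(v)$ is injective, $|S\cap N(A)|\leq|A\setminus S|=|A|-|S\cap A|$, whence $|S\cap N[A]|\leq|A|$. Therefore $\alpha(G)=|S|=|S\cap N[A]|+|S\setminus N[A]|\leq|A|+\alpha(G-N[A])$. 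The reverse inequality $\alpha(G)\geq|A|+\alpha(G-N[A])$ is immediate (take $A$ together with a maximum independent set of $G-N[A]$), so equality holds and the proof is complete.

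In short, all the real content sits in the first step: translating a hypothetical failure of Hall's condition on the $A$–$N(A)$ bipartite graph into an independent set of strictly larger difference, which is precisely the place where criticality of $A$ is spent. Once the saturating matching $M_{1}$ is in hand, the remaining argument is routine bookkeeping.
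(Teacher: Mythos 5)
Your proof is correct. Note that the paper itself offers no proof of this statement: it is imported verbatim from \cite{ButTruk2007} as a known result, so there is no internal argument to compare against. Your route is essentially the standard one for this theorem: establish the decomposition $\alpha(G)=|A|+\alpha(G-N[A])$ for a critical independent set $A$ by finding a matching of $N(A)$ into $A$, where Hall's condition is verified by observing that a violating set $Y\subseteq N(A)$ would make $X=A\setminus N_{B}(Y)$ an independent set with $d(X)>d(A)=id(G)$. All the details check: $X$ is independent as a subset of $A$, $N_{G}(X)\subseteq N(A)\setminus Y$ because any $G$-neighbor of $X$ lies in $N(A)$ and cannot lie in $Y$ without forcing its $X$-neighbor into $N_{B}(Y)$, and the counting $|S\cap N[A]|\leq|A|$ via the injective map $v\mapsto M_{1}(v)$ from $S\cap N(A)$ into $A\setminus S$ is sound, as is the reverse inequality obtained by adjoining a maximum independent set of $G-N[A]$ to $A$. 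Your remark that only the definition of $id(G)$ (not Zhang's identity $d(G)=id(G)$) is needed is also accurate. The one caveat worth flagging is cosmetic: the paper defines a critical independent set as an independent $A$ with $d(A)=id(G)$, which is exactly the form of criticality you spend in the Hall step, so your argument is fully compatible with the paper's conventions.
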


\begin{theorem}
\label{th4}\cite{LevMan2012a} For a graph $G$, the following assertions are true:

\emph{(i)} $\mathrm{\ker}(G)\subseteq\mathrm{core}(G)$;

\emph{(ii)} if $A$ and $B$ are critical in $G$, then $A\cup B$ and $A\cap B$
are critical as well;

\emph{(iii)} $G$ has a unique minimal independent critical set, namely,
$\mathrm{\ker}(G)$.
\end{theorem}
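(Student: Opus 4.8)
I would prove the three parts in the order (ii), (iii), (i), since (iii) relies on (ii) and (i) relies on (iii); throughout, recall that $d(Z)\le d(G)$ for every $Z\subseteq V(G)$, that $d(G)=id(G)$ \cite{Zhang1990}, and that $\mathrm{\ker}(G)$ denotes the intersection of all critical independent sets of $G$. Part (ii) rests on the supermodularity of the difference function: for arbitrary $X,Y\subseteq V(G)$ we have $|X\cup Y|+|X\cap Y|=|X|+|Y|$, while $N(X\cup Y)=N(X)\cup N(Y)$ and $N(X\cap Y)\subseteq N(X)\cap N(Y)$ yield $|N(X\cup Y)|+|N(X\cap Y)|\le|N(X)|+|N(Y)|$; combining, $d(X\cup Y)+d(X\cap Y)\ge d(X)+d(Y)$. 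If $A$ and $B$ are critical, i.e.\ $d(A)=d(B)=d(G)$, then $d(A\cup B)+d(A\cap B)\ge 2d(G)$; since neither term may exceed $d(G)$, both equal $d(G)$, so $A\cup B$ and $A\cap B$ are critical.

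For (iii), note that if $A$ and $B$ are critical independent sets then $A\cap B$ is independent (a subset of $A$) and, by (ii) together with $d(G)=id(G)$, has $d(A\cap B)=id(G)$; hence $A\cap B$ is again a critical independent set. The critical independent sets therefore form a finite, nonempty (it contains an independent set realizing $id(G)$) family closed under intersection, so $\mathrm{\ker}(G)$ — being a finite intersection of its members — is itself a critical independent set, and it is contained in every critical independent set, i.e.\ it is the unique minimal one.

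For (i) the heart of the matter is the claim: \emph{if $A$ is a critical independent set and $S\in\Omega(G)$, then $A\cap S$ is a critical independent set.} Granting this and applying it with $A=\mathrm{\ker}(G)$ (critical independent by (iii)), minimality gives $\mathrm{\ker}(G)\subseteq\mathrm{\ker}(G)\cap S\subseteq S$ for every $S\in\Omega(G)$, so $\mathrm{\ker}(G)\subseteq\bigcap\Omega(G)=\mathrm{core}(G)$. To prove the claim, write $A$ as the disjoint union $(A\cap S)\cup(A\setminus S)$ and apply inclusion--exclusion to the neighborhoods to get
\[
d(A)=d(A\cap S)+d(A\setminus S)+\bigl|N(A\cap S)\cap N(A\setminus S)\bigr|.
\]
As $d(A)=d(G)$ is maximal and $d(A\cap S)\le d(G)$, it is enough to show the last two terms sum to at most $0$, equivalently $|A\setminus S|\le\bigl|N(A\setminus S)\setminus N(A\cap S)\bigr|$. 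Put $Y=N(A\setminus S)\cap S$. Every vertex of $A\setminus S$ has all of its neighbors in $S$ lying inside $Y$, so $(S\setminus Y)\cup(A\setminus S)$ is independent, and the maximality of $S$ forces $|S|-|Y|+|A\setminus S|\le|S|$, i.e.\ $|A\setminus S|\le|Y|$. Since $A\cap S\subseteq S$ is independent, $N(A\cap S)\cap S=\emptyset$, so $Y\subseteq N(A\setminus S)\setminus N(A\cap S)$ and $|Y|\le\bigl|N(A\setminus S)\setminus N(A\cap S)\bigr|$; combining the two inequalities yields the claim, and then $d(A\cap S)=d(G)=id(G)$.

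The supermodularity inequality and the arguments for (ii) and (iii) are routine; the step I expect to be the main obstacle is the key claim in (i) — in particular, realizing that one should isolate the combination $d(A\setminus S)+|N(A\cap S)\cap N(A\setminus S)|$ and extract the maximality of $S$ through the auxiliary independent set $(S\setminus Y)\cup(A\setminus S)$. Once that identity and estimate are established, part (i) follows at once.
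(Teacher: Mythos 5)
This theorem is imported from \cite{LevMan2012a}; the paper states it without proof, so there is no in-paper argument to compare against. Your proposal is correct and complete as a self-contained proof. The supermodularity inequality $d(X\cup Y)+d(X\cap Y)\geq d(X)+d(Y)$ (from $N(X\cup Y)=N(X)\cup N(Y)$ and $N(X\cap Y)\subseteq N(X)\cap N(Y)$) gives (ii) for critical sets in the sense $d(\cdot)=d(G)$, which is the right reading here: for critical \emph{independent} sets the union claim would fail already for $K_{2}$, where $\{u\}$ and $\{v\}$ are both critical independent but $\{u,v\}$ is not independent. Your (iii) correctly uses only intersection-closure of the (finite, nonempty) family of critical independent sets, and your key lemma in (i) --- that $A\cap S$ is again a critical independent set whenever $A$ is critical independent and $S\in\Omega(G)$ --- is exactly the engine behind $\ker(G)\subseteq\mathrm{core}(G)$ in the literature; your derivation of it via the identity $d(A)=d(A\cap S)+d(A\setminus S)+|N(A\cap S)\cap N(A\setminus S)|$ and the auxiliary independent set $(S\setminus Y)\cup(A\setminus S)$ with $Y=N(A\setminus S)\cap S$ checks out line by line (in particular $N(A\cap S)\cap S=\emptyset$ by independence of $S$, so $Y$ really does avoid $N(A\cap S)$). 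This is a counting-based alternative to proofs that route the same step through a matching lemma such as Lemma \ref{MatchingLemma}; it is more elementary in that it needs no matching machinery, at the cost of the slightly delicate bookkeeping with the neighborhood overlap term.
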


It is well-known that $\alpha(G)+\mu(G)\leq\left\vert V(G)\right\vert $ holds
for every graph $G$. Recall that if $\alpha(G)+\mu(G)=\left\vert
V(G)\right\vert $, then $G$ is a \textit{K\"{o}nig-Egerv\'{a}ry graph}
\cite{Deming1979,Sterboul1979}. For example, each bipartite graph is a
K\"{o}nig-Egerv\'{a}ry graph as well. Various properties of
K\"{o}nig-Egerv\'{a}ry graphs can be found in
\cite{Korach2006,levm4,LevMan2013b}.

A proof of a conjecture of Graffiti.pc \cite{DeLaVina} yields a new
characterization of K\"{o}nig-Egerv\'{a}ry graphs: these are exactly the
graphs having a critical maximum independent set \cite{Larson2011}.

\begin{theorem}
\label{th5}\cite{LevMan2012b} For a graph $G$, the following assertions are equivalent:

\emph{(i)} $G$ is a K\"{o}nig-Egerv\'{a}ry graph;

\emph{(ii)} there exists some maximum independent set which is critical;

\emph{(iii)} each of its maximum independent sets is critical.
\end{theorem}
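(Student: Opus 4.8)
The plan is to prove the cycle $(i)\Rightarrow(iii)\Rightarrow(ii)\Rightarrow(i)$, with $(iii)\Rightarrow(ii)$ immediate since $\Omega(G)\neq\emptyset$. Everything rests on one observation that I would isolate first: all members of $\Omega(G)$ have the same difference. Indeed, if $S\in\Omega(G)$ and some vertex $u\in V(G)\setminus S$ had no neighbor in $S$, then $S\cup\{u\}$ would be independent and larger than $S$, contradicting maximality; hence every vertex outside $S$ has a neighbor in $S$, i.e. $N(S)=V(G)\setminus S$ for every $S\in\Omega(G)$. Consequently $d(S)=|S|-|N(S)|=\alpha(G)-(|V(G)|-\alpha(G))=2\alpha(G)-|V(G)|$, a value that does not depend on the chosen $S$.

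Cashing this out settles $(ii)\Leftrightarrow(iii)$ at once. Using $d(G)=id(G)$ from \cite{Zhang1990}, a maximum independent set $S$ is critical exactly when $d(S)=d(G)$, that is, when $2\alpha(G)-|V(G)|=d(G)$. Since the left-hand side does not depend on $S$, this equality holds for one member of $\Omega(G)$ if and only if it holds for all of them; thus the maximum independent sets of $G$ are critical in an all-or-nothing fashion, which is precisely $(ii)\Leftrightarrow(iii)$.

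For $(i)\Rightarrow(iii)$ I would use a matching bound. Fix a maximum matching $M$ and an arbitrary independent set $A$; each $M$-saturated vertex of $A$ is matched to a distinct vertex of $N(A)$ (its partner is a neighbor and lies outside $A$ by independence), so $A$ has at least $|A|-|N(A)|=d(A)$ unsaturated vertices. As there are only $|V(G)|-2\mu(G)$ unsaturated vertices in total, $d(A)\le|V(G)|-2\mu(G)$, and hence $d(G)\le|V(G)|-2\mu(G)$. Combined with $d(G)\ge 2\alpha(G)-|V(G)|$ from the observation above, this gives the sandwich $2\alpha(G)-|V(G)|\le d(G)\le|V(G)|-2\mu(G)$; when $G$ is K\"{o}nig-Egerv\'{a}ry the two ends coincide, forcing $d(G)=2\alpha(G)-|V(G)|$ and thereby making every maximum independent set critical.

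The hard direction is $(ii)\Rightarrow(i)$: a single critical maximum independent set must force $\alpha(G)+\mu(G)=|V(G)|$. Here the sandwich is useless, since $d(G)=2\alpha(G)-|V(G)|$ only returns the trivial inequality $\alpha(G)+\mu(G)\le|V(G)|$. The tempting route—converting a Hall violator inside $V(G)\setminus S$ into an independent set whose difference exceeds $d(G)$—is blocked precisely by edges internal to $V(G)\setminus S$, which is the very feature separating non-K\"{o}nig-Egerv\'{a}ry graphs (such as $C_5$) from K\"{o}nig-Egerv\'{a}ry ones. I would therefore close the cycle by invoking the characterization of K\"{o}nig-Egerv\'{a}ry graphs as exactly those possessing a critical maximum independent set \cite{Larson2011}, which yields $(ii)\Rightarrow(i)$ outright. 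This appeal to \cite{Larson2011} is the main obstacle: a self-contained proof would instead have to turn the equality $d(G)=2\alpha(G)-|V(G)|$ into a matching saturating $V(G)\setminus S$, e.g. via defect or Gallai--Edmonds matching theory.
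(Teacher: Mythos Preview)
The paper does not prove this theorem; it is quoted from \cite{LevMan2012b} (with the sentence before it crediting \cite{Larson2011} for the equivalence $(i)\Leftrightarrow(ii)$), so there is no in-paper argument to compare against. Your write-up is therefore assessed on its own.

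Your observation that every $S\in\Omega(G)$ satisfies $N(S)=V(G)\setminus S$ and hence $d(S)=2\alpha(G)-|V(G)|$ is correct, and it immediately gives the ``all-or-nothing'' equivalence $(ii)\Leftrightarrow(iii)$. Your argument for $(i)\Rightarrow(iii)$ is also correct: the injection from $M$-saturated vertices of an independent set $A$ into $N(A)$ yields $d(A)\le |V(G)|-2\mu(G)$ for every independent $A$, whence $d(G)=id(G)\le |V(G)|-2\mu(G)$; together with $d(G)\ge 2\alpha(G)-|V(G)|$ this collapses when $\alpha(G)+\mu(G)=|V(G)|$.

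For $(ii)\Rightarrow(i)$ you explicitly defer to \cite{Larson2011}. That is exactly what the present paper does as well (it cites \cite{Larson2011} and \cite{LevMan2012b} rather than reproving the result), so your treatment is consistent with the paper's. You correctly identify that the sandwich $2\alpha(G)-|V(G)|\le d(G)\le |V(G)|-2\mu(G)$ cannot close this direction by itself; a self-contained argument indeed needs an extra ingredient (e.g., the Gallai--Edmonds structure, or Larson's independence decomposition) to produce a matching saturating $V(G)\setminus S$ from the criticality of $S$. As a proof sketch that outsources the one nontrivial implication to the same external source the paper relies on, your proposal is sound.
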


For a graph $G$, let us denote
\begin{align*}
\mathrm{\ker}(G)  &  =%
{\displaystyle\bigcap}
\left\{  A:A\text{ \textit{is a critical independent set}}\right\}  \text{
\cite{LevMan2012a},}\\
\mathrm{MaxCritIndep}(G)  &  =\left\{  S:S\text{ \textit{is a maximum critical
independent set}}\right\} \\
\mathrm{diadem}(G)  &  =%
{\displaystyle\bigcup}
\mathrm{MaxCritIndep}(G)\text{, and }\mathrm{nucleus}(G)=%
{\displaystyle\bigcap}
\mathrm{MaxCritIndep}(G)\text{.}%
\end{align*}

Clearly, $\mathrm{\ker}(G)\subseteq\mathrm{nucleus}(G)$ holds for every graph
$G$. In addition, by Theorem \ref{th3}, the inclusion $\mathrm{diadem}%
(G)\subseteq\mathrm{corona}(G)$ is true for every graph $G$.

In \cite{LevManLemma2011} the following lemma was introduced.

\begin{lemma}
[Matching Lemma]\cite{LevManLemma2011} \label{MatchingLemma}If $A\in
\mathrm{Ind}(G),\Lambda\subseteq\Omega(G)$, and $\left\vert \Lambda\right\vert
\geq1$, then there exists a matching from $A-%
{\displaystyle\bigcap}
\Lambda$ into $%
{\displaystyle\bigcup}
\Lambda-A$.
\end{lemma}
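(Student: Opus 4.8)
The plan is to prove this by induction on $|\Lambda|$. The base case $|\Lambda|=1$ (where $\bigcap\Lambda=\bigcup\Lambda=S$, so the claim is a matching from $A\setminus S$ into $S\setminus A$) I would obtain from Hall's theorem on the bipartite subgraph of $G$ between $A\setminus S$ and $S\setminus A$: given $Z\subseteq A\setminus S$, put $S^{\ast}=(S\setminus N_{G}(Z))\cup Z$. Since $Z\subseteq A$ is independent, no vertex of $Z$ lies in $N_{G}(Z)$, so $S^{\ast}$ is independent; as $S$ is maximum, $|S^{\ast}|\le|S|$, which rearranges to $|N_{G}(Z)\cap S|\ge|Z|$; and $N_{G}(Z)\cap S\subseteq S\setminus A$ because a common neighbour of $Z$ lying in $A$ would contradict independence of $A$. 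Thus Hall's condition holds and the base case follows.

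For the inductive step, with $|\Lambda|\ge 2$, fix $S_{0}\in\Lambda$ and set $\Lambda'=\Lambda\setminus\{S_{0}\}$. The induction hypothesis gives a matching $M'$ from $A\setminus\bigcap\Lambda'$ into $\bigcup\Lambda'\setminus A$, and the base case gives a matching $M_{0}$ from $A\setminus S_{0}$ into $S_{0}\setminus A$. Since $A\setminus\bigcap\Lambda=(A\setminus\bigcap\Lambda')\cup(A\setminus S_{0})$, every vertex I must saturate is saturated by $M'$ or by $M_{0}$. I would then pass to $M'\triangle M_{0}$, whose components are vertex-disjoint paths and even cycles, and assemble the desired matching $M$ from all edges of $M'\cap M_{0}$, the $M'$-edges of each cycle, and, within each path, a matching saturating every vertex of that path that lies in $A$. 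Every edge of $M$ is an edge of $G$, so the $M$-partner of a vertex of $A$ lies outside $A$; and since it lies in $V(M')\cup V(M_{0})\subseteq A\cup(\bigcup\Lambda\setminus A)$, it lies in $\bigcup\Lambda\setminus A$. Hence $M$ is the required matching, provided the per-path claim holds.

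The main obstacle is exactly this path claim. Consecutive vertices of a path in $M'\triangle M_{0}$ are adjacent in $G$, so by independence of $A$ no two consecutive path vertices lie in $A$; consequently a path with an even number of vertices has a perfect matching (harmless), while a path with an odd number of vertices can be near-perfectly matched so as to omit any one vertex in an even position. The point is to omit a vertex that is not in $A$. An endpoint of a path is saturated by exactly one of $M',M_{0}$, and one checks quickly that an endpoint lying in $A$ is either an ``$M'$-endpoint'' in $(A\setminus\bigcap\Lambda')\cap S_{0}$ or an ``$M_{0}$-endpoint'' in $(A\setminus S_{0})\cap\bigcap\Lambda'$. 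The decisive observation I expect to need is that an $M_{0}$-endpoint $v$ forces its path to have only two vertices: its $M_{0}$-partner $w$ lies in $S_{0}\setminus A$, and if the path continued past $w$ the next edge would be an $M'$-edge, placing $w$ in $\bigcup\Lambda'$; but then some member of $\Lambda'$ contains both $v$ (as $v\in\bigcap\Lambda'$) and $w$, while $vw\in E(G)$ — contradicting independence of that member. Therefore a path on at least three vertices has no $M_{0}$-endpoint in $A$, so if it has an even number of edges one endpoint lies outside $A$ and can be dropped, and if it has an odd number of edges it has an even number of vertices and is perfectly matched; in every case all of its $A$-vertices get saturated, which completes the induction.
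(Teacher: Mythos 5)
The paper does not prove this lemma: it is imported from \cite{LevManLemma2011} and used as a black box, so there is no internal proof to compare against. Your argument is correct and self-contained. The base case via Hall's condition (for $Z\subseteq A-S$, the set $(S-N(Z))\cup Z$ is independent, whence $\left\vert N(Z)\cap S\right\vert \geq\left\vert Z\right\vert$, and $N(Z)\cap S\subseteq S-A$) is the standard exchange argument. The inductive step is also sound, and its one delicate point is handled correctly: since every edge of $M'$ and of $M_{0}$ joins $A$ to its complement, the components of $M'\triangle M_{0}$ alternate in $A$-membership as well as in matching-membership; a path with an odd number of vertices and at least three of them necessarily has one $M'$-end and one $M_{0}$-end while its two ends share their $A$-status; and your observation that an $M_{0}$-endpoint lying in $A\cap\bigcap\Lambda'$ forces its path to stop after one edge (otherwise its partner $w$ would lie in some $T\in\Lambda'$ that also contains the endpoint, contradicting independence of $T$) eliminates the only configuration in which some vertex of $A$ could be left unsaturated. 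I note one harmless slip of terminology: in a path with an odd number of vertices the omittable vertices are those at even distance from an end (in particular the endpoints), which are the odd positions under $1$-based indexing; your phrase ``even position'' is correct only under $0$-based indexing, but since you only ever drop an endpoint the argument is unaffected. It is also worth recording explicitly, as you implicitly do, that $V(M'\cup M_{0})\cap A=A-\bigcap\Lambda$, so the assembled matching meets $A$ exactly where required.
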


\section{Monotonicity results}

We define the following preorder, denoted $\vartriangleleft$, on the class of
collections of sets.

\begin{definition}
Let $\Gamma,\Gamma^{\prime}$ be two collections of sets. We write
$\Gamma^{\prime}\vartriangleleft\Gamma$ if $%
{\displaystyle\bigcup}
\Gamma^{\prime}\subseteq%
{\displaystyle\bigcup}
\Gamma$ and $%
{\displaystyle\bigcap}
\Gamma\subseteq%
{\displaystyle\bigcap}
\Gamma^{\prime}$.
\end{definition}

\begin{example}
\emph{(i) }$\{\{1\}\}\ntriangleleft\{\{1,2\}\}$

\emph{(ii) }$\{\{1,2\},\{2,3\}\}\ntriangleleft\{\{1,2\},\{1,3\}\}$,

\emph{(iii) }$\{\{1,2\},\{2,3\}\}\vartriangleleft\{\{1,2\},\{1,3\},\{2,3\}\}$,

\emph{(iv) }$\{\{1,2\},\{2,3\}\}\vartriangleleft\{\{1,2\},\{1,3\},\{2\}\}$.

\emph{(v) }$\mathrm{MaxCritIndep}(G)\vartriangleleft$ $\Omega(G)$ is true for
every bipartite graph $G$, since $\mathrm{core}(G)\subseteq\ker(G)$ and
\textrm{diadem}$(G)\subseteq\mathrm{corona}(G)$.
\end{example}

\begin{theorem}
\label{matching}If $\emptyset\neq\Gamma\subseteq\Omega(G)$, and $\emptyset
\neq\Gamma^{\prime}\subseteq\mathrm{Ind}(G)$, then there is a matching from $%
{\displaystyle\bigcap}
\Gamma^{\prime}-%
{\displaystyle\bigcap}
\Gamma$ into $%
{\displaystyle\bigcup}
\Gamma-%
{\displaystyle\bigcup}
\Gamma^{\prime}$.
\end{theorem}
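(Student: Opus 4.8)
The goal is to construct a matching from $\bigcap\Gamma' \setminus \bigcap\Gamma$ into $\bigcup\Gamma \setminus \bigcup\Gamma'$, where $\Gamma$ is a nonempty subcollection of $\Omega(G)$ and $\Gamma'$ a nonempty subcollection of $\mathrm{Ind}(G)$. The natural tool is the Matching Lemma (Lemma~\ref{MatchingLemma}), which, for a single independent set $A$ and a nonempty $\Lambda\subseteq\Omega(G)$, produces a matching from $A\setminus\bigcap\Lambda$ into $\bigcup\Lambda\setminus A$. The plan is to apply this lemma with $\Lambda=\Gamma$ and with $A$ chosen to be a single carefully selected member of $\Gamma'$. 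The difficulty is that the set $\bigcap\Gamma'$ we must match out of is the intersection over all of $\Gamma'$, not a single set; so I need to arrange that some single $A\in\Gamma'$ already ``sees'' all of $\bigcap\Gamma'$ in the relevant way.

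Let me write $A^\ast=\bigcap\Gamma'$. This set $A^\ast$ is itself independent (an intersection of independent sets is independent), and it is nonempty-or-possibly-empty — if $A^\ast=\emptyset$ the statement is vacuous, so assume $A^\ast\neq\emptyset$. Now apply the Matching Lemma directly with $A=A^\ast$ and $\Lambda=\Gamma$: since $A^\ast\in\mathrm{Ind}(G)$ and $\emptyset\neq\Gamma\subseteq\Omega(G)$, the lemma yields a matching $M$ from $A^\ast\setminus\bigcap\Gamma$ into $\bigcup\Gamma\setminus A^\ast$. The source set $A^\ast\setminus\bigcap\Gamma$ is exactly $\bigcap\Gamma'\setminus\bigcap\Gamma$, which is what we want. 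So the only remaining issue is the target side: the lemma gives a matching into $\bigcup\Gamma\setminus A^\ast = \bigcup\Gamma\setminus\bigcap\Gamma'$, whereas we need a matching into $\bigcup\Gamma\setminus\bigcup\Gamma'$, and in general $\bigcap\Gamma'\subseteq\bigcup\Gamma'$ so $\bigcup\Gamma\setminus\bigcup\Gamma'\subseteq\bigcup\Gamma\setminus\bigcap\Gamma'$ — the containment goes the wrong way.

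To fix the target side, I expect to refine the choice: rather than applying the lemma to the fixed set $A^\ast=\bigcap\Gamma'$, iterate or intersect appropriately. The cleanest route is to observe that the edges of the matching $M$ produced above must actually avoid $\bigcup\Gamma'$ on the target side. Indeed, suppose an edge of $M$ sends a vertex $x\in\bigcap\Gamma'\setminus\bigcap\Gamma$ to a vertex $y\in\bigcup\Gamma$ with $y\in\bigcup\Gamma'$; then $y\in B$ for some $B\in\Gamma'$. Since $x\in\bigcap\Gamma'\subseteq B$ and $y\in B$, the edge $xy$ lies inside the independent set $B$ — contradiction. Hence every target vertex of $M$ lies in $\bigcup\Gamma\setminus\bigcup\Gamma'$, and $M$ is the desired matching. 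The main obstacle, then, is precisely this last observation — recognizing that independence of the members of $\Gamma'$ forces the matching edges off of $\bigcup\Gamma'$ — and verifying that the Matching Lemma's hypotheses are met with $A=\bigcap\Gamma'$ (which only requires that this intersection be a nonempty independent set, handled by the vacuous case otherwise).
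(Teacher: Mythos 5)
Your proposal is correct and matches the paper's own argument: both apply the Matching Lemma with $A=\bigcap\Gamma^{\prime}$ and $\Lambda=\Gamma$, then use the independence of each member of $\Gamma^{\prime}$ to show the matched partners avoid $\bigcup\Gamma^{\prime}$. The only cosmetic difference is your explicit (and harmless) handling of the case $\bigcap\Gamma^{\prime}=\emptyset$.
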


\begin{proof}
Let $S=%
{\displaystyle\bigcap}
\Gamma^{\prime}$. Since $S$ is independent, by Lemma \ref{MatchingLemma},
there is a matching $M$ from $S-%
{\displaystyle\bigcap}
\Gamma$ into $%
{\displaystyle\bigcup}
\Gamma-S$. For each $x\in S-%
{\displaystyle\bigcap}
\Gamma$, we have $M(x)\notin%
{\displaystyle\bigcup}
\Gamma^{\prime}$, because every $A\in\Gamma^{\prime}$ is independent.
Consequently, $M$ is a matching from $%
{\displaystyle\bigcap}
\Gamma^{\prime}-%
{\displaystyle\bigcap}
\Gamma$ into $%
{\displaystyle\bigcup}
\Gamma-%
{\displaystyle\bigcup}
\Gamma^{\prime}$, as claimed.
\end{proof}

Choosing $\Gamma=\Omega(G)$ in Theorem \ref{matching}, we get the following.

\begin{corollary}
\label{corollary 5}If $\emptyset\neq\Gamma^{\prime}\subseteq\mathrm{Ind}(G)$,
then there is a matching from $%
{\displaystyle\bigcap}
\Gamma^{\prime}-\mathrm{core}(G)$ into $\mathrm{corona}(G)-%
{\displaystyle\bigcup}
\Gamma^{\prime}$.
\end{corollary}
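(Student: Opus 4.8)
The plan is to obtain this as a direct specialization of Theorem \ref{matching}, so there is essentially no new work to do: the substance has already been absorbed into the Matching Lemma and Theorem \ref{matching}. First I would record the one fact that makes the specialization legitimate, namely that $\Omega(G)\neq\emptyset$ for every graph $G$ — any independent set of size $\alpha(G)$ witnesses this — so that $\Gamma=\Omega(G)$ is an admissible choice of nonempty $\Gamma\subseteq\Omega(G)$ in the hypothesis of Theorem \ref{matching}.

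Next I would invoke Theorem \ref{matching} with $\Gamma=\Omega(G)$ and the given $\Gamma^{\prime}$, which yields a matching from $\bigcap\Gamma^{\prime}-\bigcap\Omega(G)$ into $\bigcup\Omega(G)-\bigcup\Gamma^{\prime}$. Then I would simply rewrite the two endpoints using the definitions recalled in the introduction, $\mathrm{core}(G)=\bigcap\{S:S\in\Omega(G)\}$ and $\mathrm{corona}(G)=\bigcup\{S:S\in\Omega(G)\}$, to read off a matching from $\bigcap\Gamma^{\prime}-\mathrm{core}(G)$ into $\mathrm{corona}(G)-\bigcup\Gamma^{\prime}$, which is exactly the assertion.

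Since every step is a substitution, there is no genuine obstacle; the only point requiring a moment's care is the nonemptiness of $\Omega(G)$, needed so that $\bigcap\Omega(G)$ and the appeal to Theorem \ref{matching} make sense. (If one wanted to be even more careful, one could note that the Matching Lemma already requires $|\Lambda|\geq 1$, and this is precisely what $\Omega(G)\neq\emptyset$ guarantees when $\Lambda=\Omega(G)$.) I would therefore expect the proof to be a single short paragraph.
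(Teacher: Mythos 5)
Your proposal is correct and matches the paper exactly: the paper derives this corollary by "Choosing $\Gamma=\Omega(G)$ in Theorem \ref{matching}," which is precisely your specialization, with $\bigcap\Omega(G)=\mathrm{core}(G)$ and $\bigcup\Omega(G)=\mathrm{corona}(G)$. Your added remark about the nonemptiness of $\Omega(G)$ is a harmless extra check that the paper leaves implicit.
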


Choosing $\Gamma^{\prime}=\left\{  S\right\}  \subseteq\Omega(G)$ in Corollary
\ref{corollary 5}, we get the following.

\begin{corollary}
\cite{BorosGolLev} For every graph $G$ and for every $S\in\Omega(G)$, there is
a matching from $S-\mathrm{core}(G)$ into $\mathrm{corona}(G)-S$.
\end{corollary}

\begin{theorem}
\label{the main theorem}If $\Gamma\subseteq\Omega(G)$ and $\Gamma^{\prime
}\subseteq\mathrm{Ind}(G)$ is such that $\Gamma^{\prime}\vartriangleleft
\Gamma$, then
\[
\left\vert \bigcup\Gamma^{\prime}\right\vert +\left\vert
{\displaystyle\bigcap}
\Gamma^{\prime}\right\vert \leq\left\vert \bigcup\Gamma\right\vert
+\left\vert
{\displaystyle\bigcap}
\Gamma\right\vert .
\]
In particular, $f:\left\{  \Gamma:\Gamma\subseteq\Omega(G)\right\}
\longrightarrow%
\mathbb{N}
,\ f\left(  \Gamma\right)  =\left\vert
{\displaystyle\bigcup}
\Gamma\right\vert +\left\vert
{\displaystyle\bigcap}
\Gamma\right\vert $ is $\vartriangleleft$-increasing.
\end{theorem}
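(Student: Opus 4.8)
The plan is to exploit the matching produced in Theorem~\ref{matching} as an injection witnessing a cardinality comparison. Set $A=\bigcap\Gamma^{\prime}$, $B=\bigcup\Gamma^{\prime}$, and $C=\bigcap\Gamma$, $D=\bigcup\Gamma$. The hypothesis $\Gamma^{\prime}\vartriangleleft\Gamma$ says precisely that $B\subseteq D$ and $C\subseteq A$. The desired inequality is $|A|+|B|\le|C|+|D|$, which (since $B\subseteq D$ and $C\subseteq A$) is equivalent to $|A\setminus C|\le|D\setminus B|$. So the whole theorem reduces to exhibiting an injection from $A\setminus C=\bigcap\Gamma^{\prime}-\bigcap\Gamma$ into $D\setminus B=\bigcup\Gamma-\bigcup\Gamma^{\prime}$, and that injection is exactly the matching $M$ delivered by Theorem~\ref{matching}. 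A matching is by definition a set of pairwise non-incident edges, so the map $x\mapsto M(x)$ is injective on its domain, and Theorem~\ref{matching} guarantees its domain is all of $\bigcap\Gamma^{\prime}-\bigcap\Gamma$ and its image lies in $\bigcup\Gamma-\bigcup\Gamma^{\prime}$. This gives $|A\setminus C|\le|D\setminus B|$ and hence the inequality.

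The one point deserving care is the rewriting step: I must justify that, under $B\subseteq D$ and $C\subseteq A$, the inequality $|A|+|B|\le|C|+|D|$ really is equivalent to $|A\setminus C|\le|D\setminus B|$. From $C\subseteq A$ we get $|A|=|C|+|A\setminus C|$, and from $B\subseteq D$ we get $|D|=|B|+|D\setminus B|$; substituting both into $|A|+|B|\le|C|+|D|$ and cancelling the common terms $|B|$ and $|C|$ yields the claim. (All sets here are finite, as subsets of $V(G)$, so these cardinal arithmetic manipulations are unproblematic.) This is the step where one could slip on a direction of inclusion, but it is otherwise routine; there is no real obstacle, since the substantive combinatorial content — the existence of the matching — has already been established in Theorem~\ref{matching}.

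Finally, for the ``in particular'' clause: specializing to $\Gamma^{\prime}\subseteq\Omega(G)\subseteq\mathrm{Ind}(G)$, the inequality just proved says that $\Gamma^{\prime}\vartriangleleft\Gamma$ with both collections inside $\Omega(G)$ implies $f(\Gamma^{\prime})\le f(\Gamma)$, which is precisely the assertion that $f$ is $\vartriangleleft$-increasing on $\{\Gamma:\Gamma\subseteq\Omega(G)\}$. Nothing further is needed. I would write the proof in three short sentences: introduce the abbreviations and restate the goal as the injection inequality, invoke Theorem~\ref{matching}, and read off the conclusion.
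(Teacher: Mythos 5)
Your argument is essentially identical to the paper's own proof: both reduce the inequality, via the inclusions $\bigcap\Gamma\subseteq\bigcap\Gamma^{\prime}$ and $\bigcup\Gamma^{\prime}\subseteq\bigcup\Gamma$, to the cardinality bound $\left\vert \bigcap\Gamma^{\prime}-\bigcap\Gamma\right\vert \leq\left\vert \bigcup\Gamma-\bigcup\Gamma^{\prime}\right\vert$ supplied by the matching of Theorem~\ref{matching}. The only (trivial) omission is that Theorem~\ref{matching} requires $\Gamma$ and $\Gamma^{\prime}$ to be non-empty, so the degenerate cases $\Gamma=\emptyset$ or $\Gamma^{\prime}=\emptyset$ should be dispatched separately, as the paper does in one sentence.
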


\begin{proof}
If $\Gamma^{\prime}=\emptyset$ or $\Gamma=\emptyset$, then the inequality
clearly holds. Otherwise, according to Theorem \ref{matching}, there is a
matching $M$ from $\bigcap\Gamma^{\prime}-\bigcap\Gamma$\ into $\bigcup
\Gamma-\bigcup\Gamma^{\prime}$. Thus
\[
\left\vert \bigcap\Gamma^{\prime}-\bigcap\Gamma\right\vert \leq\left\vert
\bigcup\Gamma-\bigcup\Gamma^{\prime}\right\vert .
\]
Since $%
{\displaystyle\bigcap}
\Gamma\subseteq%
{\displaystyle\bigcap}
\Gamma^{\prime}$ and $%
{\displaystyle\bigcup}
\Gamma^{\prime}\subseteq%
{\displaystyle\bigcup}
\Gamma$, we have
\[
\left\vert \bigcap\Gamma^{\prime}-\bigcap\Gamma\right\vert =\left\vert
\bigcap\Gamma^{\prime}\right\vert -\left\vert \bigcap\Gamma\right\vert \text{,
and }\left\vert \bigcup\Gamma-\bigcup\Gamma^{\prime}\right\vert =\left\vert
\bigcup\Gamma\right\vert -\left\vert \bigcup\Gamma^{\prime}\right\vert ,
\]
which completes the proof.
\end{proof}

\begin{corollary}
\label{corollary 1}If $\Gamma^{\prime}\subseteq\Gamma\subseteq$ $\Omega(G)$,
then $\left\vert
{\displaystyle\bigcup}
\Gamma^{\prime}\right\vert +\left\vert
{\displaystyle\bigcap}
\Gamma^{\prime}\right\vert \leq\left\vert
{\displaystyle\bigcup}
\Gamma\right\vert +\left\vert
{\displaystyle\bigcap}
\Gamma\right\vert $.
\end{corollary}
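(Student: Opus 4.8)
The plan is to obtain Corollary \ref{corollary 1} as an immediate instance of Theorem \ref{the main theorem}, by checking that the pair $\left(  \Gamma^{\prime},\Gamma\right)  $ satisfies the hypotheses of that theorem. First I would record that every maximum independent set is in particular an independent set, so that $\Gamma^{\prime}\subseteq\Omega(G)\subseteq\mathrm{Ind}(G)$; thus the requirement $\Gamma^{\prime}\subseteq\mathrm{Ind}(G)$ is automatically met, and of course $\Gamma\subseteq\Omega(G)$ by assumption.

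The only point that needs an argument is that $\Gamma^{\prime}\subseteq\Gamma$ implies $\Gamma^{\prime}\vartriangleleft\Gamma$, and this is a purely set-theoretic observation about the behaviour of union and intersection under enlarging the index family. If $\Gamma^{\prime}\subseteq\Gamma$, then every member of $\Gamma^{\prime}$ is a member of $\Gamma$, whence $\bigcup\Gamma^{\prime}\subseteq\bigcup\Gamma$; dually, any element that belongs to every member of $\Gamma$ belongs in particular to every member of $\Gamma^{\prime}$, so $\bigcap\Gamma\subseteq\bigcap\Gamma^{\prime}$. These two inclusions are exactly the defining conditions of $\Gamma^{\prime}\vartriangleleft\Gamma$.

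Having verified both hypotheses, I would simply invoke Theorem \ref{the main theorem} to conclude
\[
\left\vert \bigcup\Gamma^{\prime}\right\vert +\left\vert \bigcap\Gamma^{\prime}\right\vert \leq\left\vert \bigcup\Gamma\right\vert +\left\vert \bigcap\Gamma\right\vert ,
\]
which is the claim. I do not anticipate any real obstacle: the corollary is nothing more than the restriction of the $\vartriangleleft$-monotonicity of $f$ to the special case in which $\Gamma^{\prime}$ is a subcollection of $\Gamma$, and it makes explicit that $f$ is in particular $\subseteq$-increasing on subcollections of $\Omega(G)$.
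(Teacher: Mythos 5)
Your proof is correct and follows exactly the paper's own route: the paper likewise derives the corollary immediately from Theorem \ref{the main theorem} by observing that $\Gamma^{\prime}\subseteq\Gamma$ implies $\Gamma^{\prime}\vartriangleleft\Gamma$. You merely spell out the set-theoretic verification of the two inclusions in slightly more detail than the paper does.
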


\begin{proof}
It follows immediately by Theorem \ref{the main theorem}, because
$\Gamma^{\prime}\subseteq\Gamma$ implies $\Gamma^{\prime}\vartriangleleft
\Gamma$.
\end{proof}

\begin{corollary}
$\left\vert \mathrm{corona}(G)\right\vert +\left\vert \mathrm{core}%
(G)\right\vert =2\alpha(G)$ if and only if $\left\vert
{\displaystyle\bigcup}
\Gamma\right\vert +\left\vert
{\displaystyle\bigcap}
\Gamma\right\vert =2\alpha(G)$ holds for each non-empty $\Gamma\subseteq
\Omega\left(  G\right)  $.
\end{corollary}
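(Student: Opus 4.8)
The plan is to squeeze $f(\Gamma)=\left\vert\bigcup\Gamma\right\vert+\left\vert\bigcap\Gamma\right\vert$ between $2\alpha(G)$ and $\left\vert\mathrm{corona}(G)\right\vert+\left\vert\mathrm{core}(G)\right\vert$, using the $\vartriangleleft$-monotonicity of $f$ recorded in Corollary \ref{corollary 1}. The forward direction is a one-line specialization; the reverse direction is a two-sided estimate.

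For the ``if'' direction, assume $\left\vert\bigcup\Gamma\right\vert+\left\vert\bigcap\Gamma\right\vert=2\alpha(G)$ for every non-empty $\Gamma\subseteq\Omega(G)$. Taking $\Gamma=\Omega(G)$ and recalling $\bigcup\Omega(G)=\mathrm{corona}(G)$ and $\bigcap\Omega(G)=\mathrm{core}(G)$ yields $\left\vert\mathrm{corona}(G)\right\vert+\left\vert\mathrm{core}(G)\right\vert=2\alpha(G)$.

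For the ``only if'' direction, fix a non-empty $\Gamma\subseteq\Omega(G)$ and choose any $S\in\Gamma$. First, $\{S\}\subseteq\Gamma\subseteq\Omega(G)$, so Corollary \ref{corollary 1} gives $\left\vert\bigcup\{S\}\right\vert+\left\vert\bigcap\{S\}\right\vert\leq f(\Gamma)$; since $\bigcup\{S\}=\bigcap\{S\}=S$ and $\left\vert S\right\vert=\alpha(G)$, this says $2\alpha(G)\leq f(\Gamma)$. Second, $\Gamma\subseteq\Omega(G)\subseteq\Omega(G)$, so Corollary \ref{corollary 1} gives $f(\Gamma)\leq\left\vert\bigcup\Omega(G)\right\vert+\left\vert\bigcap\Omega(G)\right\vert=\left\vert\mathrm{corona}(G)\right\vert+\left\vert\mathrm{core}(G)\right\vert$, which equals $2\alpha(G)$ by hypothesis. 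Hence $f(\Gamma)=2\alpha(G)$.

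There is no real obstacle; the only subtlety worth flagging is that the lower bound $f(\Gamma)\geq 2\alpha(G)$ is not the trivial remark $\left\vert\bigcap\Gamma\right\vert\leq\alpha(G)\leq\left\vert\bigcup\Gamma\right\vert$ but genuinely invokes Corollary \ref{corollary 1} on the singleton subcollection. I would close by noting the immediate consequence, in the language of the abstract: if $\Gamma$ is a K\"{o}nig-Egerv\'{a}ry collection and $\emptyset\neq\Gamma^{\prime}\subseteq\Gamma$, then $2\alpha(G)\leq f(\Gamma^{\prime})\leq f(\Gamma)=2\alpha(G)$, so every non-empty subcollection of a K\"{o}nig-Egerv\'{a}ry collection is again a K\"{o}nig-Egerv\'{a}ry collection.
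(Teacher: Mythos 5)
Your proof is correct and follows the same route the paper intends: the forward direction is the specialization $\Gamma=\Omega(G)$, and the reverse direction sandwiches $f(\Gamma)$ between $f(\{S\})=2\alpha(G)$ and $f(\Omega(G))=\left\vert \mathrm{corona}(G)\right\vert +\left\vert \mathrm{core}(G)\right\vert$ via Corollary \ref{corollary 1} (the lower bound being exactly Corollary \ref{corollary 8}). Your closing observation that this makes every non-empty subcollection of a K\"{o}nig-Egerv\'{a}ry collection again K\"{o}nig-Egerv\'{a}ry is precisely the interpretation the paper draws in its conclusions.
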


\begin{corollary}
\cite{LevManLemma2011}\label{corollary 8} If $\Gamma\subseteq\Omega\left(
G\right)  ,\left\vert \Gamma\right\vert \geq1$, then $2\alpha(G)\leq
\left\vert
{\displaystyle\bigcup}
\Gamma\right\vert +\left\vert
{\displaystyle\bigcap}
\Gamma\right\vert $.
\end{corollary}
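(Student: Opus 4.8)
The plan is to reduce the statement instantly to the singleton case of the monotonicity results already proved. First I would use the hypothesis $\left\vert \Gamma\right\vert \geq 1$ to pick some $S\in\Gamma$, so that $\left\{S\right\}\subseteq\Gamma\subseteq\Omega(G)$. Then I would apply Corollary \ref{corollary 1} with $\Gamma^{\prime}=\left\{S\right\}$, which gives
\[
\left\vert \bigcup\left\{S\right\}\right\vert +\left\vert \bigcap\left\{S\right\}\right\vert \leq\left\vert \bigcup\Gamma\right\vert +\left\vert \bigcap\Gamma\right\vert .
\]
Since $\bigcup\left\{S\right\}=\bigcap\left\{S\right\}=S$ and $S\in\Omega(G)$ forces $\left\vert S\right\vert =\alpha(G)$, the left-hand side equals $2\alpha(G)$, and the conclusion follows.

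Alternatively, one can bypass Corollary \ref{corollary 1} and invoke Theorem \ref{matching} directly with the same $\Gamma^{\prime}=\left\{S\right\}\subseteq\mathrm{Ind}(G)$: this produces a matching from $\bigcap\left\{S\right\}-\bigcap\Gamma=S-\bigcap\Gamma$ into $\bigcup\Gamma-\bigcup\left\{S\right\}=\bigcup\Gamma-S$, whence $\left\vert S-\bigcap\Gamma\right\vert \leq\left\vert \bigcup\Gamma-S\right\vert$. Because $\bigcap\Gamma\subseteq S\subseteq\bigcup\Gamma$, this is $\left\vert S\right\vert -\left\vert \bigcap\Gamma\right\vert \leq\left\vert \bigcup\Gamma\right\vert -\left\vert S\right\vert$, which rearranges to $2\alpha(G)\leq\left\vert \bigcup\Gamma\right\vert +\left\vert \bigcap\Gamma\right\vert$.

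There is no genuine obstacle here: the corollary is essentially the $\Gamma^{\prime}=\left\{S\right\}$ instance of the preceding results. The only point requiring any attention is the nonemptiness of $\Gamma$, which is precisely the stated hypothesis $\left\vert \Gamma\right\vert \geq 1$ and is what guarantees that a vertex $S$ can be chosen and that $\bigcap\Gamma$ is taken over a nonempty family.
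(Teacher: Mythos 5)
Your proof is correct and follows exactly the route the paper intends: the corollary is stated without a separate proof, but the paper itself derives this inequality (for the case $\Gamma=\Omega(G)$, inside the proof of Theorem \ref{th9}) precisely by taking $\Gamma^{\prime}=\{S\}$ for some $S\in\Gamma$ and invoking Corollary \ref{corollary 1}, which is your first argument; your second argument via Theorem \ref{matching} is just the same reduction one level down. The only blemish is calling $S$ a ``vertex'' in your closing remark --- it is a maximum independent set, i.e.\ a member of $\Gamma$ --- which does not affect the argument.
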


Let us consider the graphs $G_{1}$ and $G_{2}$ from Figure \ref{fig2222}:
$\mathrm{core}(G_{1})=\left\{  a,b,c,d\right\}  $ and it is a critical set,
while $\mathrm{core}(G_{2})=\left\{  x,y,z,w\right\}  $ and it is not
critical.\begin{figure}[h]
\setlength{\unitlength}{1cm}\begin{picture}(5,1.8)\thicklines
\multiput(1,0.5)(1,0){7}{\circle*{0.29}}
\multiput(1,1.5)(1,0){6}{\circle*{0.29}}
\put(1,0.5){\line(1,0){6}}
\put(1,1.5){\line(1,-1){1}}
\put(2,0.5){\line(0,1){1}}
\put(3,1.5){\line(1,-1){1}}
\put(3,1.5){\line(1,0){1}}
\put(4,0.5){\line(0,1){1}}
\put(5,0.5){\line(0,1){1}}
\put(6,1.5){\line(1,-1){1}}
\put(6,0.5){\line(0,1){1}}
\put(0.7,1.5){\makebox(0,0){$a$}}
\put(0.7,0.5){\makebox(0,0){$b$}}
\put(1.7,1.5){\makebox(0,0){$c$}}
\put(2.7,1.5){\makebox(0,0){$e$}}
\put(4.3,1.5){\makebox(0,0){$f$}}
\put(5.3,1.5){\makebox(0,0){$g$}}
\put(3,0.85){\makebox(0,0){$d$}}
\put(4,0){\makebox(0,0){$G_{1}$}}
\multiput(8,0.5)(1,0){6}{\circle*{0.29}}
\multiput(8,1.5)(1,0){5}{\circle*{0.29}}
\put(8,0.5){\line(1,0){5}}
\put(8,1.5){\line(1,-1){1}}
\put(9,0.5){\line(0,1){1}}
\put(10,0.5){\line(0,1){1}}
\put(10,0.5){\line(1,1){1}}
\put(10,1.5){\line(1,0){1}}
\put(12,0.5){\line(0,1){1}}
\put(12,1.5){\line(1,-1){1}}
\put(8.3,1.5){\makebox(0,0){$x$}}
\put(8.2,0.75){\makebox(0,0){$y$}}
\put(9.3,1.5){\makebox(0,0){$z$}}
\put(11,0.8){\makebox(0,0){$w$}}
\put(10,0){\makebox(0,0){$G_{2}$}}
\end{picture}\caption{Both $G_{1}$ and $G_{2}$\ are not K\"{o}nig-Egerv\'{a}ry
graphs.}%
\label{fig2222}%
\end{figure}
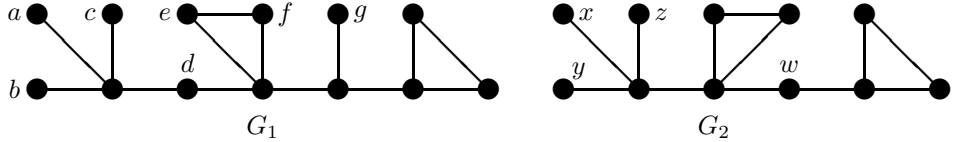

Moreover, $\mathrm{\ker}(G_{1})=\left\{  a,b,c\right\}  \subset\mathrm{core}%
(G_{1})\subset\left\{  a,b,c,d,g\right\}  =\mathrm{nucleus}(G_{1})$, where
$\mathrm{nucleus}(G_{1})=$ $A_{1}\cap A_{2}$, and $A_{1}=\left\{
a,b,c,d,e,g\right\}  $ and $A_{2}=\left\{  a,b,c,d,f,g\right\}  $ are all the
maximum critical independent sets of $G_{1}$. Notice that $\mathrm{diadem}%
(G_{1})\subsetneq\mathrm{corona}(G_{1})$.

\begin{theorem}
\label{th7}Let $G$ be a graph whose $\mathrm{core}(G)$ is a critical set. Then

\emph{(i)} $\mathrm{core}(G)\subseteq\mathrm{nucleus}(G)$;

\emph{(ii)} $\mathrm{MaxCritIndep}(G)\vartriangleleft$ $\Omega(G)$;

\emph{(iii)} $\left\vert \mathrm{diadem}(G)\right\vert +\left\vert
\mathrm{nucleus}(G)\right\vert \leq\left\vert \mathrm{corona}(G)\right\vert
+\left\vert \mathrm{core}(G)\right\vert $;

\emph{(iv)} $\mathrm{core}(G)=\mathrm{nucleus}(G)$, if, in addition,
$\mathrm{diadem}(G)=\mathrm{corona}(G)$.
\end{theorem}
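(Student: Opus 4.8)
The plan is to establish (i) directly from the hypothesis via a union-of-critical-sets argument, then obtain (ii) and (iii) almost for free from (i) together with machinery already developed in the paper, and finally squeeze (iv) out of (iii). First I would prove (i). Fix an arbitrary $S\in\mathrm{MaxCritIndep}(G)$; the goal is $\mathrm{core}(G)\subseteq S$. By Theorem \ref{th3}, $S$ is contained in some $T\in\Omega(G)$, and since $\mathrm{core}(G)\subseteq T$ as well, the set $\mathrm{core}(G)\cup S$ is independent. By Theorem \ref{th4}(ii) it is also critical, being the union of the critical sets $\mathrm{core}(G)$ (critical by hypothesis) and $S$. Hence $\mathrm{core}(G)\cup S$ is a critical independent set containing $S$, and maximality of $S$ among critical independent sets forces $\mathrm{core}(G)\cup S=S$, i.e. $\mathrm{core}(G)\subseteq S$. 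Intersecting over all such $S$ gives $\mathrm{core}(G)\subseteq\bigcap\mathrm{MaxCritIndep}(G)=\mathrm{nucleus}(G)$.

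Given (i), item (ii) is a one-line check of the two defining inclusions of $\vartriangleleft$: $\bigcap\Omega(G)=\mathrm{core}(G)\subseteq\mathrm{nucleus}(G)=\bigcap\mathrm{MaxCritIndep}(G)$ by (i), and $\bigcup\mathrm{MaxCritIndep}(G)=\mathrm{diadem}(G)\subseteq\mathrm{corona}(G)=\bigcup\Omega(G)$, the latter holding for every graph by Theorem \ref{th3}; hence $\mathrm{MaxCritIndep}(G)\vartriangleleft\Omega(G)$. For (iii) I would feed $\Gamma=\Omega(G)$ and $\Gamma^{\prime}=\mathrm{MaxCritIndep}(G)$ into Theorem \ref{the main theorem}: this is legitimate because $\mathrm{MaxCritIndep}(G)\subseteq\mathrm{Ind}(G)$ and $\mathrm{MaxCritIndep}(G)\vartriangleleft\Omega(G)$ by (ii), and the conclusion of that theorem is exactly $\left\vert\mathrm{diadem}(G)\right\vert+\left\vert\mathrm{nucleus}(G)\right\vert\leq\left\vert\mathrm{corona}(G)\right\vert+\left\vert\mathrm{core}(G)\right\vert$.

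For (iv), assume in addition $\mathrm{diadem}(G)=\mathrm{corona}(G)$. Then (iii) reads $\left\vert\mathrm{corona}(G)\right\vert+\left\vert\mathrm{nucleus}(G)\right\vert\leq\left\vert\mathrm{corona}(G)\right\vert+\left\vert\mathrm{core}(G)\right\vert$, so $\left\vert\mathrm{nucleus}(G)\right\vert\leq\left\vert\mathrm{core}(G)\right\vert$; combined with the inclusion $\mathrm{core}(G)\subseteq\mathrm{nucleus}(G)$ from (i), the two finite sets have equal cardinality with one contained in the other, hence $\mathrm{core}(G)=\mathrm{nucleus}(G)$. The only step carrying genuine content is (i); the single place to be careful there is to confirm that $\mathrm{core}(G)\cup S$ is independent, not merely that its difference equals $id(G)$, before appealing to the maximality of $S$ — and this is precisely what extending $S$ to a set $T\in\Omega(G)$ via Theorem \ref{th3} provides.
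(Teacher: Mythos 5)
Your proposal is correct and follows essentially the same route as the paper: part (i) via extending a maximum critical independent set to a member of $\Omega(G)$ by Theorem \ref{th3} to get independence of the union, criticality of the union by Theorem \ref{th4}(ii), and maximality to conclude; then (ii), (iii), (iv) exactly as in the paper via Theorem \ref{the main theorem}. Your explicit remark that one must verify independence of $\mathrm{core}(G)\cup S$ before invoking maximality is precisely the point the paper's proof also takes care of.
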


\begin{proof}
\emph{(i)} Let $A\in\mathrm{MaxCritIndep}(G)$. According to Theorem \ref{th3},
there exists some $S\in\Omega\left(  G\right)  $, such that $A\subseteq S$.
Since $\mathrm{core}(G)\subseteq S$, it follows that $A\cup\mathrm{core}%
(G)\subseteq S$, and hence $A\cup\mathrm{core}(G)$ is independent. By Theorem
\ref{th4}, we get that $A\cup\mathrm{core}(G)$ is a critical independent set.
Since $A\subseteq A\cup\mathrm{core}(G)$ and $A$ is a maximum critical
independent set, we infer that $\mathrm{core}(G)\subseteq A$. Thus,
$\mathrm{core}(G)\subseteq A$ for every $A\in\mathrm{MaxCritIndep}(G)$.
Therefore, $\mathrm{core}(G)\subseteq\mathrm{nucleus}(G)$.

\emph{(ii)} By Part \emph{(i)}, we know that $\mathrm{core}(G)\subseteq
\mathrm{nucleus}(G)$. According to Theorem \ref{th3}, every critical
independent set is included in some maximum independent set. Hence, we deduce
that $\mathrm{diadem}(G)=\bigcup\mathrm{MaxCritIndep}(G)\subseteq\bigcup
\Omega(G)=\mathrm{corona}(G)$.

\emph{(iii)} The inequality follows from Part \emph{(ii)} and Theorem
\ref{the main theorem}.

\emph{(iv)} Part \emph{(iii)} implies $\left\vert \mathrm{nucleus}%
(G)\right\vert \leq\left\vert \mathrm{core}(G)\right\vert $, and using now
Part \emph{(i)}, we obtain $\mathrm{core}(G)=\mathrm{nucleus}(G)$.
\end{proof}

\begin{corollary}
\label{corollary 4}If $\left\vert \Omega\left(  G\right)  \right\vert \leq2$
and $\mathrm{diadem}(G)=\mathrm{corona}(G)$, then $G$ is a
K\"{o}nig-Egerv\'{a}ry graph.
\end{corollary}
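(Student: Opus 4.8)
The plan is to deduce that $G$ is K\"{o}nig--Egerv\'{a}ry from the criterion of Theorem~\ref{th5} by a short neighbourhood count. The key preliminary remark is that for a maximum independent set $S$ one has $N(S)=V(G)\setminus S$ (by maximality, $S$ dominates $G$), so $d(S)=|S|-|N(S)|=2\alpha(G)-|V(G)|$; since $d(G)=\max\{d(X):X\subseteq V(G)\}\ge d(S)$ and, by Zhang's theorem, $d(G)=id(G)$, Theorem~\ref{th5} tells us that $G$ is K\"{o}nig--Egerv\'{a}ry exactly when $d(G)=2\alpha(G)-|V(G)|$. So it suffices to prove the inequality $d(G)\le 2\alpha(G)-|V(G)|$.

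First I would show that $\mathrm{corona}(G)$ is itself a critical set, i.e.\ that $d(\mathrm{corona}(G))=d(G)$. By Theorem~\ref{th4}(ii) the union of two critical sets is critical, hence so is any finite union; as $\mathrm{MaxCritIndep}(G)$ is a non-empty finite family of critical sets (each maximum critical independent set being, in particular, a set of difference $id(G)=d(G)$), its union $\mathrm{diadem}(G)$ is critical, and the hypothesis $\mathrm{diadem}(G)=\mathrm{corona}(G)$ does the rest. Then I would invoke $|\Omega(G)|\le 2$: writing $\Omega(G)=\{S_1,S_2\}$ with $S_1=S_2$ allowed, we have $\mathrm{corona}(G)=S_1\cup S_2$ and $\mathrm{core}(G)=S_1\cap S_2$, and every vertex outside $S_1\cap S_2$ --- being outside $S_1$ or outside $S_2$, and hence, by maximality, having a neighbour in that set --- lies in $N(S_1\cup S_2)$. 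Therefore $|N(\mathrm{corona}(G))|\ge |V(G)|-|\mathrm{core}(G)|$, while $|\mathrm{corona}(G)|=|S_1|+|S_2|-|\mathrm{core}(G)|=2\alpha(G)-|\mathrm{core}(G)|$, and subtracting gives $d(\mathrm{corona}(G))\le 2\alpha(G)-|V(G)|$. Combining this with the previous step, $d(G)\le 2\alpha(G)-|V(G)|$, and the preliminary remark finishes the proof.

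I do not foresee a genuine difficulty; the content of the argument is the single inclusion $N(S_1\cup S_2)\supseteq V(G)\setminus(S_1\cap S_2)$, forced by maximality of $S_1$ and $S_2$, so that $|\Omega(G)|\le 2$ makes $N(\mathrm{corona}(G))$ miss at most the vertices of $\mathrm{core}(G)$. The one thing to watch is that ``union of critical sets is critical'' must be applied to critical \emph{sets}, not to critical independent sets: $S_1\cup S_2$ is generally not independent, so $\mathrm{corona}(G)$ need not be an independent critical set, only a set realising the critical difference $d(G)$. The degenerate cases $|\Omega(G)|=1$ and $V(G)=\emptyset$ are absorbed by the same computation.
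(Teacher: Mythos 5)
Your proof is correct, and it takes a genuinely different route from the paper's. The paper splits into the cases $\left\vert \Omega(G)\right\vert =1$ and $\left\vert \Omega(G)\right\vert =2$ and, in the latter, identifies the two maximum critical independent sets with the two maximum independent sets (via Theorem~\ref{th7}\emph{(iv)} and a containment argument), finishing with Theorem~\ref{th5}\emph{(ii)}. You instead convert Theorem~\ref{th5} into the purely numerical criterion $d(G)=2\alpha(G)-\left\vert V(G)\right\vert$ (every $S\in\Omega(G)$ has $d(S)=2\alpha(G)-\left\vert V(G)\right\vert$ because $N(S)=V(G)\setminus S$, so only the upper bound on $d(G)$ is at issue), observe via Theorem~\ref{th4}\emph{(ii)} that $\mathrm{diadem}(G)$, being a finite union of critical sets, is itself critical, so that under the hypothesis $\mathrm{corona}(G)$ realises $d(G)$, and then bound $d(\mathrm{corona}(G))$ by inclusion--exclusion over two maximum independent sets together with domination. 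This buys two things: it isolates exactly where $\left\vert \Omega(G)\right\vert \leq2$ enters (for two sets $\left\vert S_{1}\cup S_{2}\right\vert =2\alpha(G)-\left\vert S_{1}\cap S_{2}\right\vert$, whereas for three or more sets Corollary~\ref{corollary 8} only gives an inequality pointing the wrong way), and it bypasses Theorem~\ref{th7} entirely --- in particular its hypothesis that $\mathrm{core}(G)$ be a critical set, which the paper's own proof invokes without explicit verification. What it gives up is the structural by-product $A_{1}=S_{1}$ that the paper's argument extracts along the way. Your caveat about applying Theorem~\ref{th4}\emph{(ii)} to critical \emph{sets} rather than to critical independent sets is exactly the right point to flag, since $\mathrm{corona}(G)$ is in general not independent, and it is consistent with how the paper itself uses that theorem.
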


\begin{proof}
If $\left\vert \Omega\left(  G\right)  \right\vert =\left\vert \left\{
S\right\}  \right\vert =1$, then $\mathrm{diadem}(G)=\mathrm{corona}(G)=S$,
and the conclusion follows from Theorem \ref{th5}.

Assume that $\Omega\left(  G\right)  =\left\{  S_{1},S_{2}\right\}  $. Since
$\mathrm{diadem}(G)=\mathrm{corona}(G)$, we infer that the family
$\mathrm{MaxCritIndep}(G)$ contains only two maximum critical independent
sets, say $A_{1}$ and $A_{2}$. By Theorem \ref{th7}\emph{(iv)}, we obtain
$\mathrm{core}(G)=\mathrm{nucleus}(G)$. According to Theorem \ref{th7}, we
have, for instance, $A_{1}\subseteq S_{1}$. Hence, $A_{2}\subseteq S_{2}$,
because, otherwise, $\mathrm{diadem}(G)=A_{1}\cup A_{2}\neq S_{1}\cup
S_{2}=\mathrm{corona}(G)$. If there is some $x\in S_{1}-A_{1}$, then $x\in
A_{2}\subseteq S_{2}$, because $S_{1}-A_{1}\subseteq S_{1}\cup S_{2}=A_{1}\cup
A_{2}$. Therefore, we deduce $x\in S_{1}\cap S_{2}=A_{1}\cap A_{2}$, which
implies $x\in A_{1}$, in contradiction with the assumption that $x\in
S_{1}-A_{1}$. Consequently, $A_{1}=S_{1}$, which ensures, by Theorem
\ref{th5}, that $G$ is a K\"{o}nig-Egerv\'{a}ry graph.
\end{proof}

Theorem \ref{th7}\emph{(i)} holds for every K\"{o}nig-Egerv\'{a}ry graph, with
equality, by Theorem \ref{th5}. The same equality is satisfied by some
non-K\"{o}nig-Egerv\'{a}ry graphs; e.g., the graph $G$ from Figure
\ref{fig51}, where
\begin{align*}
\mathrm{core}(G)  &  =\left\{  v_{1},v_{2},v_{3},v_{6},v_{7},v_{10}\right\}
\cap\left\{  v_{1},v_{2},v_{4},v_{6},v_{7},v_{10}\right\} \\
&  \cap\left\{  v_{1},v_{2},v_{3},v_{6},v_{8},v_{10}\right\}  \cap\left\{
v_{1},v_{2},v_{4},v_{6},v_{8},v_{10}\right\}  .
\end{align*}

\begin{figure}[h]
\setlength{\unitlength}{1cm}\begin{picture}(5,1.9)\thicklines
\multiput(4,0.5)(1,0){5}{\circle*{0.29}}
\multiput(3,1.5)(1,0){4}{\circle*{0.29}}
\multiput(2,0.5)(0,1){2}{\circle*{0.29}}
\put(7,1.5){\circle*{0.29}}
\put(8,1.5){\circle*{0.29}}
\put(2,0.5){\line(1,0){6}}
\put(2,1.5){\line(2,-1){2}}
\put(3,1.5){\line(1,-1){1}}
\put(3,1.5){\line(1,0){1}}
\put(4,0.5){\line(0,1){1}}
\put(4,0.5){\line(1,1){1}}
\put(5,1.5){\line(1,0){1}}
\put(6,0.5){\line(0,1){1}}
\put(8,0.5){\line(0,1){1}}
\put(7,1.5){\line(1,0){1}}
\put(7,1.5){\line(1,-1){1}}
\put(2,0.1){\makebox(0,0){$v_{1}$}}
\put(1.65,1.5){\makebox(0,0){$v_{2}$}}
\put(2.65,1.5){\makebox(0,0){$v_{3}$}}
\put(4.35,1.5){\makebox(0,0){$v_{4}$}}
\put(4,0.1){\makebox(0,0){$v_{5}$}}
\put(5,0.1){\makebox(0,0){$v_{6}$}}
\put(5,1.15){\makebox(0,0){$v_{7}$}}
\put(6,0.1){\makebox(0,0){$v_{9}$}}
\put(6.35,1.5){\makebox(0,0){$v_{8}$}}
\put(7,1.15){\makebox(0,0){$v_{11}$}}
\put(7,0.1){\makebox(0,0){$v_{10}$}}
\put(8,0.1){\makebox(0,0){$v_{12}$}}
\put(8.4,1.5){\makebox(0,0){$v_{13}$}}
\put(1,1){\makebox(0,0){$G$}}
\multiput(10,0.5)(1,0){4}{\circle*{0.29}}
\multiput(11,1.5)(1,0){2}{\circle*{0.29}}
\put(10,0.5){\line(1,0){3}}
\put(11,0.5){\line(0,1){1}}
\put(12,1.5){\line(1,-1){1}}
\put(12,0.5){\line(0,1){1}}
\put(10,0.1){\makebox(0,0){$a$}}
\put(11,0.1){\makebox(0,0){$b$}}
\put(12,0.1){\makebox(0,0){$c$}}
\put(13,0.1){\makebox(0,0){$d$}}
\put(10.65,1.5){\makebox(0,0){$e$}}
\put(11.65,1.5){\makebox(0,0){$f$}}
\put(9.3,1){\makebox(0,0){$H$}}
\end{picture}\caption{\textrm{core}$(G)=\{v_{1},v_{2},v_{6},v_{10}\}$ is a
critical set, since $d\left(  \mathrm{core}(G)\right)  =1=d\left(  G\right)
$.}%
\label{fig51}%
\end{figure}

The equality from Theorem \ref{th7}\emph{(iv)} may hold for some graphs where
$\mathrm{diadem}(G)\neq\mathrm{corona}(G)$. For instance, the graph $H$ from
Figure \ref{fig51} satisfies: $\mathrm{core}(H)=\mathrm{nucleus}(H)=\left\{
a,e\right\}  $, $\mathrm{corona}(H)=\left\{  a,e,c,d,f\right\}  $ is a
critical set, but $\mathrm{diadem}(H)=\left\{  a,e\right\}  \neq
\mathrm{corona}(H)$.

\begin{corollary}
If $G$ is a bipartite graph, then $\mathrm{\ker}(G)=\mathrm{core}%
(G)=\mathrm{nucleus}(G)$.
\end{corollary}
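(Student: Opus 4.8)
The plan is to reduce to the fact that every bipartite graph is K\"{o}nig-Egerv\'{a}ry, and then to prove the two equalities $\mathrm{core}(G)=\mathrm{nucleus}(G)$ and $\mathrm{core}(G)=\mathrm{ker}(G)$ separately; together with the always-valid inclusions $\mathrm{ker}(G)\subseteq\mathrm{core}(G)$ (Theorem \ref{th4}\emph{(i)}) and $\mathrm{ker}(G)\subseteq\mathrm{nucleus}(G)$, this will yield $\mathrm{ker}(G)=\mathrm{core}(G)=\mathrm{nucleus}(G)$.

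For $\mathrm{core}(G)=\mathrm{nucleus}(G)$ I would argue as follows. Since $G$ is K\"{o}nig-Egerv\'{a}ry, Theorem \ref{th5} shows that every maximum independent set is critical; as no independent set has more than $\alpha(G)$ vertices, the critical independent sets of maximum cardinality are precisely the members of $\Omega(G)$, that is, $\mathrm{MaxCritIndep}(G)=\Omega(G)$. Taking unions and intersections gives $\mathrm{diadem}(G)=\mathrm{corona}(G)$ and $\mathrm{nucleus}(G)=\mathrm{core}(G)$ immediately. (One may instead invoke Theorem \ref{th7}\emph{(iv)}, whose hypotheses hold here because $\mathrm{core}(G)=\bigcap\Omega(G)$ is a finite intersection of critical sets, hence critical by Theorem \ref{th4}\emph{(ii)}, and $\mathrm{diadem}(G)=\mathrm{corona}(G)$.)

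The substantive point is $\mathrm{core}(G)\subseteq\mathrm{ker}(G)$, and here bipartiteness --- not just the K\"{o}nig-Egerv\'{a}ry property --- must be used. Fix colour classes $X,Y$ with $V(G)=X\cup Y$. The first ingredient is that in a bipartite graph the difference splits over the colour classes: for $T\subseteq V(G)$, since $N(T\cap X)\subseteq Y$ and $N(T\cap Y)\subseteq X$ are disjoint, $d(T)=(|T\cap X|-|N(T\cap X)|)+(|T\cap Y|-|N(T\cap Y)|)$; hence $d(G)=d_{X}+d_{Y}$, where $d_{X}=\max\{|X_{0}|-|N(X_{0})|:X_{0}\subseteq X\}$ and $d_{Y}$ is defined symmetrically. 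The second ingredient is the defect form of Hall's (K\"{o}nig's) theorem: $d_{X}=|X|-\mu(G)$, so if $X_{0}\subseteq X$ attains $d_{X}$ then $N(X_{0})\cup(X\setminus X_{0})$ is a vertex cover of size $\mu(G)$, hence minimum, and its complement $X_{0}\cup(Y\setminus N(X_{0}))$ is a maximum independent set. Consequently a vertex $v\in X$ that lies outside some $X_{0}$ attaining $d_{X}$ also lies outside this maximum independent set, so $v\notin\mathrm{core}(G)$; equivalently, every vertex of $\mathrm{core}(G)\cap X$ lies in every $X_{0}$ attaining $d_{X}$, and symmetrically for $Y$. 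The third ingredient is that any critical independent set $A$ satisfies $d(A)=id(G)=d(G)=d_{X}+d_{Y}$, while $d(A)=(|A\cap X|-|N(A\cap X)|)+(|A\cap Y|-|N(A\cap Y)|)$ with each summand at most the corresponding maximum; equality therefore forces $A\cap X$ to attain $d_{X}$ and $A\cap Y$ to attain $d_{Y}$. Combining the last two ingredients, $\mathrm{core}(G)\cap X\subseteq A\cap X$ and $\mathrm{core}(G)\cap Y\subseteq A\cap Y$, so $\mathrm{core}(G)\subseteq A$; since $A$ was an arbitrary critical independent set, $\mathrm{core}(G)\subseteq\bigcap\{A:A\text{ is a critical independent set}\}=\mathrm{ker}(G)$.

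The hardest step is the middle one: converting the criticality of an arbitrary independent set $A$ into extremality of its two colour-class pieces, and seeing --- through the minimum-vertex-cover construction --- that a core vertex cannot be dropped from such an extremal piece. That bipartiteness is genuinely needed is illustrated by a triangle with a single pendant edge: it is K\"{o}nig-Egerv\'{a}ry with $d(G)=0$, so $\emptyset$ is a critical independent set and $\mathrm{ker}(G)=\emptyset$, whereas its core is the pendant vertex. Once $\mathrm{core}(G)\subseteq\mathrm{ker}(G)$ is established, Theorem \ref{th4}\emph{(i)} and the already-proved equality $\mathrm{core}(G)=\mathrm{nucleus}(G)$ complete the proof.
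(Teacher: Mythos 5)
Your proof is correct, and it is genuinely more self-contained than what the paper does: the paper states this corollary with no proof at all, relying on (a) the remark just before it that Theorem \ref{th7}\emph{(i)} holds with equality for every K\"{o}nig-Egerv\'{a}ry graph by Theorem \ref{th5}, and (b) the containment $\mathrm{core}(G)\subseteq\mathrm{\ker}(G)$ for bipartite graphs, which is simply quoted (it already appears, unproved, in Example \emph{(v)} and is imported from the cited work on critical sets in bipartite graphs). Your handling of $\mathrm{core}(G)=\mathrm{nucleus}(G)$ --- observing that in a K\"{o}nig-Egerv\'{a}ry graph $\mathrm{MaxCritIndep}(G)=\Omega(G)$, so nucleus and core coincide --- matches the paper's intent. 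Where you diverge is in actually proving $\mathrm{core}(G)\subseteq\mathrm{\ker}(G)$ for bipartite $G$: splitting the difference function over the colour classes (using that $N(T\cap X)\subseteq Y$ and $N(T\cap Y)\subseteq X$ are disjoint, so $d(G)=d_{X}+d_{Y}$), identifying $d_{X}=\left\vert X\right\vert -\mu(G)$ via the deficiency form of K\"{o}nig's theorem to show that a core vertex lies in every extremal $X_{0}$, and then forcing each colour-class piece of a critical independent set to be extremal. All of these steps check out, and together with $\mathrm{\ker}(G)\subseteq\mathrm{core}(G)$ from Theorem \ref{th4}\emph{(i)} they close the argument. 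What your route buys is independence from the external reference and a precise localization of where bipartiteness, as opposed to the mere K\"{o}nig-Egerv\'{a}ry property, is indispensable --- your triangle-with-pendant example makes that point correctly, since there $\mathrm{\ker}(G)=\emptyset\subsetneq\mathrm{core}(G)$.
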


The lower bound presented in the following theorem first appeared in
\cite{LevManLemma2011}.

\begin{theorem}
\label{th9} For every graph $G$
\[
2\alpha(G)\leq\left\vert \mathrm{corona}(G)\right\vert +\left\vert
\mathrm{core}(G)\right\vert \leq2\left(  \left\vert V\left(  G\right)
\right\vert -\mu\left(  G\right)  \right)  .
\]

\end{theorem}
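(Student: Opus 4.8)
The plan is to establish the two inequalities separately. The lower bound $2\alpha(G)\leq\left\vert \mathrm{corona}(G)\right\vert +\left\vert \mathrm{core}(G)\right\vert$ requires no new work: since $\mathrm{corona}(G)=\bigcup\Omega(G)$ and $\mathrm{core}(G)=\bigcap\Omega(G)$, it is exactly Corollary \ref{corollary 8} applied with $\Gamma=\Omega(G)$ (equivalently, it follows from Theorem \ref{the main theorem} comparing $\Gamma=\Omega(G)$ with $\Gamma'=\{S\}$ for $S\in\Omega(G)$). So the real content is the upper bound $\left\vert \mathrm{corona}(G)\right\vert +\left\vert \mathrm{core}(G)\right\vert \leq 2(\left\vert V(G)\right\vert -\mu(G))$.

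For the upper bound I would fix an arbitrary $S\in\Omega(G)$ and an arbitrary maximum matching $M$ of $G$, so $|M|=\mu(G)$, and extract two facts. First, $V(G)-S$ is a vertex cover of cardinality $\left\vert V(G)\right\vert -\alpha(G)$, so every edge of $M$ meets $V(G)-S$; moreover no edge of $M$ lies inside $S$ because $S$ is independent. Writing $p$ for the number of edges of $M$ having exactly one endpoint in $S$ and $q$ for the number having both endpoints in $V(G)-S$, we get $\mu(G)=p+q$, and counting the distinct $M$-endpoints that land in $V(G)-S$ gives $p+2q\leq\left\vert V(G)\right\vert -\alpha(G)$, whence $2\mu(G)=2p+2q\leq p+\left\vert V(G)\right\vert -\alpha(G)$. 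Note that $p$ is precisely the number of vertices of $S$ saturated by $M$. Second — and this is the crucial step — if $v\in\mathrm{core}(G)$ is saturated by $M$, then $M(v)\notin\mathrm{corona}(G)$: otherwise $M(v)$ would belong to some $T\in\Omega(G)$, but then $v\in\mathrm{core}(G)\subseteq T$ together with $vM(v)\in E(G)$ would contradict the independence of $T$. Since the partners $M(v)$ are pairwise distinct, the number of core vertices saturated by $M$ is at most $\left\vert V(G)-\mathrm{corona}(G)\right\vert =\left\vert V(G)\right\vert -\left\vert \mathrm{corona}(G)\right\vert$.

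Now I would combine the two facts. The $p$ saturated vertices of $S$ split into those lying in $\mathrm{core}(G)$ — at most $\left\vert V(G)\right\vert -\left\vert \mathrm{corona}(G)\right\vert$ of them by the second fact — and those lying in $S-\mathrm{core}(G)$ — at most $\left\vert S-\mathrm{core}(G)\right\vert =\alpha(G)-\left\vert \mathrm{core}(G)\right\vert$ of them. Hence $p\leq(\left\vert V(G)\right\vert -\left\vert \mathrm{corona}(G)\right\vert)+(\alpha(G)-\left\vert \mathrm{core}(G)\right\vert)$, and substituting into $2\mu(G)\leq p+\left\vert V(G)\right\vert -\alpha(G)$ yields $2\mu(G)\leq 2\left\vert V(G)\right\vert -\left\vert \mathrm{corona}(G)\right\vert -\left\vert \mathrm{core}(G)\right\vert$, which is exactly the desired upper bound after rearrangement.

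The main obstacle is the structural observation that a maximum matching must pair each saturated vertex of $\mathrm{core}(G)$ with a vertex outside $\mathrm{corona}(G)$; once that is in hand, the rest is bookkeeping around the vertex cover $V(G)-S$. It is worth noting that the naive route — feeding the lower bound together with the generic inequality $\mu(G)\leq\left\vert V(G)\right\vert -\alpha(G)$ into the target — is too lossy, so the interaction between the matching and the corona/core is genuinely needed.
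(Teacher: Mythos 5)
Your proof is correct. The lower bound is handled exactly as in the paper (via Corollary \ref{corollary 8}, i.e.\ the monotonicity of $f$), and your upper bound hinges on the same crucial structural fact as the paper's proof: if $v\in\mathrm{core}(G)$ is saturated by a maximum matching $M$, then $M(v)\notin\mathrm{corona}(G)$, and since these partners are distinct, at most $\left\vert V(G)\right\vert-\left\vert\mathrm{corona}(G)\right\vert$ core vertices are saturated. Where you differ is in the surrounding bookkeeping. The paper never fixes a maximum independent set: it splits the saturated vertices into $A=\{x:\{x,M(x)\}\subseteq\mathrm{corona}(G)\}$ and the rest $B$, observes $A\subseteq\mathrm{corona}(G)-\mathrm{core}(G)$ and that each $B$-edge consumes a vertex outside $\mathrm{corona}(G)$, and adds $\left\vert A\right\vert+\left\vert B\right\vert=2\mu(G)$. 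You instead route the count through a fixed $S\in\Omega(G)$ and the vertex cover $V(G)-S$, first deriving $2\mu(G)\leq p+\left\vert V(G)\right\vert-\alpha(G)$ where $p$ is the number of $M$-saturated vertices of $S$, and then bounding $p$ by splitting $S$ into $\mathrm{core}(G)$ and $S-\mathrm{core}(G)$. Both arguments are elementary double counts of $V(M)$; the paper's is marginally more economical (no auxiliary $S$ and no use of $\alpha(G)$ in the upper-bound half), while yours makes explicit how the bound refines the classical inequality $\mu(G)\leq\left\vert V(G)\right\vert-\alpha(G)$, which you rightly note is too lossy on its own. All intermediate steps of your argument check out.
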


\begin{proof}
Let $S\in\Omega(G)$, $\Gamma=\Omega(G)$ and $\Gamma^{\prime}=\{S\}$. As
$\Gamma^{\prime}\subseteq\Gamma$, by Corollary \ref{corollary 1}, we get
\[
2\alpha(G)=2\left\vert S\right\vert \leq\left\vert \mathrm{corona}%
(G)\right\vert +\left\vert \mathrm{core}(G)\right\vert .
\]

For a maximum matching $M$ of $G$, let $A=\left\{  x:\left\{  x,M\left(
x\right)  \right\}  \subseteq\mathrm{corona}(G)\right\}  $, and let $B$
contain all other vertices matched by $M$. Hence, there is no $S\in
\Omega\left(  G\right)  $ such that $x\in$ $S$ and $M\left(  x\right)  \in S$
at the same time. Since $\mathrm{core}(G)\subseteq S\subseteq\mathrm{corona}%
(G)$ for every $S\in\Omega\left(  G\right)  $, we infer that $A\cap
\mathrm{core}(G)=\emptyset$. Thus $A\subseteq\mathrm{corona}(G)-\mathrm{core}%
(G)$, and, consequently,
\[
\left\vert A\right\vert \leq\left\vert \mathrm{corona}(G)\right\vert
-\left\vert \mathrm{core}(G)\right\vert .
\]
On the other hand, for every $x\in B$, we have $1\leq\left\vert \left\{
x,M\left(  x\right)  \right\}  \cap\left(  V\left(  G\right)  -\mathrm{corona}%
(G)\right)  \right\vert $, and this implies%
\[
\left\vert B\right\vert \leq2(\left\vert V\left(  G\right)  \right\vert
-\left\vert \mathrm{corona}(G)\right\vert ).
\]
Consequently, we obtain%
\begin{gather*}
2\mu(G)=2\left\vert M\right\vert =\left\vert A\right\vert +\left\vert
B\right\vert \leq\\
\leq\left\vert \mathrm{corona}(G)\right\vert -\left\vert \mathrm{core}%
(G)\right\vert +2(\left\vert V\left(  G\right)  \right\vert -\left\vert
\mathrm{corona}(G)\right\vert )\\
=2\left\vert V\left(  G\right)  \right\vert -\left\vert \mathrm{corona}%
(G)\right\vert -\left\vert \mathrm{core}(G)\right\vert ,
\end{gather*}
and this completes the proof.
\end{proof}

\begin{corollary}
If $\emptyset\neq\Gamma\subseteq\Omega(G)$, then
\[
2\alpha(G)\leq\left\vert
{\displaystyle\bigcup}
\Gamma\right\vert +\left\vert
{\displaystyle\bigcap}
\Gamma\right\vert \leq2\left(  \left\vert V\left(  G\right)  \right\vert
-\mu(G)\right)  .
\]

\end{corollary}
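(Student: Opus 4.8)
The plan is to sandwich $\left\vert \bigcup\Gamma\right\vert +\left\vert \bigcap\Gamma\right\vert$ between the two stated quantities by applying the monotonicity already in hand, at its two extremes, so that both inequalities reduce to results proved above. For the lower bound, since $\Gamma\neq\emptyset$ we may fix any $S\in\Gamma$; then $\{S\}\subseteq\Gamma\subseteq\Omega(G)$, so Corollary \ref{corollary 1} yields
\[
2\alpha(G)=\left\vert S\right\vert +\left\vert S\right\vert =\left\vert \bigcup\{S\}\right\vert +\left\vert \bigcap\{S\}\right\vert \leq\left\vert \bigcup\Gamma\right\vert +\left\vert \bigcap\Gamma\right\vert .
\]
(Equivalently, this is exactly Corollary \ref{corollary 8}.)

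For the upper bound, observe that $\Gamma\subseteq\Omega(G)$, and apply Corollary \ref{corollary 1} once more, now with $\Omega(G)$ itself in the role of the larger family, to get $\left\vert \bigcup\Gamma\right\vert +\left\vert \bigcap\Gamma\right\vert \leq\left\vert \bigcup\Omega(G)\right\vert +\left\vert \bigcap\Omega(G)\right\vert$. By the very definitions of $\mathrm{corona}(G)$ and $\mathrm{core}(G)$, the right-hand side equals $\left\vert \mathrm{corona}(G)\right\vert +\left\vert \mathrm{core}(G)\right\vert$, and Theorem \ref{th9} bounds this last sum by $2\left(  \left\vert V(G)\right\vert -\mu(G)\right)$. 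Chaining the two displays gives the assertion.

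There is essentially no obstacle here: all the real work lies in Theorem \ref{the main theorem} (hence Corollary \ref{corollary 1}) and in Theorem \ref{th9}, and the present corollary is just the remark that an arbitrary nonempty $\Gamma\subseteq\Omega(G)$ lies, under plain inclusion (and a fortiori in the preorder $\vartriangleleft$), between the singleton families $\{S\}$ and $\Omega(G)$ itself. The only point deserving a word of care is that the lower-bound step needs a witness $S\in\Gamma$, which is precisely why the hypothesis $\emptyset\neq\Gamma$ is imposed.
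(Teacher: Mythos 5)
Your proof is correct and follows essentially the same route as the paper: the lower bound is Corollary \ref{corollary 8} (equivalently, Corollary \ref{corollary 1} applied to $\{S\}\subseteq\Gamma$), and the upper bound chains Corollary \ref{corollary 1} for $\Gamma\subseteq\Omega(G)$ with Theorem \ref{th9}. Nothing is missing.
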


\begin{proof}
Since $\emptyset\neq\Gamma\subseteq\Omega(G)$, we have $\Gamma\vartriangleleft
\Omega(G)$. Combining Corollary \ref{corollary 8}, Corollary \ref{corollary 1}
and Theorem \ref{th9}, we infer that
\[
2\alpha(G)\leq\left\vert
{\displaystyle\bigcup}
\Gamma\right\vert +\left\vert
{\displaystyle\bigcap}
\Gamma\right\vert \leq\left\vert \mathrm{corona}(G)\right\vert +\left\vert
\mathrm{core}(G)\right\vert \leq2\left(  \left\vert V\left(  G\right)
\right\vert -\mu\left(  G\right)  \right)  ,
\]
as claimed.
\end{proof}

Clearly, $G$ is a K\"{o}nig-Egerv\'{a}ry graph if and only if the lower and
upper bounds in Theorem \ref{th9} coincide.

The graphs from Figure \ref{fig11} satisfy:%
\begin{align*}
2\alpha(G_{1})  &  =4=\left\vert \mathrm{corona}(G_{1})\right\vert +\left\vert
\mathrm{core}(G_{1})\right\vert <2\left(  \left\vert V\left(  G_{1}\right)
\right\vert -\mu(G_{1})\right)  =6,\\
2\alpha(G_{2})  &  =6<\left\vert \mathrm{corona}(G_{2})\right\vert +\left\vert
\mathrm{core}(G_{2})\right\vert =8=2\left(  \left\vert V\left(  G_{2}\right)
\right\vert -\mu(G_{2})\right) \\
2\alpha(G_{3})  &  =12<\left\vert \mathrm{corona}(G_{3})\right\vert
+\left\vert \mathrm{core}(G_{3})\right\vert =13<2\left(  \left\vert V\left(
G_{1}\right)  \right\vert -\mu(G_{1})\right)  =14,
\end{align*}
i.e., the bounds from Theorem \ref{th9} are tight.

\begin{figure}[h]
\setlength{\unitlength}{1cm}\begin{picture}(5,1.5)\thicklines
\multiput(0.5,0)(1,0){3}{\circle*{0.29}}
\multiput(0.5,1)(2,0){2}{\circle*{0.29}}
\put(0.5,0){\line(1,0){2}}
\put(0.5,0){\line(0,1){1}}
\put(0.5,1){\line(1,-1){1}}
\put(1.5,0){\line(1,1){1}}
\put(2.5,0){\line(0,1){1}}
\put(1.5,1.4){\makebox(0,0){$G_{1}$}}
\multiput(3.5,0)(1,0){4}{\circle*{0.29}}
\multiput(3.5,1)(1,0){4}{\circle*{0.29}}
\put(3.5,0){\line(1,0){3}}
\put(3.5,0){\line(0,1){1}}
\put(3.5,1){\line(1,-1){1}}
\put(3.5,0){\line(1,1){1}}
\put(4.5,0){\line(0,1){1}}
\put(5.5,1){\line(1,0){1}}
\put(5.5,0){\line(0,1){1}}
\put(5.5,0){\line(1,1){1}}
\put(5.5,1){\line(1,-1){1}}
\put(6.5,0){\line(0,1){1}}
\put(5,1.4){\makebox(0,0){$G_{2}$}}
\multiput(7.5,0)(1,0){6}{\circle*{0.29}}
\multiput(7.5,1)(1,0){6}{\circle*{0.29}}
\put(7.5,0){\line(1,0){5}}
\put(7.5,0){\line(0,1){1}}
\put(7.5,0){\line(1,1){1}}
\put(8.5,0){\line(1,1){1}}
\put(9.5,1){\line(1,0){1}}
\put(10.5,0){\line(0,1){1}}
\put(11.5,0){\line(0,1){1}}
\put(12.5,0){\line(0,1){1}}
\put(10,1.4){\makebox(0,0){$G_{3}$}}
\end{picture}\caption{$G_{1},G_{2}$ and $G_{3}$ are non-K\"{o}nig-Egerv\'{a}ry
graphs.}%
\label{fig11}%
\end{figure}

\begin{remark}
For each $n\geq1$, the graph $K_{2n}$ satisfies $\left\vert \mathrm{corona}%
(K_{2n})\right\vert +\left\vert \mathrm{core}(K_{2n})\right\vert =2n=2\left(
\left\vert V\left(  K_{2n}\right)  \right\vert -\mu(K_{2n})\right)  $.
\end{remark}

\begin{remark}
Let $G$ be the graph obtained by joining to pendant vertices to one of the
vertices of $K_{2n+1}$. Then $\left\vert \mathrm{corona}(G)\right\vert
+\left\vert \mathrm{core}(G)\right\vert =2+2n=2\left(  \left\vert G\right\vert
-\mu(G)\right)  $.
\end{remark}

The graphs from Figure \ref{fig111} satisfy:%
\begin{align*}
2\alpha(G_{1})  &  =4=\left\vert \mathrm{corona}(G_{1})\right\vert +\left\vert
\mathrm{core}(G_{1})\right\vert <2\left(  \left\vert V\left(  G_{1}\right)
\right\vert -\mu(G_{1})\right)  =6,\\
2\alpha(G_{2})  &  =6<\left\vert \mathrm{corona}(G_{2})\right\vert +\left\vert
\mathrm{core}(G_{2})\right\vert =8=2\left(  \left\vert V\left(  G_{2}\right)
\right\vert -\mu(G_{2})\right) \\
2\alpha(G_{3})  &  =8<\left\vert \mathrm{corona}(G_{3})\right\vert +\left\vert
\mathrm{core}(G_{3})\right\vert =9<2\left(  \left\vert V\left(  G_{1}\right)
\right\vert -\mu(G_{1})\right)  =11,
\end{align*}
i.e., the bounds from Theorem \ref{th9} are tight.

\begin{figure}[h]
\setlength{\unitlength}{1cm}\begin{picture}(5,1)\thicklines
\multiput(1.5,0)(1,0){3}{\circle*{0.29}}
\multiput(1.5,1)(2,0){2}{\circle*{0.29}}
\put(1.5,0){\line(1,0){2}}
\put(1.5,0){\line(0,1){1}}
\put(1.5,1){\line(1,-1){1}}
\put(2.5,0){\line(1,1){1}}
\put(3.5,0){\line(0,1){1}}
\put(0.5,0.5){\makebox(0,0){$G_{1}$}}
\multiput(5.5,0)(1,0){4}{\circle*{0.29}}
\multiput(5.5,1)(1,0){4}{\circle*{0.29}}
\put(5.5,0){\line(1,0){3}}
\put(5.5,0){\line(0,1){1}}
\put(5.5,1){\line(1,-1){1}}
\put(5.5,0){\line(1,1){1}}
\put(6.5,0){\line(0,1){1}}
\put(7.5,1){\line(1,0){1}}
\put(7.5,0){\line(0,1){1}}
\put(7.5,0){\line(1,1){1}}
\put(7.5,1){\line(1,-1){1}}
\put(8.5,0){\line(0,1){1}}
\put(4.5,0.5){\makebox(0,0){$G_{2}$}}
\multiput(10.5,0)(1,0){4}{\circle*{0.29}}
\multiput(10.5,1)(1,0){4}{\circle*{0.29}}
\put(10.5,0){\line(1,0){3}}
\put(10.5,0){\line(0,1){1}}
\put(10.5,0){\line(1,1){1}}
\put(11.5,0){\line(1,1){1}}
\put(12.5,1){\line(1,0){1}}
\put(13.5,0){\line(0,1){1}}
\put(9.5,0.5){\makebox(0,0){$G_{3}$}}
\end{picture}\caption{$G_{1},G_{2}$ and $G_{3}$ are non-K\"{o}nig-Egerv\'{a}ry
graphs.}%
\label{fig111}%
\end{figure}

\begin{corollary}
\label{the fact about KE}\cite{LevManLemma2011} If $G$ is a
K\"{o}nig-Egerv\'{a}ry graph, then $\left\vert \mathrm{corona}(G)\right\vert
+\left\vert \mathrm{core}(G)\right\vert =2\alpha(G)$.
\end{corollary}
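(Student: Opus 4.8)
The plan is to read the statement off directly from Theorem~\ref{th9}. That theorem already establishes, for an arbitrary graph $G$, the chain
\[
2\alpha(G)\le\left\vert \mathrm{corona}(G)\right\vert +\left\vert \mathrm{core}(G)\right\vert \le 2\left(\left\vert V(G)\right\vert -\mu(G)\right),
\]
so the only additional ingredient needed is the hypothesis that $G$ is a K\"{o}nig-Egerv\'{a}ry graph, that is, $\alpha(G)+\mu(G)=\left\vert V(G)\right\vert$. Substituting $\left\vert V(G)\right\vert -\mu(G)=\alpha(G)$ into the right-hand end of the chain collapses the upper bound to $2\alpha(G)$, which coincides with the lower bound; hence both inequalities are forced to be equalities, and in particular $\left\vert \mathrm{corona}(G)\right\vert +\left\vert \mathrm{core}(G)\right\vert =2\alpha(G)$.

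There is essentially no obstacle to overcome here: all the substantive content is already packaged inside Theorem~\ref{th9}, whose upper half rests on a matching argument sorting the edges of a maximum matching according to whether both endpoints lie in $\mathrm{corona}(G)$, and whose lower half is Corollary~\ref{corollary 1} applied to $\Gamma^{\prime}=\{S\}\subseteq\Gamma=\Omega(G)$ for any $S\in\Omega(G)$. I would therefore present the proof simply as the short computation above, flagging only that $\left\vert V(G)\right\vert -\mu(G)=\alpha(G)$ is precisely the defining equation of K\"{o}nig-Egerv\'{a}ry graphs; as its label advertises, the statement is a corollary of Theorem~\ref{th9} and needs no separate machinery.

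For completeness, a derivation bypassing Theorem~\ref{th9} is also available: by Theorem~\ref{th5} every member of $\Omega(G)$ is critical when $G$ is K\"{o}nig-Egerv\'{a}ry, so by Theorem~\ref{th4}\emph{(ii)} both $\mathrm{core}(G)=\bigcap\Omega(G)$ and $\mathrm{corona}(G)=\bigcup\Omega(G)$ are critical, and one can then compare the differences $d(\mathrm{core}(G))$ and $d(\mathrm{corona}(G))$ with the common value $d(G)=id(G)$. This route, however, reconstructs part of the work already done for Theorem~\ref{th9} and is strictly longer, so the one-line deduction is the one I would actually write down.
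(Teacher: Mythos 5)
Your one-line deduction from Theorem~\ref{th9} is exactly the paper's own proof: substitute $\left\vert V(G)\right\vert-\mu(G)=\alpha(G)$ into the upper bound so that the chain of inequalities collapses to the equality $\left\vert \mathrm{corona}(G)\right\vert+\left\vert \mathrm{core}(G)\right\vert=2\alpha(G)$. The argument is correct and no further machinery is needed.
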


\begin{proof}
Since $G$ is a K\"{o}nig-Egerv\'{a}ry graph, we have $\alpha(G)=\left\vert
V\left(  G\right)  \right\vert -\mu(G)$, and according to Theorem \ref{th9},
we get
\[
2\alpha(G)\leq\left\vert \mathrm{corona}(G)\right\vert +\left\vert
\mathrm{core}(G)\right\vert \leq2\left(  \left\vert V\left(  G\right)
\right\vert -\mu(G)\right)  =2\alpha(G),
\]
and this completes the proof.
\end{proof}

It is known that $\left\vert V\left(  G\right)  \right\vert -1\leq
\alpha(G)+\mu\left(  G\right)  \leq\left\vert V\left(  G\right)  \right\vert $
for every unicyclic graph \cite{LevMan2012c}.

\begin{theorem}
\label{th10}\cite{LevMan2014} If $G$ is a unicyclic graph, then
\[
2\alpha(G)\leq\left\vert \mathrm{corona}(G)\right\vert +\left\vert
\mathrm{core}(G)\right\vert \leq2\alpha(G)+1.
\]

\end{theorem}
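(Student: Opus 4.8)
The lower bound $2\alpha(G)\le\vert\mathrm{corona}(G)\vert+\vert\mathrm{core}(G)\vert$ is already furnished by Theorem \ref{th9} (equivalently, by the case $\Gamma=\Omega(G)$ of Corollary \ref{corollary 8}), so the content of the statement is the upper bound. The plan is to branch on whether $G$ is a K\"{o}nig-Egerv\'{a}ry graph. If it is, Corollary \ref{the fact about KE} gives $\vert\mathrm{corona}(G)\vert+\vert\mathrm{core}(G)\vert=2\alpha(G)$, which is even stronger. If $G$ is not K\"{o}nig-Egerv\'{a}ry, then its unique cycle $C$ is odd (a unicyclic graph whose cycle is even is bipartite, hence K\"{o}nig-Egerv\'{a}ry), and the inequality $\vert V(G)\vert-1\le\alpha(G)+\mu(G)$ recalled just above, together with non-K\"{o}nig-Egerv\'{a}ryness, forces $\alpha(G)+\mu(G)=\vert V(G)\vert-1$; so Theorem \ref{th9} already yields $\vert\mathrm{corona}(G)\vert+\vert\mathrm{core}(G)\vert\le 2(\vert V(G)\vert-\mu(G))=2\alpha(G)+2$. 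Everything therefore reduces to excluding the equality $\vert\mathrm{corona}(G)\vert+\vert\mathrm{core}(G)\vert=2\alpha(G)+2$.

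Assume, towards a contradiction, that $\vert\mathrm{corona}(G)\vert+\vert\mathrm{core}(G)\vert=2\alpha(G)+2=2(\vert V(G)\vert-\mu(G))$. The first (and, I expect, main) step is the structural claim that $V(C)\subseteq\mathrm{corona}(G)\setminus\mathrm{core}(G)$. Fix an edge $e=xy$ of $C$; then $H:=G-e$ is a tree, hence K\"{o}nig-Egerv\'{a}ry, and $\alpha(H)\in\{\alpha(G),\alpha(G)+1\}$. If $\alpha(H)=\alpha(G)$, then every maximum independent set of $G$ is a maximum independent set of $H$, so $\Omega(G)\subseteq\Omega(H)$, whence $\bigcup\Omega(G)\subseteq\bigcup\Omega(H)$ and $\bigcap\Omega(H)\subseteq\bigcap\Omega(G)$, i.e. $\Omega(G)\vartriangleleft\Omega(H)$; since also $\Omega(G)\subseteq\mathrm{Ind}(H)$, Theorem \ref{the main theorem} applied inside the graph $H$ gives $\vert\mathrm{corona}(G)\vert+\vert\mathrm{core}(G)\vert\le\vert\mathrm{corona}(H)\vert+\vert\mathrm{core}(H)\vert=2\alpha(H)=2\alpha(G)$ (using Corollary \ref{the fact about KE}), contradicting our assumption. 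Hence $\alpha(G-e)=\alpha(G)+1$ for every edge $e$ of $C$. Now any $T\in\Omega(G-e)$ has $\alpha(G)+1$ vertices, so it is not independent in $G$, which forces $x,y\in T$; then $T\setminus\{y\}$ and $T\setminus\{x\}$ are maximum independent sets of $G$, the former showing $x\in\mathrm{corona}(G)$ and the latter showing $x\notin\mathrm{core}(G)$ (and symmetrically for $y$). Since every vertex of $C$ is an endpoint of some edge of $C$, the claim follows.

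The second step reads off, from the proof of Theorem \ref{th9}, what the assumed equality forces. For an arbitrary maximum matching $M$ of $G$, that proof exhibits $A=\{v:\{v,M(v)\}\subseteq\mathrm{corona}(G)\}$ with $A\subseteq\mathrm{corona}(G)\setminus\mathrm{core}(G)$, and the chain $2\mu(G)=\vert A\vert+\vert B\vert\le(\vert\mathrm{corona}(G)\vert-\vert\mathrm{core}(G)\vert)+2(\vert V(G)\vert-\vert\mathrm{corona}(G)\vert)$; the assumed equality makes this chain tight, so $\vert A\vert=\vert\mathrm{corona}(G)\vert-\vert\mathrm{core}(G)\vert$ and hence $A=\mathrm{corona}(G)\setminus\mathrm{core}(G)$. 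In particular every vertex of $\mathrm{corona}(G)\setminus\mathrm{core}(G)$ is saturated by $M$; by the first step, every vertex of $C$ is saturated by $M$ — and $M$ was an arbitrary maximum matching. To finish, take any $v\in V(C)$. Since $v\notin\mathrm{core}(G)$, some maximum independent set of $G$ avoids $v$, so $\alpha(G-v)=\alpha(G)$; moreover $G-v$ is a forest, hence K\"{o}nig-Egerv\'{a}ry, so $\alpha(G-v)+\mu(G-v)=\vert V(G)\vert-1=\alpha(G)+\mu(G)$, whence $\mu(G-v)=\mu(G)$. A maximum matching $N$ of $G-v$ is then a maximum matching of $G$ that does not saturate $v$, contradicting the previous sentence. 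This contradiction rules out the equality, so $\vert\mathrm{corona}(G)\vert+\vert\mathrm{core}(G)\vert\le 2\alpha(G)+1$, completing the proof.

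The hard part is the first step: it is the only place where unicyclicity is genuinely exploited (through $G-e$ being a tree), and it requires care about which graph plays the role of the ambient one when invoking Theorem \ref{the main theorem} — since a maximum independent set of $G-e$ need not be independent in $G$, one can only apply the theorem inside $H=G-e$ with $\Gamma^{\prime}=\Omega(G)$ and $\Gamma=\Omega(H)$, not the reverse. Once $V(C)\subseteq\mathrm{corona}(G)\setminus\mathrm{core}(G)$ is in hand, the remaining matching argument is short, relying only on $G-v$ being a K\"{o}nig-Egerv\'{a}ry forest and on the sharpened reading of the proof of Theorem \ref{th9}.
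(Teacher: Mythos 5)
Theorem \ref{th10} is imported into this paper from \cite{LevMan2014} and stated without proof, so there is no internal argument to measure yours against; judged on its own terms, your proof is correct and, pleasingly, is assembled almost entirely from the paper's own machinery. The reduction is sound: the lower bound is Theorem \ref{th9}, the K\"{o}nig-Egerv\'{a}ry case is Corollary \ref{the fact about KE}, and in the remaining case the cycle $C$ is odd and $\alpha(G)+\mu(G)=\left\vert V(G)\right\vert -1$ turns the upper bound of Theorem \ref{th9} into $2\alpha(G)+2$, so only the extremal case needs to be excluded. Your Step 1 is the genuine new content and it checks out: if $\alpha(G-e)=\alpha(G)$ for some cycle edge $e$, then $\Omega(G)\subseteq\Omega(G-e)$ and the monotonicity of $f$ applied inside the tree $G-e$ (together with Corollary \ref{the fact about KE}) caps $\left\vert \mathrm{corona}(G)\right\vert +\left\vert \mathrm{core}(G)\right\vert$ at $2\alpha(G)$, a contradiction; hence $\alpha(G-e)=\alpha(G)+1$, any $T\in\Omega(G-e)$ contains both endpoints of $e$, and $T-\left\{ x\right\}$, $T-\left\{ y\right\}$ witness $V(C)\subseteq\mathrm{corona}(G)-\mathrm{core}(G)$. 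Your Step 2 correctly observes that equality at $2\left(\left\vert V(G)\right\vert -\mu(G)\right)$ forces the chain in the proof of Theorem \ref{th9} to be tight, so $A=\mathrm{corona}(G)-\mathrm{core}(G)$ and every maximum matching saturates all of $V(C)$; this collides with the maximum matching of the K\"{o}nig-Egerv\'{a}ry forest $G-v$ for $v\in V(C)$, which misses $v$ yet has size $\mu(G)$ because $\alpha(G-v)=\alpha(G)$ and both graphs satisfy $\alpha+\mu=\left\vert V(G)\right\vert -1$. The one point worth making explicit in a final write-up is that you use connectivity of the unicyclic graph so that $G-e$ is a tree and $G-v$ is a forest (the argument survives verbatim for a disconnected graph with exactly one cycle, since forests suffice), and that you silently use $\mathrm{core}(G)\subseteq\mathrm{corona}(G)$ when converting $\left\vert \mathrm{corona}(G)\right\vert -\left\vert \mathrm{core}(G)\right\vert$ into $\left\vert \mathrm{corona}(G)-\mathrm{core}(G)\right\vert$; both are harmless.
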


By Corollary \ref{the fact about KE} and Theorem \ref{th10} we know that every
unicyclic non-K\"{o}nig-Egerv\'{a}ry graph satisfies the equalities
$\left\vert V\left(  G\right)  \right\vert -1=\alpha(G)+\mu\left(  G\right)  $
and $\left\vert \mathrm{corona}(G)\right\vert +\left\vert \mathrm{core}%
(G)\right\vert =2\alpha(G)+1$. Consequently,
\[
2\alpha(G)+1=2\left(  \left\vert V\left(  G\right)  \right\vert -\mu\left(
G\right)  \right)  -1<2\left(  \left\vert V\left(  G\right)  \right\vert
-\mu(G)\right)  ,
\]
which improves on the upper bound in Theorem \ref{th9}, in the case of
unicyclic graphs.

\begin{corollary}
If $G$ is a K\"{o}nig-Egerv\'{a}ry graph, then $\left\vert \mathrm{diadem}%
(G)\right\vert +\left\vert \mathrm{nucleus}(G)\right\vert =2\alpha(G)$.
\end{corollary}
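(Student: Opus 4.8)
The plan is to reduce the statement to Corollary \ref{the fact about KE} by showing that for a König-Egerváry graph the family $\mathrm{MaxCritIndep}(G)$ coincides with $\Omega(G)$ itself.

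First I would record the cardinality of a maximum critical independent set. By Theorem \ref{th3} every critical independent set extends to a maximum independent set, so no critical independent set has more than $\alpha(G)$ vertices. On the other hand, since $G$ is a König-Egerváry graph, Theorem \ref{th5} guarantees that some maximum independent set $S$ is critical; as $\left\vert S\right\vert = \alpha(G)$, the bound $\alpha(G)$ is attained. Hence a maximum critical independent set has exactly $\alpha(G)$ vertices.

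Next I would identify the two families. If $A\in\mathrm{MaxCritIndep}(G)$, then $\left\vert A\right\vert = \alpha(G)$ by the previous step, so $A\in\Omega(G)$. Conversely, if $S\in\Omega(G)$, then $S$ is critical by Theorem \ref{th5}, and it has the largest possible size $\alpha(G)$ among critical independent sets, so $S\in\mathrm{MaxCritIndep}(G)$. Therefore $\mathrm{MaxCritIndep}(G)=\Omega(G)$, which immediately yields $\mathrm{diadem}(G)=\bigcup\mathrm{MaxCritIndep}(G)=\bigcup\Omega(G)=\mathrm{corona}(G)$ and $\mathrm{nucleus}(G)=\bigcap\mathrm{MaxCritIndep}(G)=\bigcap\Omega(G)=\mathrm{core}(G)$.

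Finally, Corollary \ref{the fact about KE} states that $\left\vert\mathrm{corona}(G)\right\vert+\left\vert\mathrm{core}(G)\right\vert=2\alpha(G)$ for every König-Egerváry graph, and substituting the two equalities just obtained gives $\left\vert\mathrm{diadem}(G)\right\vert+\left\vert\mathrm{nucleus}(G)\right\vert=2\alpha(G)$. I do not anticipate a genuine obstacle here; the only point needing care is that the equality ``the maximum size of a critical independent set equals $\alpha(G)$'' is precisely the feature that distinguishes the König-Egerváry case, and it is what forces the two families to coincide. In general it fails — for a non-König-Egerváry graph $\mathrm{MaxCritIndep}(G)$ may be a proper subfamily of, or even disjoint from, $\Omega(G)$ — so Theorem \ref{th5} is doing the essential work.
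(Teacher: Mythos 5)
Your proof is correct, and it takes a genuinely more direct route than the paper. The paper's own proof invokes Theorem \ref{th5} to get that $\mathrm{core}(G)$ is critical and $\mathrm{diadem}(G)=\mathrm{corona}(G)$, then passes through Theorem \ref{th7}\emph{(iv)} to conclude $\mathrm{core}(G)=\mathrm{nucleus}(G)$ before applying Corollary \ref{the fact about KE}; note that Theorem \ref{th7} itself rests on the matching/monotonicity machinery of Theorem \ref{the main theorem}. You instead prove the stronger and cleaner statement that $\mathrm{MaxCritIndep}(G)=\Omega(G)$ for a K\"{o}nig-Egerv\'{a}ry graph --- every critical independent set has at most $\alpha(G)$ vertices (this is immediate from independence alone, so Theorem \ref{th3} is not even needed there), and by Theorem \ref{th5} every maximum independent set is critical, so the two families coincide. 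This gives $\mathrm{diadem}(G)=\mathrm{corona}(G)$ and $\mathrm{nucleus}(G)=\mathrm{core}(G)$ simultaneously and reduces everything to Corollary \ref{the fact about KE}, bypassing Theorem \ref{th7} entirely. What your approach buys is self-containedness and a sharper intermediate fact (equality of the families, not merely of their unions and intersections); what the paper's route buys is an illustration of its monotonicity theorem. One small quibble with your closing remark: for a non-K\"{o}nig-Egerv\'{a}ry graph, Theorem \ref{th5} says no maximum independent set is critical, so $\mathrm{MaxCritIndep}(G)$ and $\Omega(G)$ are in fact always disjoint --- the ``proper subfamily'' alternative you mention cannot occur. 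This does not affect the proof.
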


\begin{proof}
By Theorem \ref{th5}, we have that $\mathrm{diadem}(G)=\mathrm{corona}(G)$ and
$\mathrm{core}(G)$ is critical. Combining Theorem \ref{th7}\emph{(iv)} and
Corollary \ref{the fact about KE}, we get the result.
\end{proof}

\begin{corollary}
\label{2alpha in KE} If $G$ is a K\"{o}nig-Egerv\'{a}ry graph, and
$\emptyset\neq\Gamma\subseteq\Omega\left(  G\right)  $, then
\[
\left\vert \bigcup\Gamma\right\vert +\left\vert \bigcap\Gamma\right\vert
=2\alpha(G).
\]

\end{corollary}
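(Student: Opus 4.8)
The plan is to sandwich the quantity $\left\vert\bigcup\Gamma\right\vert+\left\vert\bigcap\Gamma\right\vert$ between $2\alpha(G)$ and itself. First I would invoke Corollary~\ref{the fact about KE}: since $G$ is a K\"{o}nig-Egerv\'{a}ry graph, $\left\vert\mathrm{corona}(G)\right\vert+\left\vert\mathrm{core}(G)\right\vert=2\alpha(G)$. Recalling that $\mathrm{corona}(G)=\bigcup\Omega(G)$ and $\mathrm{core}(G)=\bigcap\Omega(G)$, this reads $\left\vert\bigcup\Omega(G)\right\vert+\left\vert\bigcap\Omega(G)\right\vert=2\alpha(G)$, and this is the only place where the K\"{o}nig-Egerv\'{a}ry hypothesis is needed.

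Next, since $\emptyset\neq\Gamma\subseteq\Omega(G)$, I would apply Corollary~\ref{corollary 1} to the inclusion $\Gamma\subseteq\Omega(G)$ to get the upper bound
\[
\left\vert\bigcup\Gamma\right\vert+\left\vert\bigcap\Gamma\right\vert\le\left\vert\bigcup\Omega(G)\right\vert+\left\vert\bigcap\Omega(G)\right\vert=2\alpha(G).
\]
For the matching lower bound I would cite Corollary~\ref{corollary 8}, which gives $2\alpha(G)\le\left\vert\bigcup\Gamma\right\vert+\left\vert\bigcap\Gamma\right\vert$ for every non-empty $\Gamma\subseteq\Omega(G)$ with no assumption on $G$. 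Combining the two inequalities forces equality, which is exactly the claim.

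There is no genuine obstacle here; the statement is essentially the ``only if'' half of the unnamed corollary following Corollary~\ref{corollary 1}, supplied with the fact $\left\vert\mathrm{corona}(G)\right\vert+\left\vert\mathrm{core}(G)\right\vert=2\alpha(G)$ from Corollary~\ref{the fact about KE}. The only point that requires a moment's care is bookkeeping: one must use the K\"{o}nig-Egerv\'{a}ry property precisely once, to pin down the value at $\Gamma=\Omega(G)$, because for a general graph only the lower estimate of Corollary~\ref{corollary 8} survives, while the upper estimate of Corollary~\ref{corollary 1} merely bounds $\left\vert\bigcup\Gamma\right\vert+\left\vert\bigcap\Gamma\right\vert$ by $\left\vert\mathrm{corona}(G)\right\vert+\left\vert\mathrm{core}(G)\right\vert$, which need not equal $2\alpha(G)$.
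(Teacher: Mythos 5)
Your proof is correct and is essentially the paper's own argument: the paper also sandwiches $f(\Gamma)$ between $f(\{S\})=2\alpha(G)$ for $S\in\Gamma$ and $f(\Omega(G))=\left\vert\mathrm{corona}(G)\right\vert+\left\vert\mathrm{core}(G)\right\vert=2\alpha(G)$, invoking Theorem \ref{the main theorem} and Corollary \ref{the fact about KE}. Citing Corollaries \ref{corollary 8} and \ref{corollary 1} instead of the main theorem directly is only a cosmetic difference, since both corollaries are specializations of it.
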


\begin{proof}
Let $S\in\Gamma$ and define $\Gamma^{\prime}=\{S\}$. Hence, we have
$\Gamma^{\prime}\vartriangleleft\Gamma\vartriangleleft\Omega\left(  G\right)
$. By Theorem \ref{the main theorem} and Corollary \ref{the fact about KE}, we
obtain
\[
2\alpha(G)=f\left(  \Gamma^{\prime}\right)  \leq f\left(  \Gamma\right)
=\left\vert \bigcup\Gamma\right\vert +\left\vert \bigcap\Gamma\right\vert
\leq\left\vert \mathrm{corona}(G)\right\vert +\left\vert \mathrm{core}%
(G)\right\vert =2\alpha(G),
\]
which clearly implies $\left\vert \bigcup\Gamma\right\vert +\left\vert
\bigcap\Gamma\right\vert =2\alpha(G)$.
\end{proof}

Let us notice that the converse of Corollary \ref{2alpha in KE} is not
necessarily true. For instance, the graphs $G_{1}$ and $G_{2}$ from Figure
\ref{fig244}, clearly, both satisfy: $\left\vert \bigcup\Gamma\right\vert
+\left\vert \bigcap\Gamma\right\vert =2\alpha(G)$ for every $\emptyset
\neq\Gamma\subseteq\Omega\left(  G\right)  $, but none is a
K\"{o}nig-Egerv\'{a}ry graph.\begin{figure}[h]
\setlength{\unitlength}{1cm}\begin{picture}(5,1.7)\thicklines
\multiput(4,0.5)(1,0){4}{\circle*{0.29}}
\multiput(4,1.5)(2,0){2}{\circle*{0.29}}
\put(7,1.5){\circle*{0.29}}
\put(4,0.5){\line(1,0){3}}
\put(4,0.5){\line(0,1){1}}
\put(4,1.5){\line(1,-1){1}}
\put(4,1.5){\line(2,-1){2}}
\put(4,0.5){\line(2,1){2}}
\put(5,0.5){\line(1,1){1}}
\put(6,0.5){\line(0,1){1}}
\put(7,0.5){\line(0,1){1}}
\qbezier(4,0.5)(5,-0.3)(6,0.5)
\put(3.7,1.5){\makebox(0,0){$a$}}
\put(6.3,1.5){\makebox(0,0){$b$}}
\put(7.3,1.5){\makebox(0,0){$c$}}
\put(7.3,0.5){\makebox(0,0){$d$}}
\put(3,1){\makebox(0,0){$G_{1}$}}
\multiput(9.5,0.5)(1,0){3}{\circle*{0.29}}
\multiput(10.5,1.5)(1,0){2}{\circle*{0.29}}
\put(9.5,0.5){\line(1,0){2}}
\put(9.5,0.5){\line(1,1){1}}
\put(9.5,0.5){\line(2,1){2}}
\put(10.5,0.5){\line(0,1){1}}
\put(10.5,0.5){\line(1,1){1}}
\put(10.5,1.5){\line(1,-1){1}}
\put(10.5,1.5){\line(1,0){1}}
\put(11.5,0.5){\line(0,1){1}}
\put(9.15,0.5){\makebox(0,0){$u$}}
\put(11.8,0.5){\makebox(0,0){$v$}}
\put(8.5,1){\makebox(0,0){$G_{2}$}}
\end{picture}\caption{$\Omega\left(  G_{1}\right)  =\left\{  \left\{
a,b,c\right\}  ,\left\{  a,b,d\right\}  \right\}  $, while $\Omega\left(
G_{2}\right)  =\left\{  u,v\right\}  $.}%
\label{fig244}%
\end{figure}

\section{A characterization of K\"{o}nig-Egerv\'{a}ry graphs}

\begin{theorem}
\label{perfect matching}If $\Gamma\subseteq\Omega\left(  G\right)  $ and
$\left\vert
{\displaystyle\bigcup}
\Gamma\right\vert +\left\vert
{\displaystyle\bigcap}
\Gamma\right\vert =2\alpha(G)$, then

\emph{(i)} there is a perfect matching in $G\left[
{\displaystyle\bigcup}
\Gamma-%
{\displaystyle\bigcap}
\Gamma\right]  $;

\emph{(ii) }$\left\vert
{\displaystyle\bigcup}
\Gamma\right\vert -\left\vert
{\displaystyle\bigcap}
\Gamma\right\vert =2\mu\left(  G\left[
{\displaystyle\bigcup}
\Gamma\right]  \right)  $;

\emph{(iii)} $\alpha(G\left[
{\displaystyle\bigcup}
\Gamma\right]  )=\alpha(G)$;

\emph{(iv) }$G\left[
{\displaystyle\bigcup}
\Gamma\right]  $ is a K\"{o}nig-Egerv\'{a}ry graph.
\end{theorem}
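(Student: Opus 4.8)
The plan is to set $U = \bigcup\Gamma$, $C = \bigcap\Gamma$, and $\alpha = \alpha(G)$, and to exploit the hypothesis $\lvert U\rvert + \lvert C\rvert = 2\alpha$ via a matching argument inside $G[U]$. First I would observe that the Matching Lemma (Lemma \ref{MatchingLemma}), applied with $A = S$ for any fixed $S \in \Gamma$ and $\Lambda = \Gamma$, produces a matching from $S - C$ into $U - S$; reversing roles, taking $A$ to be another member of $\Gamma$ is not quite enough, so instead I would argue that for \emph{every} vertex $v \in U - C$ there is some $S_v \in \Gamma$ with $v \notin S_v$, hence (applying the lemma to $A = S_v$) $v$ is matched — but a cleaner route is to count. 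The key inequality is this: if $M$ is a maximum matching of $G[U]$, then the unmatched vertices of $G[U]$ form an independent set, and together with the observation that $C$ itself is independent and $C \subseteq U$, one wants to build a large independent set in $G[U]$.

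For part \emph{(i)}, the clean approach is: $G[U]$ has $\lvert U\rvert$ vertices, and I claim $\alpha(G[U]) = \alpha$ (this is part \emph{(iii)}, which I would prove first). Indeed $\alpha(G[U]) \geq \lvert S\rvert = \alpha$ since $S \subseteq U$ for any $S \in \Gamma$; and $\alpha(G[U]) \leq \alpha(G) = \alpha$ trivially since $G[U]$ is an induced subgraph. So $\alpha(G[U]) = \alpha$. Now by König–Egerváry-type counting: $\mu(G[U]) \leq \lvert U\rvert - \alpha(G[U]) = \lvert U\rvert - \alpha = \lvert U\rvert - \tfrac12(\lvert U\rvert + \lvert C\rvert) = \tfrac12(\lvert U\rvert - \lvert C\rvert) = \tfrac12\lvert U - C\rvert$ (using $C \subseteq U$). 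For the reverse inequality I would produce an explicit matching of size $\tfrac12\lvert U - C\rvert$ saturating $U - C$: apply Lemma \ref{MatchingLemma} to get, for a fixed $S \in \Gamma$, a matching from $S - C$ into $U - S$; but to saturate \emph{all} of $U - C$ I would instead invoke Theorem \ref{matching} with $\Gamma' = \Gamma$ itself (so $\bigcap\Gamma' = C$, $\bigcup\Gamma' = U$), which gives a matching from $C - C = \emptyset$ — that is vacuous, so this needs the sharper observation that the \emph{union} of the matchings guaranteed over all choices, or a direct argument, saturates $U-C$. The honest fix: since $\lvert U - C\rvert = 2(\lvert U\rvert - \alpha) \geq 2\mu(G[U]) \cdot$ is already forced to be an equality by a parity/counting squeeze once we know a matching saturating $U-C$ exists, and such a matching exists because the Matching Lemma applied to each $S \in \Gamma$ collectively covers $U - C$ — I expect this to be the main obstacle and would handle it by taking a maximum matching $M$ of $G[U]$ and showing its set of exposed vertices $T$ satisfies $T \subseteq C$: if $v \in T \setminus C$, pick $S \in \Gamma$ with $v \notin S$; then $S$ is an independent set in $G[U]$ avoiding $v$, and $T$ being independent, a swap argument (or the Gallai–Edmonds structure, or directly the Matching Lemma matching $S - C$ into $U - S$) contradicts maximality of $M$. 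Hence $T \subseteq C$, and since $C$ is independent and disjoint from $V(M)$-complement claims, we get $\lvert U\rvert - 2\mu(G[U]) = \lvert T\rvert \leq \lvert C\rvert$, forcing $\lvert T\rvert = \lvert C\rvert$ and $T = C$. This simultaneously gives \emph{(i)} (the matching $M$ restricted to $U - C$ is perfect on $G[U - C]$, since it saturates exactly $U - T = U - C$) and \emph{(ii)} ($\lvert U\rvert - \lvert C\rvert = 2\mu(G[U])$, because $2\mu(G[U]) = \lvert U\rvert - \lvert T\rvert = \lvert U\rvert - \lvert C\rvert$).

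Finally, part \emph{(iv)} is immediate from \emph{(ii)} and \emph{(iii)}: $\alpha(G[U]) + \mu(G[U]) = \alpha + \tfrac12(\lvert U\rvert - \lvert C\rvert) = \tfrac12(\lvert U\rvert + \lvert C\rvert) + \tfrac12(\lvert U\rvert - \lvert C\rvert) = \lvert U\rvert = \lvert V(G[U])\rvert$, which is exactly the defining equality of a König–Egerváry graph. The logical order I would follow is: \emph{(iii)} first (easy, bounding $\alpha(G[U])$ from both sides), then the structural claim $T = C$ for a maximum matching of $G[U]$ (the crux, needing the Matching Lemma plus an augmenting-path/swap argument), which delivers \emph{(i)} and \emph{(ii)} at once, and then \emph{(iv)} by the one-line arithmetic above. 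The main obstacle is the structural claim that a maximum matching of $G[U]$ exposes only vertices of $C$; everything else is bookkeeping with the identity $\lvert U\rvert + \lvert C\rvert = 2\alpha$.
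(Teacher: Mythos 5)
You handle parts \emph{(iii)} and \emph{(iv)} correctly, and your upper bound $\mu(G[\bigcup\Gamma])\leq\frac12\left(\left\vert\bigcup\Gamma\right\vert-\left\vert\bigcap\Gamma\right\vert\right)$ is sound. The genuine gap sits exactly where you flag ``the main obstacle'': you never actually construct a matching saturating $\bigcup\Gamma-\bigcap\Gamma$. Your proposed fix --- that the exposed set $T$ of a maximum matching of $G[\bigcup\Gamma]$ satisfies $T\subseteq\bigcap\Gamma$, to be shown by ``a swap argument'' --- is not a proof: the independence of $T$ only yields $\left\vert T\right\vert\leq\alpha(G)$, i.e.\ $2\mu(G[\bigcup\Gamma])\geq\left\vert\bigcup\Gamma\right\vert-\alpha(G)$, which is half of what is needed, and no concrete augmenting-path contradiction is exhibited from the mere existence of some $S\in\Gamma$ avoiding $v$. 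In truth $T=\bigcap\Gamma$ is a \emph{consequence} of \emph{(i)}--\emph{(ii)}, not an independent route to them. Likewise ``the Matching Lemma applied to each $S\in\Gamma$ collectively covers $U-C$'' does not produce a single matching, since a union of matchings need not be a matching.

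The missing observation --- which is essentially the paper's whole proof of \emph{(i)} --- is that \emph{one} application of Lemma \ref{MatchingLemma} already suffices. Fix any $S\in\Gamma$; the lemma gives a matching $M$ from $S-\bigcap\Gamma$ into $\bigcup\Gamma-S$. The hypothesis $\left\vert\bigcup\Gamma\right\vert+\left\vert\bigcap\Gamma\right\vert=2\alpha(G)$ forces
\[
\left\vert S-\bigcap\Gamma\right\vert=\alpha(G)-\left\vert\bigcap\Gamma\right\vert=\left\vert\bigcup\Gamma\right\vert-\alpha(G)=\left\vert\bigcup\Gamma-S\right\vert,
\]
and since $\bigcup\Gamma-\bigcap\Gamma$ is the disjoint union of $S-\bigcap\Gamma$ and $\bigcup\Gamma-S$, the matching $M$ saturates both sides and is therefore a perfect matching of $G\left[\bigcup\Gamma-\bigcap\Gamma\right]$. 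You explicitly considered this matching and dismissed it (``but to saturate \emph{all} of $U-C$ I would instead\ldots''), missing that the counting identity makes it saturate everything. With \emph{(i)} repaired, your derivation of \emph{(ii)} and the one-line arithmetic for \emph{(iv)} go through; note also that for the upper bound in \emph{(ii)} the paper uses the slightly simpler fact that every vertex of $\bigcap\Gamma$ is isolated in $G[\bigcup\Gamma]$, though your $\mu\leq n-\alpha$ bound works equally well once \emph{(iii)} is in place.
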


\begin{proof}
\emph{(i) }Let $S\in\Gamma$. By Lemma \ref{MatchingLemma}, there is a
matching, say $M$, from $S-%
{\displaystyle\bigcap}
\Gamma$ into $%
{\displaystyle\bigcup}
\Gamma-S$. On the other hand,
\begin{gather*}
\left\vert S-\bigcap\Gamma\right\vert =\left\vert S\right\vert -\left\vert
\bigcap\Gamma\right\vert =\alpha(G)-\left\vert \bigcap\Gamma\right\vert \\
=\left\vert \bigcup\Gamma\right\vert -\alpha(G)=\left\vert \bigcup
\Gamma\right\vert -\left\vert S\right\vert =\left\vert \bigcup\Gamma
-S\right\vert .
\end{gather*}
Since
\[
\left(  S-%
{\displaystyle\bigcap}
\Gamma\right)  \cup\left(
{\displaystyle\bigcup}
\Gamma-S\right)  =%
{\displaystyle\bigcup}
\Gamma-%
{\displaystyle\bigcap}
\Gamma,
\]
we conclude that $M$\ is a perfect matching in $G\left[
{\displaystyle\bigcup}
\Gamma-%
{\displaystyle\bigcap}
\Gamma\right]  $.

\emph{(ii) }By Part \emph{(i)}, there is a perfect matching in $G\left[
{\displaystyle\bigcup}
\Gamma-%
{\displaystyle\bigcap}
\Gamma\right]  $. Hence,
\[
2\mu\left(  G\left[
{\displaystyle\bigcup}
\Gamma\right]  \right)  \geq2\mu\left(  G\left[  \left\vert
{\displaystyle\bigcup}
\Gamma\right\vert -\left\vert
{\displaystyle\bigcap}
\Gamma\right\vert \right]  \right)  =\left\vert
{\displaystyle\bigcup}
\Gamma\right\vert -\left\vert
{\displaystyle\bigcap}
\Gamma\right\vert .
\]
It remains to prove that
\[
2\mu\left(  G\left[
{\displaystyle\bigcup}
\Gamma\right]  \right)  \leq\left\vert
{\displaystyle\bigcup}
\Gamma\right\vert -\left\vert
{\displaystyle\bigcap}
\Gamma\right\vert .
\]
Let $M$ be a maximum matching in $G\left[
{\displaystyle\bigcup}
\Gamma\right]  $. Since all the members of $\Gamma$ are independent sets,
there exists no edge $xy$ such that $x\in%
{\displaystyle\bigcap}
\Gamma$ and $y\in%
{\displaystyle\bigcup}
\Gamma$. Therefore, $V\left(  M\right)  \cap%
{\displaystyle\bigcap}
\Gamma=\emptyset$. Finally, we get
\[
2\mu\left(  G\left[
{\displaystyle\bigcup}
\Gamma\right]  \right)  =\left\vert V\left(  M\right)  \right\vert
\leq\left\vert
{\displaystyle\bigcup}
\Gamma\right\vert -\left\vert
{\displaystyle\bigcap}
\Gamma\right\vert .
\]

\emph{(iii) }On the one hand, $\alpha(G\left[
{\displaystyle\bigcup}
\Gamma\right]  )\leq\alpha(G)$, because every independent set in $G\left[
{\displaystyle\bigcup}
\Gamma\right]  $ is independent in $G$ as well. On the other hand, if
$S\in\Gamma$, then $\left\vert S\right\vert =\alpha(G)$, and $S$ is
independent in $G\left[
{\displaystyle\bigcup}
\Gamma\right]  $. Thus $\alpha(G\left[
{\displaystyle\bigcup}
\Gamma\right]  )\geq\alpha(G)$, and consequently, we obtain $\alpha(G\left[
{\displaystyle\bigcup}
\Gamma\right]  )=\alpha(G)$.

\emph{(iv) }Using the hypothesis and Part \emph{(ii)}, we deduce that%
\[
2\alpha(G)-\left\vert
{\displaystyle\bigcap}
\Gamma\right\vert =\left\vert
{\displaystyle\bigcup}
\Gamma\right\vert =\left\vert
{\displaystyle\bigcap}
\Gamma\right\vert +2\mu\left(  G\left[
{\displaystyle\bigcup}
\Gamma\right]  \right)  ,
\]
which, by our assumption and Part \emph{(iii)}, implies
\[
2\left\vert
{\displaystyle\bigcup}
\Gamma\right\vert =2\mu\left(  G\left[
{\displaystyle\bigcup}
\Gamma\right]  \right)  +2\alpha(G)=2\mu\left(  G\left[
{\displaystyle\bigcup}
\Gamma\right]  \right)  +2\alpha(G\left[
{\displaystyle\bigcup}
\Gamma\right]  ),
\]
i.e., $G\left[
{\displaystyle\bigcup}
\Gamma\right]  $ is a K\"{o}nig-Egerv\'{a}ry graph.
\end{proof}

In particular, if we take $\Gamma=\Omega\left(  G\right)  $ in Theorem
\ref{perfect matching}, we get the following.

\begin{corollary}
\label{cor}If $\left\vert \mathrm{corona}(G)\right\vert +\left\vert
\mathrm{core}(G)\right\vert =2\alpha(G)$, then $G\left[  \mathrm{corona}%
(G)\right]  $ is a K\"{o}nig-Egerv\'{a}ry graph.
\end{corollary}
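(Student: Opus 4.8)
The plan is to derive this as the special case $\Gamma=\Omega(G)$ of Theorem \ref{perfect matching}. First I would recall that, directly from the definitions of corona and core, one has $\bigcup\Omega(G)=\mathrm{corona}(G)$ and $\bigcap\Omega(G)=\mathrm{core}(G)$, and of course $\Omega(G)\subseteq\Omega(G)$, so the structural hypothesis $\Gamma\subseteq\Omega(G)$ of Theorem \ref{perfect matching} is satisfied for this choice of $\Gamma$. Next I would note that the numerical hypothesis of the corollary, $\left\vert\mathrm{corona}(G)\right\vert+\left\vert\mathrm{core}(G)\right\vert=2\alpha(G)$, is, under this identification, exactly the hypothesis $\left\vert\bigcup\Gamma\right\vert+\left\vert\bigcap\Gamma\right\vert=2\alpha(G)$ appearing in Theorem \ref{perfect matching}. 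Hence all hypotheses of that theorem hold with $\Gamma=\Omega(G)$, and part \emph{(iv)} immediately gives that $G\left[\bigcup\Gamma\right]=G[\mathrm{corona}(G)]$ is a K\"onig-Egerv\'ary graph, which is precisely the assertion.

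Since the statement is a direct specialization, there is essentially no obstacle within the corollary itself; the substance has already been established in Theorem \ref{perfect matching}. For completeness I would indicate where the real work sits: in part \emph{(i)}, the Matching Lemma supplies a matching from $S-\bigcap\Gamma$ into $\bigcup\Gamma-S$ for a fixed $S\in\Gamma$, and the counting identity $\left\vert S-\bigcap\Gamma\right\vert=\left\vert S\right\vert-\left\vert\bigcap\Gamma\right\vert=\alpha(G)-\left\vert\bigcap\Gamma\right\vert=\left\vert\bigcup\Gamma\right\vert-\alpha(G)=\left\vert\bigcup\Gamma-S\right\vert$ (here the numerical hypothesis is used) promotes it to a perfect matching of $G\left[\bigcup\Gamma-\bigcap\Gamma\right]$; in part \emph{(ii)}, independence of the members of $\Gamma$ forces $V(M)\cap\bigcap\Gamma=\emptyset$ for any matching $M$ of $G\left[\bigcup\Gamma\right]$, so that $2\mu\!\left(G\left[\bigcup\Gamma\right]\right)=\left\vert\bigcup\Gamma\right\vert-\left\vert\bigcap\Gamma\right\vert$; and combining this with $\alpha\!\left(G\left[\bigcup\Gamma\right]\right)=\alpha(G)$ from part \emph{(iii)} yields $2\left\vert\bigcup\Gamma\right\vert=2\mu\!\left(G\left[\bigcup\Gamma\right]\right)+2\alpha\!\left(G\left[\bigcup\Gamma\right]\right)$, the K\"onig-Egerv\'ary identity for $G\left[\bigcup\Gamma\right]$. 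Thus the only thing the proof of the corollary needs to do is invoke Theorem \ref{perfect matching}\emph{(iv)} with $\Gamma=\Omega(G)$.
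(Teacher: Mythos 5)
Your proof is correct and is exactly the paper's argument: the corollary is stated as the specialization $\Gamma=\Omega(G)$ of Theorem \ref{perfect matching}\emph{(iv)}, using $\bigcup\Omega(G)=\mathrm{corona}(G)$ and $\bigcap\Omega(G)=\mathrm{core}(G)$. Your extra summary of where the real work lives inside Theorem \ref{perfect matching} is accurate but not needed for the corollary itself.
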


Notice that the equality $\left\vert \mathrm{corona}(G)\right\vert +\left\vert
\mathrm{core}(G)\right\vert =2\alpha(G)$ is not enough to infer that $G$ is a
K\"{o}nig-Egerv\'{a}ry graph, e.g., see the graph $G_{1}$ from Figure
\ref{fig233}, that has: $\alpha\left(  G_{1}\right)  =3$, \textrm{core}%
$(G_{1})=\{d\}$, \textrm{corona}$(G_{1})=\{a,b,d,f,g\}$,.

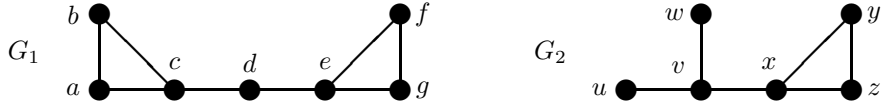
\begin{figure}[h]
\setlength{\unitlength}{1.0cm} \begin{picture}(5,1.5)\thicklines
\multiput(2,0)(1,0){5}{\circle*{0.29}}
\multiput(2,1)(4,0){2}{\circle*{0.29}}
\put(2,0){\line(1,0){4}}
\put(2,1){\line(1,-1){1}}
\put(2,0){\line(0,1){1}}
\put(5,0){\line(1,1){1}}
\put(6,0){\line(0,1){1}}
\put(1.65,0){\makebox(0,0){$a$}}
\put(1.65,1){\makebox(0,0){$b$}}
\put(3,0.35){\makebox(0,0){$c$}}
\put(4,0.35){\makebox(0,0){$d$}}
\put(5,0.35){\makebox(0,0){$e$}}
\put(6.3,1){\makebox(0,0){$f$}}
\put(6.3,0){\makebox(0,0){$g$}}
\put(1,0.5){\makebox(0,0){$G_{1}$}}
\multiput(9,0)(1,0){4}{\circle*{0.29}}
\multiput(10,1)(2,0){2}{\circle*{0.29}}
\put(9,0){\line(1,0){3}}
\put(10,0){\line(0,1){1}}
\put(11,0){\line(1,1){1}}
\put(12,0){\line(0,1){1}}
\put(8.65,0){\makebox(0,0){$u$}}
\put(9.65,1){\makebox(0,0){$w$}}
\put(9.7,0.3){\makebox(0,0){$v$}}
\put(10.9,0.35){\makebox(0,0){$x$}}
\put(12.3,1){\makebox(0,0){$y$}}
\put(12.3,0){\makebox(0,0){$z$}}
\put(8,0.5){\makebox(0,0){$G_{2}$}}
\end{picture}\caption{$\alpha\left(  G_{1}\right)  =3$, \textrm{core}%
$(G_{1})=\{d\}$, \textrm{corona}$(G_{1})=\{a,b,d,f,g\}$, while \textrm{core}%
$(G_{2})=\{u,w\}$ and $V\left(  G_{2}\right)  -$ \textrm{corona}%
$(G_{2})=\{v\}$.}%
\label{fig233}%
\end{figure}

\begin{corollary}
If $G$ is a K\"{o}nig-Egerv\'{a}ry graph, and $\emptyset\neq$ $\Gamma
\subseteq\Omega(G)$, then

\emph{(i)} $\alpha(G\left[
{\displaystyle\bigcup}
\Gamma\right]  )=\alpha(G)$ and $\Gamma\subseteq\Omega(G\left[
{\displaystyle\bigcup}
\Gamma\right]  )$;

\emph{(ii)} $\mathrm{corona}(G\left[
{\displaystyle\bigcup}
\Gamma\right]  )=%
{\displaystyle\bigcup}
\Gamma$ and $\mathrm{core}(G\left[
{\displaystyle\bigcup}
\Gamma\right]  )=%
{\displaystyle\bigcap}
\Gamma$.
\end{corollary}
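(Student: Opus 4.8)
The plan is to chain together the three earlier König-Egerv\'{a}ry results with Theorem \ref{perfect matching}. First, since $G$ is König-Egerv\'{a}ry and $\emptyset\neq\Gamma\subseteq\Omega(G)$, Corollary \ref{2alpha in KE} gives $\left\vert\bigcup\Gamma\right\vert+\left\vert\bigcap\Gamma\right\vert=2\alpha(G)$, which is exactly the hypothesis that unlocks Theorem \ref{perfect matching} for this particular $\Gamma$. Part \emph{(iii)} of that theorem then yields $\alpha(G[\bigcup\Gamma])=\alpha(G)$ directly. For the remaining half of Part \emph{(i)}, I would observe that every $S\in\Gamma$ satisfies $S\subseteq\bigcup\Gamma$, remains independent in the induced subgraph, and has $\left\vert S\right\vert=\alpha(G)=\alpha(G[\bigcup\Gamma])$; hence $S\in\Omega(G[\bigcup\Gamma])$, i.e. $\Gamma\subseteq\Omega(G[\bigcup\Gamma])$.

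For Part \emph{(ii)}, three of the four inclusions are immediate bookkeeping once $\Gamma\subseteq\Omega(G[\bigcup\Gamma])$ is in hand. Each member of $\Omega(G[\bigcup\Gamma])$ is a subset of $V(G[\bigcup\Gamma])=\bigcup\Gamma$, so $\mathrm{corona}(G[\bigcup\Gamma])\subseteq\bigcup\Gamma$; conversely $\Gamma\subseteq\Omega(G[\bigcup\Gamma])$ forces $\bigcup\Gamma\subseteq\mathrm{corona}(G[\bigcup\Gamma])$, giving $\mathrm{corona}(G[\bigcup\Gamma])=\bigcup\Gamma$. The same containment $\Gamma\subseteq\Omega(G[\bigcup\Gamma])$ makes the intersection over $\Omega(G[\bigcup\Gamma])$ at most the intersection over the smaller family $\Gamma$, so $\mathrm{core}(G[\bigcup\Gamma])\subseteq\bigcap\Gamma$ at once.

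The one inclusion that is not free, and the only real obstacle, is $\bigcap\Gamma\subseteq\mathrm{core}(G[\bigcup\Gamma])$; I would settle it by a cardinality count exploiting the fact that the induced subgraph inherits the König-Egerv\'{a}ry property. By Theorem \ref{perfect matching}\emph{(iv)}, $G[\bigcup\Gamma]$ is König-Egerv\'{a}ry, so Corollary \ref{the fact about KE} applies to it and gives $\left\vert\mathrm{corona}(G[\bigcup\Gamma])\right\vert+\left\vert\mathrm{core}(G[\bigcup\Gamma])\right\vert=2\alpha(G[\bigcup\Gamma])=2\alpha(G)$. Substituting the already-proved $\mathrm{corona}(G[\bigcup\Gamma])=\bigcup\Gamma$ and the identity $\left\vert\bigcup\Gamma\right\vert+\left\vert\bigcap\Gamma\right\vert=2\alpha(G)$ from Corollary \ref{2alpha in KE} forces $\left\vert\mathrm{core}(G[\bigcup\Gamma])\right\vert=\left\vert\bigcap\Gamma\right\vert$; combined with the inclusion $\mathrm{core}(G[\bigcup\Gamma])\subseteq\bigcap\Gamma$ obtained above, this upgrades to equality and completes the proof. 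Everything beyond that single counting step is routine manipulation of Theorem \ref{perfect matching} and the definitions of \textrm{core} and \textrm{corona}.
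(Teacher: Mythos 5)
Your proposal is correct and follows essentially the same route as the paper: both invoke Corollary \ref{2alpha in KE} to unlock Theorem \ref{perfect matching}\emph{(iii),(iv)}, derive the easy inclusions for $\mathrm{corona}$ and $\mathrm{core}$ from $\Gamma\subseteq\Omega(G[\bigcup\Gamma])$, and close the remaining gap with the identical cardinality count via Corollary \ref{the fact about KE}. No substantive differences to report.
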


\begin{proof}
\emph{(i)} It is true by Corollary \ref{2alpha in KE} and Theorem
\ref{perfect matching}\emph{(iii)}.

\emph{(ii) }Since $V\left(  G\left[
{\displaystyle\bigcup}
\Gamma\right]  \right)  =%
{\displaystyle\bigcup}
\Gamma$, we have $\mathrm{corona}(G\left[
{\displaystyle\bigcup}
\Gamma\right]  )\subseteq%
{\displaystyle\bigcup}
\Gamma$. But by Part \emph{(i)}, $%
{\displaystyle\bigcup}
\Gamma\subseteq\mathrm{corona}(G\left[
{\displaystyle\bigcup}
\Gamma\right]  )$.

By Part \emph{(i)}, $\mathrm{core}(G\left[
{\displaystyle\bigcup}
\Gamma\right]  )=%
{\displaystyle\bigcap}
\Omega(G\left[
{\displaystyle\bigcup}
\Gamma\right]  )\subseteq%
{\displaystyle\bigcap}
\Gamma$, so it is enough to prove that $\left\vert \mathrm{core}(G\left[
{\displaystyle\bigcup}
\Gamma\right]  )\right\vert =\left\vert
{\displaystyle\bigcap}
\Gamma\right\vert $. According to Theorem \ref{perfect matching}\emph{(iv)},
$G\left[
{\displaystyle\bigcup}
\Gamma\right]  $ is K\"{o}nig-Egerv\'{a}ry. Therefore, using Corollary
\ref{the fact about KE}, we get%
\begin{align*}
2\alpha\left(  G\left[
{\displaystyle\bigcup}
\Gamma\right]  \right)   &  =\left\vert \mathrm{corona}(G\left[
{\displaystyle\bigcup}
\Gamma\right]  )\right\vert +\left\vert \mathrm{core}(G\left[
{\displaystyle\bigcup}
\Gamma\right]  )\right\vert \\
&  =\left\vert
{\displaystyle\bigcup}
\Gamma\right\vert +\left\vert \mathrm{core}(G\left[
{\displaystyle\bigcup}
\Gamma\right]  )\right\vert .
\end{align*}
Since, by Corollary \ref{2alpha in KE}, we have that $\left\vert
{\displaystyle\bigcup}
\Gamma\right\vert +\left\vert
{\displaystyle\bigcap}
\Gamma\right\vert =2\alpha(G)$, we finally obtain the equality $\left\vert
\mathrm{core}(G\left[
{\displaystyle\bigcup}
\Gamma\right]  )\right\vert =\left\vert
{\displaystyle\bigcap}
\Gamma\right\vert $, as claimed.
\end{proof}

The following proposition shows that a characterization of
K\"{o}nig-Egerv\'{a}ry graphs cannot relate only to the maximum independent sets.

\begin{proposition}
For every K\"{o}nig-Egerv\'{a}ry graph $G\notin\left\{  K_{1},K_{2}\right\}
$, there is a non-K\"{o}nig-Egerv\'{a}ry graph $G^{\prime}$, such that $G$ is
an induced subgraph of $G^{\prime}$ and $\Omega(G^{\prime})=\Omega(G)$.
\end{proposition}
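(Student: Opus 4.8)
The plan is to construct $G'$ explicitly by attaching a suitable gadget to $G$ that destroys the König-Egerváry equality $\alpha+\mu = |V|$ while leaving the family of maximum independent sets untouched. The key observation is that adding a new vertex $v^*$ adjacent to \emph{every} vertex of $V(G)$ can only hurt an independent set containing any vertex of $G$; since $G \notin \{K_1, K_2\}$ we have $\alpha(G) \geq 2 > 1$, so no maximum independent set of the new graph can use $v^*$, and hence $\Omega$ is unchanged \emph{as long as} $v^*$ does not itself become a maximum independent set of size $\geq \alpha(G)$ in some larger structure. To get a non-König-Egerváry graph, I would attach, disjointly from $G$, a small non-König-Egerváry "core" — the triangle $K_3$ is the natural choice — and then join $v^*$ to all of $V(G)$ (and possibly nothing else), arranging degrees so that $\alpha(G') = \alpha(G)$ but $\mu(G')$ jumps by enough to spoil the identity.

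Concretely, first I would let $G'$ be the graph on vertex set $V(G) \cup \{p,q,r\}$ where $\{p,q,r\}$ induces a triangle and, in addition, every vertex of $V(G)$ is joined to $p$ (so $p$ plays the role of the universal-to-$G$ vertex, while $q,r$ are the rest of the triangle). Then I would verify the three required properties in order. (1) $\Omega(G') = \Omega(G)$: any independent set $I$ of $G'$ meeting $V(G)$ cannot contain $p$, and can contain at most one of $q,r$; if it contains one of $q,r$ then swapping it out is harmless to size only when $|I \cap V(G)| \geq 1$ already, so the maximum independent sets are exactly $S \cup \{q\}$? — here I must be careful. A cleaner gadget: take $p$ adjacent to all of $V(G)$ and let $p,q,r$ form a triangle, but \emph{also} check that $\{q\}$ extends: the set $S \cup \{?\}$. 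The safest route is to make the attached gadget contribute exactly zero to $\alpha$: join $p$ to all of $V(G)$, make $\{p,q,r\}$ a triangle, and additionally make $q$ and $r$ each adjacent to all of $V(G)$ as well. Then $N[q] = N[r] = N[p] \supseteq V(G)$, so any independent set is either contained in $V(G)$ or contained in $\{p\}$ (or $\{q\}$, or $\{r\}$), giving $\alpha(G') = \max(\alpha(G), 1) = \alpha(G)$ and $\Omega(G') = \Omega(G)$ since $\alpha(G) \geq 2$. (2) Compute $\mu(G')$: the triangle on $\{p,q,r\}$ contributes a matching edge, say $pq$; a maximum matching $M$ of $G$ extends by this edge and possibly by matching $r$ to a $G$-vertex left unsaturated by $M$, giving $\mu(G') \geq \mu(G)+1$, and in fact $\mu(G') = \mu(G)+1$ when $G$ has no near-perfect matching issues — but for the conclusion I only need $\mu(G') \geq \mu(G) + 1$. (3) Then $\alpha(G') + \mu(G') \geq \alpha(G) + \mu(G) + 1 = |V(G)| + 1 < |V(G)| + 3 = |V(G')|$, so $G'$ is not König-Egerváry. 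Finally $G = G'[V(G)]$ is visibly an induced subgraph.

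The main obstacle I anticipate is pinning down property (1) — that adjoining the gadget genuinely leaves $\Omega$ unchanged — because one must rule out \emph{every} new maximum independent set, not merely those already present; this is exactly where the hypothesis $G \notin \{K_1,K_2\}$, i.e. $\alpha(G) \geq 2$, is used, and the "$q,r$ also universal to $V(G)$" trick is what makes the argument airtight by forcing any independent set to live entirely inside $V(G)$ or be a singleton from $\{p,q,r\}$. A secondary point to check carefully is that $\mu(G') \geq \mu(G)+1$ strictly: since $G$ has at least one vertex and the triangle adds a matchable edge disjoint from $V(G)$'s matching, this holds as long as we can always pick an edge of the triangle not blocked by the matching of $G$, which is automatic because $\{p,q,r\}$ is disjoint from $V(G)$. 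Once these two points are secured, the arithmetic in (3) is immediate, so I would present the construction, then dispatch (1), (2), (3) as short separate claims.
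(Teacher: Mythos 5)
There is a genuine gap, and it is located exactly where you placed your confidence: step (3). To show that $G'$ is \emph{not} a K\"{o}nig-Egerv\'{a}ry graph you must prove $\alpha(G')+\mu(G')<\left\vert V(G')\right\vert$, which requires an \emph{upper} bound on $\mu(G')$. Instead you establish the lower bound $\alpha(G')+\mu(G')\geq \left\vert V(G)\right\vert+1$ and observe that $\left\vert V(G)\right\vert+1<\left\vert V(G)\right\vert+3$; a lower bound lying below $\left\vert V(G')\right\vert$ proves nothing about the quantity itself. Worse, the needed upper bound $\mu(G')\leq\mu(G)+2$ is simply false in general: since $p,q,r$ are each joined to all of $V(G)$, a maximum matching of $G'$ can match all three of them into $G$. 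Concretely, take $G=\overline{K_{3}}$ (three isolated vertices), which is K\"{o}nig-Egerv\'{a}ry with $\alpha=3$, $\mu=0$. Your $G'$ then has a perfect matching $g_{1}p,\,g_{2}q,\,g_{3}r$, so $\mu(G')=3$, $\alpha(G')=3$, and $\alpha(G')+\mu(G')=6=\left\vert V(G')\right\vert$: your $G'$ \emph{is} K\"{o}nig-Egerv\'{a}ry. The same failure occurs for any K\"{o}nig-Egerv\'{a}ry graph in which a maximum matching exposes at least three vertices (e.g.\ $K_{1,4}$). Your part (1) is fine --- making $p,q,r$ all universal to $V(G)$ does force $\Omega(G')=\Omega(G)$ once $\alpha(G)\geq2$ --- but the gadget is too small to control $\mu$.

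The paper's construction repairs precisely this defect: it joins $G$ completely to a clique $K_{n+1}$ with $n=\left\vert V(G)\right\vert$. Then $\left\vert V(G')\right\vert=2n+1$ forces $\mu(G')\leq n$, and $\mu(G')=n$ is achieved by matching each vertex of $G$ to a distinct clique vertex, so $\alpha(G')+\mu(G')=\alpha(G)+n\leq 2n<2n+1$ no matter how large $\alpha(G)$ is. Your argument for $\Omega(G')=\Omega(G)$ carries over verbatim to that gadget. If you want to salvage your three-vertex idea, you would have to prove $\mu(G')\leq\mu(G)+2$, which, as the examples above show, cannot be done; enlarging the attached clique past $\left\vert V(G)\right\vert$ is the cleanest fix.
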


\begin{proof}
Let $n=\left\vert V(G)\right\vert $, and $K_{n+1}$ be a complete graph, such
that $V\left(  G\right)  \cap V\left(  K_{n+1}\right)  =\emptyset$. We define
$G^{\prime}$ as the graph having:%
\begin{align*}
V\left(  G^{\prime}\right)   &  =V\left(  G\right)  \cup V\left(
K_{n+1}\right) \\
E\left(  G^{\prime}\right)   &  =E\left(  G\right)  \cup E\left(
K_{n+1}\right)  \cup\left\{  xy:x\in V\left(  G\right)  ,y\in V\left(
K_{n+1}\right)  \right\}  .
\end{align*}
Clearly, $\alpha(G^{\prime})=\alpha(G)$, $\left\vert V(G^{\prime})\right\vert
=2n+1$, and $\mu(G^{\prime})=n$. Hence we get
\[
\alpha(G^{\prime})+\mu(G^{\prime})=\alpha(G)+n<2n+1=\left\vert V(G^{\prime
})\right\vert .
\]
Therefore, $G^{\prime}$ is not a K\"{o}nig-Egerv\'{a}ry graph, while $G$ is an
induced subgraph of $G^{\prime}$ that clearly satisfies $\Omega(G^{\prime
})=\Omega(G)$.
\end{proof}

Let us mention that the difference $\left\vert V\left(  G\right)
-\mathrm{corona}(G)\right\vert -\left\vert \mathrm{core}(G)\right\vert $ may
reach any positive integer. For instance, $G=K_{n}-e,n\geq4$.

\begin{theorem}
\label{characterization of ke graph1}For a graph $G$ the following assertions
are equivalent:

\emph{(i)} $G$ is a K\"{o}nig-Egerv\'{a}ry graph;

\emph{(ii)} for every $S_{1},S_{2}\in\Omega(G)$ there is a matching from
$V\left(  G\right)  -S_{1}\cup S_{2}$ into $S_{1}\cap S_{2}$;

\emph{(iii)} there exist $S_{1},S_{2}\in\Omega(G)$, such that there is a
matching from $V\left(  G\right)  -S_{1}\cup S_{2}$ into $S_{1}\cap S_{2}$.
\end{theorem}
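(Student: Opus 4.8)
The plan is to prove the cycle of implications $(i)\Rightarrow(ii)\Rightarrow(iii)\Rightarrow(i)$. The step $(ii)\Rightarrow(iii)$ is immediate: since $\Omega(G)\neq\emptyset$, pick any $S\in\Omega(G)$ and apply $(ii)$ to the pair $S_{1}=S_{2}=S$, obtaining a matching from $V(G)-S$ into $S$, which witnesses $(iii)$. So the real content is in the two remaining implications.

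For $(i)\Rightarrow(ii)$ I would first record a short auxiliary fact about a K\"{o}nig-Egerv\'{a}ry graph $G$: if $M$ is a maximum matching and $S\in\Omega(G)$, then $M$ saturates every vertex of $V(G)-S$ and sends it into $S$. This is a counting argument: $\left\vert V(G)-S\right\vert=\left\vert V(G)\right\vert-\alpha(G)=\mu(G)=\left\vert M\right\vert$, and since $S$ is independent every edge of $M$ has at least one endpoint outside $S$; comparing these endpoint counts with $\left\vert V(G)-S\right\vert$ forces each edge of $M$ to have exactly one endpoint in $V(G)-S$ (the other in $S$) and every vertex of $V(G)-S$ to be matched. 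Granting this, take arbitrary $S_{1},S_{2}\in\Omega(G)$, fix a maximum matching $M$, and note that each $t\in V(G)-(S_{1}\cup S_{2})=(V(G)-S_{1})\cap(V(G)-S_{2})$ is matched by $M$ with $M(t)\in S_{1}$ and $M(t)\in S_{2}$, hence $M(t)\in S_{1}\cap S_{2}$. Because these images lie in $S_{1}\cap S_{2}$ while the $t$'s do not, the edges $\{t,M(t)\}$ with $t\in V(G)-(S_{1}\cup S_{2})$ are pairwise non-incident and constitute a matching from $V(G)-(S_{1}\cup S_{2})$ into $S_{1}\cap S_{2}$, as required.

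For $(iii)\Rightarrow(i)$, let $S_{1},S_{2}\in\Omega(G)$ carry a matching $M$ from $V(G)-(S_{1}\cup S_{2})$ into $S_{1}\cap S_{2}$. Applying Lemma~\ref{MatchingLemma} with $A=S_{1}$ and $\Lambda=\{S_{2}\}$ produces a matching $M'$ from $S_{1}-S_{2}$ into $S_{2}-S_{1}$, which (since $\left\vert S_{1}\right\vert=\left\vert S_{2}\right\vert$) is a perfect matching on $(S_{1}\cup S_{2})-(S_{1}\cap S_{2})$. The vertex set of $M$ lies in $(V(G)-(S_{1}\cup S_{2}))\cup(S_{1}\cap S_{2})$, which is disjoint from $V(M')=(S_{1}\cup S_{2})-(S_{1}\cap S_{2})$, so $M\cup M'$ is a matching whose vertex set contains $V(G)-(S_{1}\cap S_{2})$. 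Counting, $\left\vert M\cup M'\right\vert=\left\vert M\right\vert+\left\vert M'\right\vert=(\left\vert V(G)\right\vert-\left\vert S_{1}\cup S_{2}\right\vert)+(\alpha(G)-\left\vert S_{1}\cap S_{2}\right\vert)=\left\vert V(G)\right\vert-\alpha(G)$, using $\left\vert S_{1}\cup S_{2}\right\vert+\left\vert S_{1}\cap S_{2}\right\vert=2\alpha(G)$. Hence $\mu(G)\geq\left\vert V(G)\right\vert-\alpha(G)$, which together with the general bound $\alpha(G)+\mu(G)\leq\left\vert V(G)\right\vert$ gives $\alpha(G)+\mu(G)=\left\vert V(G)\right\vert$, i.e.\ $G$ is K\"{o}nig-Egerv\'{a}ry.

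The calculations are all routine; the only point needing a little care is in $(iii)\Rightarrow(i)$ — verifying that $M$ and $M'$ are vertex-disjoint (so that $M\cup M'$ is genuinely a matching) and that the cardinalities telescope to exactly $\left\vert V(G)\right\vert-\alpha(G)$. One could instead deduce $(i)\Rightarrow(ii)$ via Theorem~\ref{perfect matching} applied to $\Gamma=\{S_{1},S_{2}\}$, but the direct maximum-matching argument above seems shortest.
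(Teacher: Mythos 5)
Your proof is correct, and it is worth comparing with the paper's. For \emph{(i)}$\Rightarrow$\emph{(ii)} you take a genuinely more elementary route: the auxiliary fact that in a K\"{o}nig-Egerv\'{a}ry graph every maximum matching $M$ saturates all of $V(G)-S$ and matches it into $S$, for each $S\in\Omega(G)$, follows from the counting argument you sketch (the $|M|+k\leq|V(G)-S|=|M|$ comparison forces $k=0$ and full saturation), and then intersecting over $S_{1}$ and $S_{2}$ hands you the required matching at once, for all pairs simultaneously and with a single fixed $M$. The paper instead routes through Theorem \ref{perfect matching} applied to $H=G[S_{1}\cup S_{2}]$ and a longer inequality chain showing that $M$ restricted to $V(M)\cap S_{1}\cap S_{2}$ is a bijection onto $V(G)-S_{1}\cup S_{2}$; your version is shorter and self-contained. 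For \emph{(iii)}$\Rightarrow$\emph{(i)} the two arguments are essentially the same construction in different clothing: both glue the hypothesized matching to a matching covering $(S_{1}\cup S_{2})-(S_{1}\cap S_{2})$ and count. You obtain that second matching directly from Lemma \ref{MatchingLemma} (noting $|S_{1}-S_{2}|=|S_{2}-S_{1}|$ makes it perfect on the symmetric difference) and conclude $\mu(G)\geq|V(G)|-\alpha(G)$ outright, whereas the paper passes through the intermediate facts that $H$ is K\"{o}nig-Egerv\'{a}ry with $\alpha(H)=\alpha(G)$ via Theorem \ref{perfect matching}\emph{(iii),(iv)}. Your disjointness check for $M\cup M'$ and the telescoping $(|V(G)|-|S_{1}\cup S_{2}|)+(\alpha(G)-|S_{1}\cap S_{2}|)=|V(G)|-\alpha(G)$ are exactly the points that need care, and they are handled correctly; the step \emph{(ii)}$\Rightarrow$\emph{(iii)} is trivial in both treatments.
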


\begin{proof}
\emph{(i) }$\Rightarrow$\emph{\ (ii) }Suppose that $G$ is a
K\"{o}nig-Egerv\'{a}ry graph, and let $H=G\left[  S_{1}\cup S_{2}\right]  $
for some arbitrary $S_{1},S_{2}\in\Omega(G)$. Since $\left\vert S_{1}\cup
S_{2}\right\vert +\left\vert S_{1}\cap S_{2}\right\vert =2\alpha(G)$, Theorem
\ref{perfect matching}\emph{(ii),(iv)} ensures that $\alpha\left(  G\right)
=\alpha(H)$ and $H$ is also a K\"{o}nig-Egerv\'{a}ry graph. Thus
\[
\left\vert V\left(  G\right)  \right\vert -\mu\left(  G\right)  =\alpha\left(
G\right)  =\alpha(H)=\left\vert V\left(  H\right)  \right\vert -\mu(H),
\]
and, consequently,
\[
\left\vert V\left(  G\right)  \right\vert -\left\vert V\left(  H\right)
\right\vert =\mu\left(  G\right)  -\mu(H).
\]
Let $M$ be a maximum matching of $G$. Applying Theorem \ref{perfect matching}%
\emph{(ii)} with $\Gamma=\left\{  S_{1},S_{2}\right\}  $, we obtain%
\[
\left\vert S_{1}\cup S_{2}\right\vert -\left\vert S_{1}\cap S_{2}\right\vert
=2\mu(H).
\]
Hence,
\begin{gather*}
\left\vert V\left(  M\right)  -S_{1}\cap S_{2}\right\vert \leq\left\vert
V\left(  G\right)  -S_{1}\cap S_{2}\right\vert =\left\vert V\left(  G\right)
-S_{1}\cup S_{2}\right\vert +\left\vert S_{1}\cup S_{2}-S_{1}\cap
S_{2}\right\vert \\
=(\left\vert V\left(  G\right)  \right\vert -\left\vert V\left(  H\right)
\right\vert )+2\mu(H)=(\mu\left(  G\right)  -\mu(H))+2\mu(H)=\mu\left(
G\right)  +\mu(H).
\end{gather*}
Therefore, $\left\vert V\left(  M\right)  -S_{1}\cap S_{2}\right\vert \leq
\mu\left(  G\right)  +\mu(H)$. But
\[
2\mu\left(  G\right)  =\left\vert V\left(  M\right)  \right\vert =\left\vert
V\left(  M\right)  \cap S_{1}\cap S_{2}\right\vert +\left\vert V\left(
M\right)  -S_{1}\cap S_{2}\right\vert .
\]
Thus
\begin{gather*}
\left\vert V\left(  M\right)  \cap S_{1}\cap S_{2}\right\vert \geq2\mu\left(
G\right)  -(\mu\left(  G\right)  +\mu(H))=\\
=\mu\left(  G\right)  -\mu(H)=\left\vert V\left(  G\right)  \right\vert
-\left\vert V\left(  H\right)  \right\vert =|V\left(  G\right)  -S_{1}\cup
S_{2}|.
\end{gather*}
Clearly, $M(y)\in V\left(  G\right)  -S_{1}\cup S_{2}$ for every $y\in
V\left(  M\right)  \cap S_{1}\cap S_{2}$. In other words, $M$ induces an
injective mapping, say $M_{1}$, from $V\left(  M\right)  \cap S_{1}\cap S_{2}$
into $V\left(  G\right)  -S_{1}\cup S_{2}$. Since $\left\vert V\left(
M\right)  \cap S_{1}\cap S_{2}\right\vert \geq|V\left(  G\right)  -S_{1}\cup
S_{2}|$, we conclude that $M_{1}$ is a bijection. Therefore, $M_{1}^{-1}$ is a
matching from $V\left(  G\right)  -S_{1}\cup S_{2}$ into $V\left(  M\right)
\cap S_{1}\cap S_{2}$. Hence, $M_{1}^{-1}$ is a matching from $V\left(
G\right)  -S_{1}\cup S_{2}$ into $S_{1}\cap S_{2}$.

\emph{(ii) }$\Rightarrow$\emph{\ (iii) }It is clear.

\emph{(iii) }$\Rightarrow$\emph{\ (i) }Suppose\emph{ }that\emph{ }there exist
two sets $S_{1},S_{2}\in\Omega(G)$, such that there is a matching, say $M_{1}%
$, from $V\left(  G\right)  -S_{1}\cup S_{2}$ into $S_{1}\cap S_{2}$. Let
$H=G\left[  S_{1}\cup S_{2}\right]  $. In general, $\mu(G)\leq\mu(G-v)+1$ for
every vertex $v\in V\left(  G\right)  $. Consequently,$\ $we have $\left\vert
V\left(  G\right)  \right\vert -\left\vert V\left(  H\right)  \right\vert
\geq\mu\left(  G\right)  -\mu(H)$, because $H$ is a subgraph of $G$

Let $M_{2}$ be a maximum matching in $G\left[  S_{1}\cup S_{2}-S_{1}\cap
S_{2}\right]  $. Since there are no edges connecting $S_{1}\cup S_{2}%
-S_{1}\cap S_{2}$ and $S_{1}\cap S_{2}$, we infer that $M_{1}\cup M_{2}$ is a
matching in $G$. Consequently, we obtain
\begin{gather*}
\mu\left(  G\right)  \geq\left\vert M_{1}\right\vert +\left\vert
M_{2}\right\vert =\left\vert V\left(  G\right)  -S_{1}\cup S_{2}\right\vert
+\mu(G\left[  S_{1}\cup S_{2}-S_{1}\cap S_{2}\right]  )=\\
=\left\vert V\left(  G\right)  \right\vert -\left\vert S_{1}\cup
S_{2}\right\vert +\mu(G\left[  S_{1}\cup S_{2}\right]  )=\left\vert V\left(
G\right)  \right\vert -\left\vert V\left(  H\right)  \right\vert +\mu(H).
\end{gather*}
Hence, $\left\vert V\left(  G\right)  \right\vert -\left\vert V\left(
H\right)  \right\vert =\mu\left(  G\right)  -\mu(H)$.

By Theorem \ref{perfect matching}\emph{(iii),(iv)}, we infer that $H$ is a
K\"{o}nig-Egerv\'{a}ry graph, and $\alpha\left(  H\right)  =\alpha\left(
G\right)  $. Therefore,%
\[
\left\vert V\left(  G\right)  \right\vert -\mu\left(  G\right)  =\left\vert
V\left(  H\right)  \right\vert -\mu(H)=\alpha\left(  H\right)  =\alpha\left(
G\right)  .
\]
Thus $\left\vert V\left(  G\right)  \right\vert =\alpha\left(  G\right)
+\mu(G)$, which means that $G$ is a K\"{o}nig-Egerv\'{a}ry graph as well.
\end{proof}

The conditions \emph{(ii)} or \emph{(iii)} from Theorem
\ref{characterization of ke graph1} are not equivalent when we take more than
two maximum independent sets. 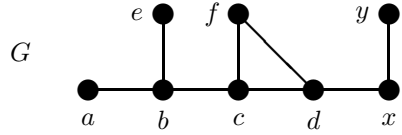
\begin{figure}[h]
\setlength{\unitlength}{1cm}\begin{picture}(5,1.9)\thicklines
\multiput(5,0.5)(1,0){4}{\circle*{0.29}}
\multiput(6,1.5)(1,0){2}{\circle*{0.29}}
\put(5,0.5){\line(1,0){3}}
\put(6,0.5){\line(0,1){1}}
\put(7,1.5){\line(1,-1){1}}
\put(7,0.5){\line(0,1){1}}
\multiput(9,0.5)(0,1){2}{\circle*{0.29}}
\put(8,0.5){\line(1,0){1}}
\put(9,0.5){\line(0,1){1}}
\put(5,0.1){\makebox(0,0){$a$}}
\put(6,0.1){\makebox(0,0){$b$}}
\put(7,0.1){\makebox(0,0){$c$}}
\put(8,0.1){\makebox(0,0){$d$}}
\put(5.65,1.5){\makebox(0,0){$e$}}
\put(6.65,1.5){\makebox(0,0){$f$}}
\put(9,0.1){\makebox(0,0){$x$}}
\put(8.65,1.5){\makebox(0,0){$y$}}
\put(4.1,1){\makebox(0,0){$G$}}
\end{picture}\caption{\textrm{core}$(G)=\{a,e\}$ is a critical set.}%
\label{fig51111}%
\end{figure}

For instance, consider the graph $G$ in Figure \ref{fig51111} and
\[
S_{1}=\left\{  a,e,f,x\right\}  ,S_{2}=\left\{  a,e,c,x\right\}
,S_{3}=\left\{  a,e,d,y\right\}  ,S_{4}=\left\{  a,e,f,y\right\}
,S_{5}=\left\{  a,e,c,y\right\}  .
\]
There is a matching from $V\left(  G\right)  -S_{1}\cup S_{2}\cup
S_{3}=\left\{  b\right\}  $ into $S_{1}\cap S_{2}\cap S_{3}=\left\{
a,e\right\}  $, but there is no matching from $V\left(  G\right)  -S_{1}\cup
S_{4}\cup S_{5}=\left\{  b,d\right\}  $ into $S_{1}\cap S_{4}\cap
S_{5}=\left\{  a,e\right\}  $. Notice that $G$ is not a K\"{o}nig-Egerv\'{a}ry graph.

\newpage

\begin{theorem}
\label{characterization of ke graph}$G$ is a K\"{o}nig-Egerv\'{a}ry graph if
and only if the following conditions hold:

\emph{(i)} $\left\vert \mathrm{corona}(G)\right\vert +\left\vert
\mathrm{core}(G)\right\vert =2\alpha\left(  G\right)  $,

\emph{(ii)} there is a matching from $V\left(  G\right)  -\mathrm{corona}(G)$
into $\mathrm{core}(G)$.
\end{theorem}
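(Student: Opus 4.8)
The plan is to establish the two directions separately, in each case reducing everything to what is already known about the induced subgraph $H=G\left[\mathrm{corona}(G)\right]$. For the forward direction, assume $G$ is K\"{o}nig-Egerv\'{a}ry. By Corollary \ref{the fact about KE} we immediately get condition \emph{(i)}, namely $\left\vert \mathrm{corona}(G)\right\vert +\left\vert \mathrm{core}(G)\right\vert =2\alpha(G)$. To obtain the matching in \emph{(ii)}, I would follow essentially the argument used for \emph{(i)}$\Rightarrow$\emph{(iii)} in Theorem \ref{characterization of ke graph1}, but with $\Gamma=\Omega(G)$ rather than a two-element family: by Corollary \ref{cor}, $H$ is K\"{o}nig-Egerv\'{a}ry, and by Theorem \ref{perfect matching}\emph{(ii),(iii)} we have $\mu(H)=\tfrac12\left(\left\vert \mathrm{corona}(G)\right\vert-\left\vert \mathrm{core}(G)\right\vert\right)$ and $\alpha(H)=\alpha(G)$. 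Then a maximum matching $M$ of $G$, restricted appropriately, must saturate all of $V(G)-\mathrm{corona}(G)$ into $\mathrm{core}(G)$; the counting goes exactly as in Theorem \ref{characterization of ke graph1}, using $\left\vert V(G)\right\vert-\mu(G)=\alpha(G)=\alpha(H)=\left\vert V(H)\right\vert-\mu(H)$ to force the inverse of $M$ restricted to $V(M)\cap\mathrm{core}(G)$ to be a bijection onto $V(G)-\mathrm{corona}(G)$.

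For the converse, assume \emph{(i)} and \emph{(ii)}. The key observation is that condition \emph{(i)} lets us invoke Corollary \ref{cor}, so $H=G\left[\mathrm{corona}(G)\right]$ is already a K\"{o}nig-Egerv\'{a}ry graph, and Theorem \ref{perfect matching}\emph{(iii)} gives $\alpha(H)=\alpha(G)$. It therefore suffices to show $\left\vert V(G)\right\vert-\left\vert V(H)\right\vert=\mu(G)-\mu(H)$, since then $\left\vert V(G)\right\vert-\mu(G)=\left\vert V(H)\right\vert-\mu(H)=\alpha(H)=\alpha(G)$, i.e.\ $G$ is K\"{o}nig-Egerv\'{a}ry. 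One inequality, $\left\vert V(G)\right\vert-\left\vert V(H)\right\vert\geq\mu(G)-\mu(H)$, is automatic because deleting one vertex drops $\mu$ by at most one and $H$ is obtained from $G$ by deleting $\left\vert V(G)\right\vert-\left\vert V(H)\right\vert$ vertices. For the reverse inequality I would use the matching $M_{1}$ from \emph{(ii)} together with a maximum matching $M_{2}$ of $G\left[\mathrm{corona}(G)-\mathrm{core}(G)\right]$: since there are no edges between $\mathrm{core}(G)$ and $\mathrm{corona}(G)-\mathrm{core}(G)$ (as $\mathrm{core}(G)$ lies in every maximum independent set, hence in the independent set $\mathrm{core}(G)\subseteq S$), the union $M_{1}\cup M_{2}$ is a matching in $G$, and counting its edges yields $\mu(G)\geq\left\vert V(G)\right\vert-\left\vert \mathrm{corona}(G)\right\vert+\mu(H)$. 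This is exactly the mirror of the \emph{(iii)}$\Rightarrow$\emph{(i)} step in Theorem \ref{characterization of ke graph1}, now with $\Gamma=\Omega(G)$.

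The one subtlety worth flagging is that the matching $M_{1}$ in condition \emph{(ii)} is asserted to land in $\mathrm{core}(G)$, whereas for $M_{1}\cup M_{2}$ to be a matching I need its endpoints in $\mathrm{corona}(G)$ to be disjoint from the endpoints of $M_{2}$ in $\mathrm{corona}(G)-\mathrm{core}(G)$ — this is immediate since $M_{1}$ saturates only vertices of $\mathrm{core}(G)$ (inside $\mathrm{corona}(G)$) while $M_{2}$ saturates only vertices of $\mathrm{corona}(G)-\mathrm{core}(G)$. The main obstacle, such as it is, is bookkeeping: getting the chain of equalities $\left\vert M_{1}\right\vert+\left\vert M_{2}\right\vert=\left\vert V(G)-\mathrm{corona}(G)\right\vert+\mu\left(G\left[\mathrm{corona}(G)-\mathrm{core}(G)\right]\right)=\left\vert V(G)\right\vert-\left\vert \mathrm{corona}(G)\right\vert+\mu(H)$ to line up, where the last step uses Theorem \ref{perfect matching}\emph{(ii)} applied to $\Gamma=\Omega(G)$, which equates $\mu\left(G\left[\mathrm{corona}(G)-\mathrm{core}(G)\right]\right)$ with $\mu(H)$ via the perfect matching on $\mathrm{corona}(G)-\mathrm{core}(G)$. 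Once these two inequalities are combined, the conclusion is forced, and no further ideas are needed.
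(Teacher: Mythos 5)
Your proposal is correct and follows essentially the same route as the paper: condition (i) via Corollary \ref{the fact about KE}, the matching in (ii) by running the counting argument of Theorem \ref{characterization of ke graph1}(i)$\Rightarrow$(ii) with $\Gamma=\Omega(G)$, and the converse by combining $M_{1}\cup M_{2}$ with the general inequality $\left\vert V(G)\right\vert-\left\vert V(H)\right\vert\geq\mu(G)-\mu(H)$ and Theorem \ref{perfect matching}(iii),(iv). The subtleties you flag (disjointness of the endpoints of $M_{1}$ and $M_{2}$, and $\mu\left(G\left[\mathrm{corona}(G)-\mathrm{core}(G)\right]\right)=\mu(H)$ because $\mathrm{core}(G)$ is isolated inside $G\left[\mathrm{corona}(G)\right]$) are exactly the points the paper relies on.
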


\begin{proof}
Let $H=G\left[  \mathrm{corona}(G)\right]  $. Theorem \ref{perfect matching}%
\emph{(ii)} implies that $\alpha\left(  G\right)  =\alpha(H)$.

Suppose that $G$ is a K\"{o}nig-Egerv\'{a}ry graph. Condition \emph{(i)} holds
by Corollary \ref{the fact about KE}.

By Condition \emph{(i)} and Theorem \ref{perfect matching}\emph{(iv)}, $H$ is
also a K\"{o}nig-Egerv\'{a}ry graph. Thus
\[
\left\vert V\left(  G\right)  \right\vert -\mu\left(  G\right)  =\alpha\left(
G\right)  =\alpha(H)=\left\vert V\left(  H\right)  \right\vert -\mu(H),
\]
and, consequently,
\[
\left\vert V\left(  G\right)  \right\vert -\left\vert V\left(  H\right)
\right\vert =\mu\left(  G\right)  -\mu(H).
\]
Let $M$ be a maximum matching of $G$. Now, applying Theorem
\ref{perfect matching}\emph{(ii)} with $\Gamma=\Omega(G)$, we obtain%
\[
\left\vert \mathrm{corona}(G)\right\vert -\left\vert \mathrm{core}%
(G)\right\vert =2\mu(H).
\]
Hence,
\begin{gather*}
\left\vert V\left(  M\right)  -\mathrm{core}(G)\right\vert \leq\left\vert
V\left(  G\right)  -\mathrm{core}(G)\right\vert =\left\vert V\left(  G\right)
-\mathrm{corona}(G)\right\vert +\left\vert \mathrm{corona}(G)-\mathrm{core}%
(G)\right\vert \\
=(\left\vert V\left(  G\right)  \right\vert -\left\vert V\left(  H\right)
\right\vert )+2\mu(H)=(\mu\left(  G\right)  -\mu(H))+2\mu(H)=\mu\left(
G\right)  +\mu(H).
\end{gather*}
Therefore, $\left\vert V\left(  M\right)  -\mathrm{core}(G)\right\vert \leq
\mu\left(  G\right)  +\mu(H)$. But
\[
2\mu\left(  G\right)  =\left\vert V\left(  M\right)  \right\vert =\left\vert
V\left(  M\right)  \cap\mathrm{core}(G)\right\vert +\left\vert V\left(
M\right)  -\mathrm{core}(G)\right\vert .
\]
Thus
\begin{gather*}
\left\vert V\left(  M\right)  \cap\mathrm{core}(G)\right\vert \geq2\mu\left(
G\right)  -(\mu\left(  G\right)  +\mu(H))=\\
=\mu\left(  G\right)  -\mu(H)=\left\vert V\left(  G\right)  \right\vert
-\left\vert V\left(  H\right)  \right\vert =|V\left(  G\right)
-\mathrm{corona}(G)|.
\end{gather*}
Clearly, $M(y)\in V\left(  G\right)  -\mathrm{corona}(G)$ for every $y\in
V\left(  M\right)  \cap\mathrm{core}(G)$. In other words, $M$ induces an
injective mapping, say $M_{1}$, from $V\left(  M\right)  \cap\mathrm{core}(G)$
into $V\left(  G\right)  -\mathrm{corona}(G)$. Since $\left\vert V\left(
M\right)  \cap\mathrm{core}(G)\right\vert \geq|V\left(  G\right)
-\mathrm{corona}(G)|$, we conclude that $M_{1}$ is a bijection. Therefore,
$M_{1}^{-1}$ is a matching from $V\left(  G\right)  -\mathrm{corona}(G)$ into
$V\left(  M\right)  \cap\mathrm{core}(G)$. This completes the proof of
Condition \emph{(ii)}.

Now, suppose that Conditions \emph{(i)} and \emph{(ii)} hold.

In general, $\mu(G)\leq\mu(G-v)+1$ for every vertex $v\in V\left(  G\right)
$. Consequently,$\ $if $H$ is a subgraph of $G$, then $\left\vert V\left(
G\right)  \right\vert -\left\vert V\left(  H\right)  \right\vert \geq
\mu\left(  G\right)  -\mu(H)$.

Condition \emph{(ii)} implies that there exists a matching in $G$ comprised of
a matching $M_{1}$ from $V\left(  G\right)  -\mathrm{corona}(G)$ into
$\mathrm{core}(G)$, and a maximum matching $M_{2}$ of $G\left[
\mathrm{corona}(G)-\mathrm{core}(G)\right]  $. Since there are no edges
connecting $\mathrm{corona}(G)-\mathrm{core}(G)$ and $\mathrm{core}(G)$, we
obtain
\begin{gather*}
\mu\left(  G\right)  \geq\left\vert M_{1}\right\vert +\left\vert
M_{2}\right\vert =\left\vert V\left(  G\right)  -\mathrm{corona}(G)\right\vert
+\mu(G\left[  \mathrm{corona}(G)-\mathrm{core}(G)\right]  )=\\
=\left\vert V\left(  G\right)  \right\vert -\left\vert \mathrm{corona}%
(G)\right\vert +\mu(G\left[  \mathrm{corona}(G)\right]  )=\left\vert V\left(
G\right)  \right\vert -\left\vert V\left(  H\right)  \right\vert +\mu(H).
\end{gather*}
Hence, $\left\vert V\left(  G\right)  \right\vert -\left\vert V\left(
H\right)  \right\vert =\mu\left(  G\right)  -\mu(H)$.

Condition \emph{(i)} together with Theorem \ref{perfect matching}%
\emph{(iii),(iv)} ensure that $H$ is a K\"{o}nig-Egerv\'{a}ry graph, and
$\alpha\left(  H\right)  =\alpha\left(  G\right)  $. Therefore,%
\[
\left\vert V\left(  G\right)  \right\vert -\mu\left(  G\right)  =\left\vert
V\left(  H\right)  \right\vert -\mu(H)=\alpha\left(  H\right)  =\alpha\left(
G\right)  .
\]
Thus $\left\vert V\left(  G\right)  \right\vert =\alpha\left(  G\right)
+\mu(G)$, which means that $G$ is a K\"{o}nig-Egerv\'{a}ry graph as well.
\end{proof}

\begin{remark}
The graphs $G_{1}$ and $G_{2}$ in Figure \ref{fig233} show that none of
Conditions \emph{(i)} or \emph{(ii)} from Theorem
\ref{characterization of ke graph} is enough to infer that $G$ is a
K\"{o}nig-Egerv\'{a}ry graph.
\end{remark}

\begin{corollary}
If $G$ is a K\"{o}nig-Egerv\'{a}ry graph then $\left\vert V\left(  G\right)
-\mathrm{corona}(G)\right\vert \leq\left\vert \mathrm{core}(G)\right\vert $.
\end{corollary}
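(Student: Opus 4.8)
The plan is to derive this corollary directly from Theorem~\ref{characterization of ke graph}. First I would invoke condition~\emph{(ii)} of that theorem: since $G$ is K\"{o}nig-Egerv\'{a}ry, there is a matching $M_{1}$ from $V\left(G\right)-\mathrm{corona}(G)$ into $\mathrm{core}(G)$. Such a matching is, in particular, an injective map from $V\left(G\right)-\mathrm{corona}(G)$ into $\mathrm{core}(G)$, because distinct saturated vertices are paired with distinct partners. Hence $\left\vert V\left(G\right)-\mathrm{corona}(G)\right\vert \leq\left\vert \mathrm{core}(G)\right\vert$, which is exactly the asserted inequality.

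The only point that deserves a sentence is why a matching from a set $X$ into a set $Y$ forces $|X|\le|Y|$: each vertex of $X$ is matched (saturated by $M_1$) with some vertex of $Y$, and no vertex of $Y$ is used twice since $M_1$ consists of pairwise non-incident edges. So the map $x\mapsto M_1(x)$ is well defined on all of $X$ and injective, giving the cardinality bound. There is essentially no obstacle here; the work was all done in establishing Theorem~\ref{characterization of ke graph}, and this corollary just reads off one of its conclusions. One could alternatively observe that combining condition~\emph{(i)}, namely $\left\vert \mathrm{corona}(G)\right\vert +\left\vert \mathrm{core}(G)\right\vert =2\alpha(G)$, with the K\"{o}nig-Egerv\'{a}ry identity $\alpha(G)=\left\vert V(G)\right\vert-\mu(G)$ and Theorem~\ref{perfect matching}\emph{(ii)} (which gives $\left\vert \mathrm{corona}(G)\right\vert-\left\vert \mathrm{core}(G)\right\vert=2\mu(G[\mathrm{corona}(G)])$) yields $\left\vert V(G)\right\vert-\left\vert\mathrm{corona}(G)\right\vert = \mu(G)-\mu(G[\mathrm{corona}(G)])$, and then the bound $\mu(G)-\mu(G[\mathrm{corona}(G)])\le\left\vert\mathrm{core}(G)\right\vert$ follows because $V(G)-\mathrm{corona}(G)$ meets $\mathrm{core}(G)$ under $M_1$; but the matching argument is cleaner, so I would present that one.

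In short, the proof is a one-liner: apply Theorem~\ref{characterization of ke graph}\emph{(ii)} and count. I would write it as: ``By Theorem~\ref{characterization of ke graph}, there is a matching from $V\left(G\right)-\mathrm{corona}(G)$ into $\mathrm{core}(G)$, hence $\left\vert V\left(G\right)-\mathrm{corona}(G)\right\vert \leq\left\vert \mathrm{core}(G)\right\vert$.'' No environment or display is even needed.
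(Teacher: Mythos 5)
Your proof is correct and matches the paper's intent exactly: the corollary is stated there without proof precisely because it follows immediately from Theorem~\ref{characterization of ke graph}\emph{(ii)}, a matching from $V(G)-\mathrm{corona}(G)$ into $\mathrm{core}(G)$ being an injection and hence giving the cardinality bound.
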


\section{Conclusions}

In this paper we focus on interconnections between unions and intersections of
maximum independents sets of a graph. Let us say that a family $\emptyset
\neq\Gamma\subseteq\Omega\left(  G\right)  $ is a K\"{o}nig-Egerv\'{a}ry
collection if $\left\vert \bigcup\Gamma\right\vert +\left\vert \bigcap
\Gamma\right\vert =2\alpha(G)$. The set of all K\"{o}nig-Egerv\'{a}ry
collections is denoted as $\Im\left(  G\right)  =\Im\left(  \Omega\left(
G\right)  \right)  $. One of the main findings of this paper can be
interpreted as the claim that $\Im\left(  G\right)  $ is an abstract
simplicial complex for every graph. In other words, every subcollection of a
K\"{o}nig-Egerv\'{a}ry collection is K\"{o}nig-Egerv\'{a}ry as well. We
incline to think that $\Im\left(  G\right)  $ is a new important invariant of
a graph, which may be compared with the nerve of the family of all maximum
independent sets.

Being more specific, we propose the following.

\begin{problem}
Characterize graphs enjoying $\mathrm{core}(G)=\mathrm{nucleus}(G)$.
\end{problem}

\begin{problem}
Characterize graphs satisfying
\[
\left\vert \mathrm{corona}(G)\right\vert +\left\vert \mathrm{core}%
(G)\right\vert =2\left(  \left\vert V\left(  G\right)  \right\vert
-\mu(G)\right)  .
\]

\end{problem}

\begin{conjecture}
If $\left\vert \mathrm{diadem}(G)\right\vert +\left\vert \mathrm{nucleus}%
(G)\right\vert =2\alpha(G)$, then $G$ is a K\"{o}nig-Egerv\'{a}ry graph.
\end{conjecture}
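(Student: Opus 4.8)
The plan is to derive the conjecture from the monotonicity results of Section 2 together with Larson's analysis of critical independent sets. Write $c(G)$ for the largest cardinality of a critical independent set of $G$ (its \emph{critical independence number}, in Larson's terminology). By Theorem \ref{th3} every critical independent set extends to a maximum independent set, so $c(G)\leq\alpha(G)$; and by Theorem \ref{th5} one has $c(G)=\alpha(G)$ exactly when $G$ is a K\"{o}nig-Egerv\'{a}ry graph, because a critical independent set of size $\alpha(G)$ is precisely a critical maximum independent set. Hence it suffices to establish the single inequality $\left\vert \mathrm{diadem}(G)\right\vert +\left\vert \mathrm{nucleus}(G)\right\vert \leq2c(G)$: together with the hypothesis this forces $2\alpha(G)\leq2c(G)\leq2\alpha(G)$, so $c(G)=\alpha(G)$ and $G$ is K\"{o}nig-Egerv\'{a}ry.

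To obtain that inequality I would invoke Larson's independence decomposition \cite{Larson2011}: there is a set $X\subseteq V(G)$ such that $G[X]$ is a K\"{o}nig-Egerv\'{a}ry graph, $\alpha(G[X])=c(G)$, and $X$ contains every critical independent set of $G$. Then each $A\in\mathrm{MaxCritIndep}(G)$ lies in $X$, is independent in $G[X]$, and has cardinality $c(G)=\alpha(G[X])$; so $A\in\Omega(G[X])$, and therefore $\mathrm{MaxCritIndep}(G)$ is a non-empty subcollection of $\Omega(G[X])$. Applying Corollary \ref{corollary 1} to the graph $G[X]$, with $\Gamma^{\prime}=\mathrm{MaxCritIndep}(G)\subseteq\Omega(G[X])=\Gamma$, and then Corollary \ref{the fact about KE} to the K\"{o}nig-Egerv\'{a}ry graph $G[X]$, gives
\begin{align*}
\left\vert \mathrm{diadem}(G)\right\vert +\left\vert \mathrm{nucleus}(G)\right\vert
&=\left\vert \bigcup\mathrm{MaxCritIndep}(G)\right\vert +\left\vert \bigcap\mathrm{MaxCritIndep}(G)\right\vert \\
&\leq\left\vert \mathrm{corona}(G[X])\right\vert +\left\vert \mathrm{core}(G[X])\right\vert
=2\alpha(G[X])=2c(G),
\end{align*}
which is exactly what is needed.

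The step I expect to be the genuine obstacle is this appeal to Larson's independence decomposition, which is not among the tools developed here --- only his characterization of K\"{o}nig-Egerv\'{a}ry graphs has been used. A self-contained argument would have to re-prove the relevant portion of that decomposition, namely that the subgraph induced by the union of all critical independent sets is K\"{o}nig-Egerv\'{a}ry with independence number $c(G)$; this is presumably the reason the statement is recorded here as a conjecture. One might instead try to first show that the hypothesis $\left\vert \mathrm{diadem}(G)\right\vert +\left\vert \mathrm{nucleus}(G)\right\vert =2\alpha(G)$ forces $\mathrm{core}(G)$ to be critical and then feed this into Theorem \ref{th7}, but part (iii) of that theorem only yields $2\alpha(G)\leq\left\vert \mathrm{corona}(G)\right\vert +\left\vert \mathrm{core}(G)\right\vert$, which is already known from Theorem \ref{th9} and does not by itself deliver the K\"{o}nig-Egerv\'{a}ry property; so the decomposition-based route seems to be the one that actually closes the argument.
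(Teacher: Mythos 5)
The paper offers no proof of this statement: it is recorded there as an open conjecture, so there is no in-paper argument to compare yours against. Judged on its own terms, your reduction is clean and, as far as I can verify, sound. That $c(G)\leq\alpha(G)$ follows from Theorem \ref{th3}; that $c(G)=\alpha(G)$ is equivalent to the K\"{o}nig-Egerv\'{a}ry property follows from Theorem \ref{th5}; and once you know that $\mathrm{MaxCritIndep}(G)$ is a non-empty subcollection of $\Omega(G[X])$ for a \emph{single} K\"{o}nig-Egerv\'{a}ry induced subgraph $G[X]$ with $\alpha(G[X])=c(G)$, Corollary \ref{corollary 1} combined with Corollary \ref{the fact about KE} (or, more directly, Corollary \ref{2alpha in KE}, which even upgrades your inequality to the equality $\left\vert \mathrm{diadem}(G)\right\vert +\left\vert \mathrm{nucleus}(G)\right\vert =2c(G)$) finishes the argument.

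The entire weight therefore rests on the appeal to Larson's decomposition, and you should be precise about which clause of it you are invoking. It is not enough that \emph{some} maximum critical independent set $J$ yields a K\"{o}nig-Egerv\'{a}ry subgraph $G[N[J]]$ with independence number $c(G)$; you need every $A\in\mathrm{MaxCritIndep}(G)$ to sit inside the \emph{same} $X$, which is exactly the uniqueness clause of \cite{Larson2011} asserting that $N[J]$ is independent of the choice of the maximum critical independent set $J$. (Your phrase ``$X$ contains every critical independent set of $G$'' is both stronger than what you use and stronger than what that clause literally says; only the maximum critical independent sets are needed.) Provided the decomposition theorem holds in that ``for every maximum critical independent set'' form --- and the published statement does read that way --- your argument settles the conjecture affirmatively, which is more than the authors claim: the paper cites \cite{Larson2011} only for the characterization of K\"{o}nig-Egerv\'{a}ry graphs and never states the decomposition, so a careful reader would insist that you quote it verbatim and confirm the quantifier before accepting the proof. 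Your closing self-diagnosis is accurate: this external ingredient is the one genuine gap relative to the tools developed here, and the alternative route through Theorem \ref{th7} that you consider and discard is indeed a dead end for precisely the reason you give.
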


\end{document}